\def\N{\mathbb{N}}
\def\C{\mathbb{C}}
\def\det{\mathrm{det}}
\def\per{\mathrm{per}}
\def\sgn{\mathrm{sgn}}
\def\GL{\mathrm{GL}}
\def\SL{\mathrm{SL}}
\def\VNP{\mathrm{VNP}}
\def\VPs{\mathrm{VP}_{\mathrm{ws}}}
\def\dc{\mathrm{dc}}
\def\Sym{\mathsf{Sym}}
\newcommand{\Det}{\Omega} 
\newcommand{\ol}[1]{\overline{#1}}
\def\la{\lambda}
\def\ot{\otimes} 
\def\sP{\#\mathrm{P}}
\def\id{\mathrm{id}}
\def\diag{\mathrm{diag}}
\def\ocp{Z}
\newcommand{\kron}{k}
\newcommand{\gctm}{K}
\newcommand{\out}{d}
\newcommand{\inn}{n}
\newcommand{\imm}{m}
\newcommand{\IC}{\ensuremath{\mathbb{C}}}
\newcommand{\IN}{\ensuremath{\mathbb{N}}}
\newcommand{\IZ}{\ensuremath{\mathbb{Z}}}
\newcommand{\IQ}{\ensuremath{\mathbb{Q}}}
\newcommand{\aS}{\ensuremath{\mathfrak{S}}}
\newcommand{\sV}{\ensuremath{\mathscr{V}}}
\newcommand{\tensor}{\smash{\textstyle\bigotimes}}
\newcommand{\HWV}{\mathsf{HWV}}
\newcommand{\val}{\mathrm{val}}
\newcommand{\PS}{\mathrm{PS}}
\numberwithin{equation}{section}
\newtheorem{theorem}{Theorem}[section]
\newtheorem{corollary}[theorem]{Corollary}
\newtheorem{lemma}[theorem]{Lemma}
\newtheorem{proposition}[theorem]{Proposition}
\newtheorem{conjecture}[theorem]{Conjecture}
\newtheorem{claim}[theorem]{Claim}
\newtheorem{definition}[theorem]{Definition}
\title[No occurrence obstructions in geometric complexity theory]{No
  occurrence obstructions in geometric\\ complexity theory$^*$}
\thanks{$^*$A preliminary version of this paper was presented at the
  57th Annual Symposium on Foundations of Computer Science (FOCS 2016), see \cite{BIP:16}}
\author[Peter B\"urgisser, Christian Ikenmeyer, Greta Panova]{Peter B\"urgisser$^1$, Christian Ikenmeyer$^2$, Greta Panova$^3$}
\thanks{$^1$Technische Universit\"at Berlin,  pbuerg@math.tu-berlin.de. Partially supported by DFG grant BU 1371/3-2.}
\thanks{$^2$Max Planck Institute for Informatics, Saarland Informatics Campus, cikenmey@mpi-inf.mpg.de} 
\thanks{$^3$University of Pennsylvania and Institute for Advanced Study, panova$@$math.upenn.edu. Partially supported by NSF grant DMS-1500834}
\date{\today}
\keywords{Permanent versus determinant, geometric complexity theory, orbit closures, representations, 
plethysms, Young tableaux, tensors, highest weight vectors}
\subjclass[2010]{68Q17, 05E10, 14L24}
\begin{document}
\raggedbottom

\begin{abstract}
The permanent versus determinant conjecture is a major problem in 
complexity theory that is equivalent to the separation of the
complexity classes $\VPs$ and $\VNP$.  
Mulmuley and Sohoni~\cite{gct1} suggested to 
study a strengthened version of this conjecture over the 
complex numbers that amounts to separating the orbit closures of 
the determinant and padded permanent polynomials. 
In that paper it was also proposed to separate these orbit closures 
by exhibiting {\em occurrence obstructions}, which are 
irreducible representations of $\GL_{n^2}(\C)$, which 
occur in one coordinate ring of the orbit closure, but not 
in the other. 
We prove that this approach is impossible. 
However, we do not rule out the general approach to the 
permanent versus determinant problem via 
{\em multiplicity obstructions} as proposed in~\cite{gct1}. 
\end{abstract}

\maketitle

\section{Introduction} 

A central problem in algebraic complexity theory is to prove that 
there is no efficient algorithm to evaluate the {\em permanent} 
$$
 \per_n := \sum_{\pi\in \aS_n} X_{1\pi(1)} \cdots X_{n\pi(n)}.
$$
The natural model of computation to study this question 
is the one of straight-line programs (or arithmetic circuits),
which perform arithmetic operations $+,-,*$ 
in the polynomial ring, 
starting with the variables $X_{ij}$ and complex constants. 
Efficient means that the number of arithmetic operations 
is bounded by a polynomial in $n$. 
The permanent arises in combinatorics and physics as a 
generating function. Its relevance for complexity theory 
derives from Valiant's discovery~\cite{Val:79a,Val:79}
that the evaluation of the permanent is a complete problem 
for the complexity class $\VNP$ 
(and also for the class $\sP$ in the model of Turing machines); 
see \cite{buer:00-3,mapo:08} for more information.   

The {\em determinant}  
\begin{equation*}
 \det_n 
  := \sum_{\pi\in \aS_n} \sgn(\pi)\, X_{1\pi(1)} \cdots X_{n\pi(n)} 
\end{equation*}
is known to have an efficient algorithm.  
Its evaluation is complete for the complexity class $\VPs$; 
see~\cite{Val:79a,toda:92}. From the definition it is clear that 
$\VPs\subseteq \VNP$ and proving the separation 
$\VPs\ne\VNP$ is the flagship problem in algebraic complexity theory. 
It can be seen as an ``easier'' version of the famous 
$\mathsf{P}\ne\mathsf{NP}$ problem; see~\cite{buerg:00}.

The conjecture $\VPs\ne\VNP$ can be restated without any reference to complexity
classes by directly comparing permanents and determinants. 
The {\em determinantal complexity} $\dc(f)$ of a polynomial 
$f\in\C[X_1,\ldots,X_N]$ is defined as the smallest
nonnegative integer $n\in\N$ such
that $f$ can be written as a determinant of an $n$ by $n$ matrix whose 
entries are affine linear forms in the variables~$X_i$. 
Valiant~\cite{Val:79a} proved that the determinant is computationally universal 
in the sense that $\dc(f) \le n$ 
if $f$ can be written as an arithmetic expression involving less than $n$ operations $+,-,*$.
Moreover, 
Valiant~\cite{Val:79a,vali:82} and Toda~\cite{toda:92} 
proved that $\VPs\ne\VNP$ is equivalent to the  
following conjecture. 

\begin{conjecture}[Valiant 1979]\label{conj:dc}
The determinantal complexity $\dc(\per_n)$ grows
superpolynomially in~$n$.
\end{conjecture}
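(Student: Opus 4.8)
The plan is to show that $\dc(\per_n)$ grows superpolynomially, which by the Valiant--Toda equivalence cited above is exactly the separation $\VPs\ne\VNP$. Following the route of Mulmuley and Sohoni, I would pass to $\C$ and reformulate the question geometrically. Fix a linear form $\ell$, view $\per_m$ as a form in $n^2$ variables by adjoining dummy variables, and inside the space of degree-$n$ forms in $n^2$ variables set $\ocp_n := \ol{\GL_{n^2}(\C)\cdot\det_n}$ and $P_{m,n} := \ol{\GL_{n^2}(\C)\cdot(\ell^{\,n-m}\per_m)}$. If $\dc(\per_m)\le n$ then $\ell^{\,n-m}\per_m\in\GL_{n^2}(\C)\cdot\det_n$, hence $P_{m,n}\subseteq\ocp_n$; so it suffices --- and is in fact the potentially strictly harder, strengthened task alluded to in the abstract --- to prove that $P_{m,n}\not\subseteq\ocp_n$ whenever $n$ is only polynomially large in~$m$. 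To certify non-containment I would use representation theory: $\C[\ocp_n]$ and $\C[P_{m,n}]$ are $\GL_{n^2}(\C)$-modules, a containment would induce an equivariant surjection $\C[\ocp_n]\twoheadrightarrow\C[P_{m,n}]$, and therefore $\mult_\la\C[P_{m,n}]\le\mult_\la\C[\ocp_n]$ for every partition~$\la$; any violation of one of these inequalities rules out the containment and hence bounds $\dc(\per_m)$ from below.

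Next I would make the two coordinate rings explicit and reduce the multiplicities to combinatorics. Exploiting the stabilizers of $\det_n$ and of $\ell^{\,n-m}\per_m$ together with the algebraic Peter--Weyl theorem, $\mult_\la$ on each side becomes the dimension of a space of highest weight vectors, and these in turn are governed by plethysm and Kronecker coefficients: concretely the padded-permanent side is controlled by plethysm coefficients of the form $\plethc_\la(d,m)=\mult_\la\Sym^d(\Sym^m\C^{m^2})$ subject to a constraint forcing one row of~$\la$ to be long, while the determinant side one would bound from above in the same language. The proposal is then, for each large~$m$ and each polynomially bounded $n=n(m)$, to exhibit an explicit family of partitions $\la=\la(m,n)$ of $nd$ for a suitable degree~$d$ with $\mult_\la\C[P_{m,n}]>\mult_\la\C[\ocp_n]$, and to verify the resulting numerical inequality by combinatorial or representation-theoretic means --- Young tableaux, Jacobi--Trudi expansions, semigroup and stretching properties of plethysm, and the like.

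The hard part is precisely this last step, and here one should be candid about why the conjecture is still open. There is no known positive combinatorial formula for plethysm coefficients, deciding their positivity is not known to lie in~$\mathsf{P}$, and the same is true for the multiplicities appearing in $\C[\ocp_n]$; so even guessing a viable family $\la(m,n)$ --- let alone proving the inequality uniformly over all $n=\mathrm{poly}(m)$ --- is beyond current technology. Worse, the most natural version of the strategy, \emph{occurrence obstructions}, in which one demands the strongest possible inequality $\mult_\la\C[P_{m,n}]>0=\mult_\la\C[\ocp_n]$, is provably insufficient: the present paper shows that in the regime of interest every $\la$ occurring in $\C[P_{m,n}]$ also occurs in $\C[\ocp_n]$, so the gap can never arise from a bare vanishing. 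One is therefore forced into the strictly finer setting of genuine \emph{multiplicity obstructions}, where both multiplicities are positive and one must quantitatively control their difference --- which no known method achieves at superpolynomial scale. An orthogonal fallback is the elementary route through ranks of Hessians and higher-order flattenings, which yields the Mignon--Ressayre bound $\dc(\per_n)\ge n^2/2$; but amplifying such rank estimates past polynomial size appears to require a genuinely new idea, and closing that gap is the real obstacle.
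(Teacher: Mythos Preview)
This statement is a \emph{conjecture}, not a theorem: the paper does not prove it, and indeed no proof is known. There is therefore no ``paper's own proof'' to compare your proposal against.

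Your write-up is not a proof either, and you are forthright about that. What you have given is an accurate survey of the GCT strategy --- pass to orbit closures, compare multiplicities in coordinate rings, look for obstructions --- together with a clear statement of why the most accessible version of that strategy (occurrence obstructions) cannot succeed, which is precisely the main theorem of this paper. Your account of the remaining options (multiplicity obstructions, Hessian/flattening lower bounds) and of the reasons they currently fall short is also correct. A minor quibble: the implication $\dc(\per_m)\le n \Rightarrow \ell^{n-m}\per_m\in\Det_n$ only gives the padded permanent as a point of the \emph{closure} $\Det_n$, not of the orbit $\GL_{n^2}\cdot\det_n$ itself, so your phrasing ``$\ell^{\,n-m}\per_m\in\GL_{n^2}(\C)\cdot\det_n$'' should be weakened accordingly; this does not affect the rest of your outline.

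In short: there is no gap to name because you do not claim to have a proof, and your discussion of the obstacles matches the paper's own framing.
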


It is known~\cite{grenet:11} that $\dc(\per_n) \le 2^n -1$.  
Finding lower bounds on $\dc(\per_n)$ is an active area of research
\cite{mignon-ressayre:04,cai:2010,lamare:13,hu-ik:14,al-bo-ve:15,yabe:15},
but the best known lower bounds are only $\Omega(n^2)$.

\subsection{An attempt via algebraic geometry and representation theory}

Towards answering Conjecture~\ref{conj:dc}, 
Mulmuley and Sohoni~\cite{gct1,gct2,mulmuley:11} proposed an approach based on algebraic geometry and
representation theory, for which they coined the name geometric complexity theory. 

We denote by $\Sym^n V^*$ the space of homogeneous polynomial functions of degree~$n$ on 
a finite dimensional complex vector space $V$.
The group $G:=\GL(V)$ acts on $\Sym^n V^*$ in the canonical way: 
$(g \cdot f)(v) := f(g^{-1}v)$ for $g\in G$, $f\in \Sym^n V^*$, $v\in V$.  
We denote by $G\cdot f :=\{ gf \mid g\in G\}$ the {\em orbit} of~$f$.
We assume now $V:=\C^{n\times n}$, view the determinant $\det_n$ as an element of $\Sym^n V^*$,
and consider its {\em orbit closure}: 
\begin{equation}\label{def:Omegan}
\Det_n := \ol{\GL_{n^2}\cdot \det_n} \subseteq \Sym^n (\C^{n\times n})^* 
\end{equation}
with respect to the Euclidean topology. (By a general principle, 
this is the same as the closure with respect to the Zariski topology; 
see~\cite[\S2.C]{mumf:95}.) 
It is easy to see that $\Det_2 = \Sym^2(\C^{2\times 2})^*$. 
For $n=3$, the boundary of $\Det_n$ has been determined recently~\cite{lahu:16}, 
but for $n=4$ it is already unknown. 

For $n > m$ we consider the {\em padded permanent} defined as 
$X_{11}^{n-m} \per_m \in\Sym^n (\C^{m\times m})^*$, where $X_{11}$ denotes the linear form 
providing the $(1,1)$-entry of a matrix in $\C^{m\times m}$. 
Via the standard projection $\C^{n\times n} \to \C^{m\times m}$, we can view 
$X_{11}^{n-m} \per_m$ as an element of the bigger space $\in\Sym^n (\C^{n\times n})^*$.
(Sometimes the padding is achieved by using a linear form different from $X_{11}$, e.g., 
but this is irrelevant, see~\cite[Appendix]{ik-panova:15}.) 

The following conjecture was stated in \cite{gct1}.
We refer to \cite[Prop.~9.3.2]{BLMW:11} for an equivalent formulation 
in terms of complexity classes that goes back to \cite{buerg:04}. 
(In particular see \cite[Problem~4.3]{buerg:04}.)  

\begin{conjecture}[Mulmuley and Sohoni 2001]\label{conj:dc-bord}
For all $c\in\N_{\ge 1}$ 
we have $X_{11}^{m^c-m} \per_m \not\in \Det_{m^c}$ 
for infinitely many~$m$.
\end{conjecture}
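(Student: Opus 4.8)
This is the ``border'' strengthening of Conjecture~\ref{conj:dc}: it asserts that the \emph{border determinantal complexity} $\underline{\dc}(\per_m)$, the least $n$ with $X_{11}^{n-m}\per_m\in\Det_n$, grows superpolynomially, so a proof is at least as hard as $\VPs\ne\VNP$, and I will only sketch the structured route that the present paper leaves open, namely \emph{multiplicity obstructions}. Fix $c$, set $n=m^c$, and via the standard inclusion $\C^{m^2}\hookrightarrow\C^{n^2}$ regard $\mathcal{P}_{n,m}:=\ol{\GL_{n^2}\cdot(X_{11}^{n-m}\per_m)}$ as a subvariety of $\Sym^n\C^{n^2}$. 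Orbit closures are $\GL_{n^2}$-stable, so if the conjecture failed then for some $c$ and all large $m$ we would have $\mathcal{P}_{n,m}\subseteq\Det_n$, hence a surjection $\C[\Det_n]\twoheadrightarrow\C[\mathcal{P}_{n,m}]$ of graded $\GL_{n^2}$-algebras, and therefore
\[
 \mult_\lambda\C[\mathcal{P}_{n,m}]_d\;\le\;\mult_\lambda\C[\Det_n]_d
\]
for every $d$ and every partition $\lambda\vdash dn$ with at most $n^2$ rows. It thus suffices to exhibit, for infinitely many $m$ and every $c$, one such $\lambda$ \emph{violating} this inequality --- a multiplicity obstruction. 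The main result of this paper shows that this cannot be done by making the right-hand side vanish (no occurrence obstructions), so one must keep both sides positive and engineer a genuine quantitative gap.

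The first task is to make both multiplicities tractable. On the determinant side, Frobenius' theorem identifies $\stab(\det_n)$ with $(\SL_n\times\SL_n)\rtimes\Z/2$ extended by a one-dimensional torus, and the algebraic Peter--Weyl theorem applied to the quasi-affine orbit $\GL_{n^2}\cdot\det_n$ gives
\[
 \mult_\lambda\C[\Det_n]_d\;\le\;\dim\bigl(S_\lambda(\C^{n^2})^*\bigr)^{\stab(\det_n)}.
\]
Writing $\C^{n^2}=\C^n\ot\C^n$ and taking $\SL_n\times\SL_n$-invariants forces the only surviving $\GL_n\times\GL_n$-constituent to be $S_{(d^n)}(\C^n)\ot S_{(d^n)}(\C^n)$, so this bound equals the rectangular Kronecker coefficient $\kron\bigl(\lambda,(d^n),(d^n)\bigr)$, possibly cut down further by a transpose-symmetry condition. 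On the permanent side, the symmetries of $\per_m$ --- a torus, the two symmetric groups permuting rows and columns, and transpose --- together with the padding variable determine $\stab(X_{11}^{n-m}\per_m)$, and a \emph{lower} bound on $\mult_\lambda\C[\mathcal{P}_{n,m}]_d$ can be obtained by exhibiting explicit highest weight vectors of weight $\lambda$ in $\Sym^d\bigl(\Sym^n\C^{n^2}\bigr)^*$ that do not vanish at $X_{11}^{n-m}\per_m$; counting such $\HWV$'s typically reduces to a product of a plethysm coefficient attached to the symmetries of $\per_m$ and a combinatorial factor accounting for the distribution of the $n-m$ padding boxes.

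The decisive step is the asymptotic comparison: pick a family $\lambda=\lambda(m,c)$ adapted to the padded permanent --- plausibly partitions on few rows, or ones tuned to the padding exponent $n-m$ --- for which the $\HWV$/plethysm construction gives a large lower bound on $\mult_\lambda\C[\mathcal{P}_{n,m}]_d$ while the rectangular Kronecker coefficient $\kron\bigl(\lambda,(d^n),(d^n)\bigr)$ is provably strictly smaller, and this for all $m$ in an infinite set and every fixed $c$. \textbf{I expect this last step to be the main obstacle and to be as hard as Conjecture~\ref{conj:dc} itself.} It amounts to separating two families of plethysm and Kronecker coefficients whose growth is not understood, and the present paper removes the one easy instance of such a separation --- the vanishing one --- so any argument must exploit honest growth \emph{rates} of multiplicities rather than their mere support; no method is currently known that simultaneously lower-bounds the permanent side and upper-bounds the determinant side at the required superpolynomial scale. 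The alternative of bounding $\underline{\dc}(\per_m)$ directly, e.g.\ by strengthening the degeneration (Mignon--Ressayre, Landsberg--Manivel--Ressayre) arguments that so far give only $\Omega(m^2)$, is equally open, so in every formulation the difficulty is exactly that of separating $\VPs$ from $\VNP$ in its strongest geometric form.
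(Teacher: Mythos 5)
The statement you were asked about is a \emph{conjecture} (Mulmuley--Sohoni, Conjecture~\ref{conj:dc-bord}), and the paper does not prove it --- it cannot, since it is at least as strong as Valiant's Conjecture~\ref{conj:dc}. In fact the paper's main contribution, Theorem~\ref{th:main}, goes in the opposite direction: it \emph{closes off} the occurrence-obstruction route to this conjecture that was proposed in \cite{gct1,gct2}. So there is no ``paper's proof'' to compare against, and your text, as you yourself state, is not a proof either.

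What you have written is an accurate account of the state of affairs: the reduction from the conjecture to exhibiting multiplicity obstructions via the surjection $\C[\Det_n]\twoheadrightarrow\C[\ocp_{n,m}]$ and Schur's lemma is exactly the reasoning in the paper's introduction and its ``Future directions'' subsection; the identification of the determinant-side upper bound with the (symmetric) rectangular Kronecker coefficient via the stabilizer $(\SL_n\times\SL_n)\rtimes\Z/2$ and algebraic Peter--Weyl matches the discussion around the GCT-coefficients $\gctm_n(\la)\le\kron_n(\la)$; and the permanent-side lower bound via explicit highest weight vectors is the technique the paper itself uses (Theorem~\fref{eq:comb_contraction} and Section~\ref{sec:tableaux}), though there it is deployed for $\C[\Det_n]$ rather than $\C[\ocp_{n,m}]$. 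The genuine gap is the one you flag in bold: no family $\la(m,c)$ is known for which the permanent-side multiplicity provably exceeds the determinant-side one, and the paper's Theorem~\ref{th:main} shows the only ``cheap'' way to force such a gap --- making the determinant side vanish while the permanent side is positive --- is unavailable once $n\ge m^{25}$. One small caution: your inequality $\mult_\la\C[\Det_n]_d\le\dim\bigl(S_\la(\C^{n^2})^*\bigr)^{\stab(\det_n)}$ is correct for the orbit \emph{closure} (restriction to the dense orbit is injective), but equality generally fails, so a multiplicity obstruction certified only against the Kronecker upper bound would still be sound, whereas a lower bound on $\gctm_n(\la)$ obtained from the orbit would not transfer to the closure without extra work. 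As a proof of the conjecture, the proposal establishes nothing; as a description of the surviving strategy, it is consistent with the paper.
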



Conjecture~\ref{conj:dc-bord} implies Conjecture~\ref{conj:dc}. 
Indeed, using that 
$\GL_{n^2}$ is dense in $\C^{n^2\times n^2}$, 
one shows (e.g., see~\cite{buerg-survey:15})  
that $\dc(\per_m) \le n$ implies $X_{11}^{n-m} \per_m \in \Det_n$. 
(The latter must be a point in the boundary of $\Det_n$ if $n>m$.) 

The following strategy towards Conjecture~\ref{conj:dc-bord} was proposed in \cite{gct1}. 
The action of the group $G=\GL(V)$ on 
$\Sym^n V^*$ induces an action on its (graded) coordinate ring 
$\C[\Sym^n V^*]=\oplus_{d\in\N} \Sym^d \Sym^n V$. 
The homogeneous parts $\Sym^d \Sym^n V$ are called {\em plethysms}:  
in fact, we obtain the natural $G$-action on theses spaces. 
The coordinate ring $\C[\Det_n]$ of the orbit closure $\Det_n$ 
is obtained as the homomorphic image
of $\C[\Sym^n V^*]$ via the restriction of regular functions, 
and the $G$-action descends on this. 
In particular, we obtain the degree~$d$ part $\C[\Det_n]_d$ of $\C[\Det_n]$
as the homomorphic $G$-equivariant image of $\Sym^d \Sym^n V$. 

It is well known~\cite{FH:91} that 
the irreducible polynomial representations of $G$ can be labeled by 
partitions $\la$ into at most $n^2$ parts. 
The coordinate ring $\C[\Det_n]$ is a direct sum of its 
irreducible submodules since $G$ is reductive. 
We say that $\la$ occurs in $\C[\Det_n]$ if 
it contains 
an irreducible $G$-module of type $\la$. 

Let $\ocp_{n,m}$ denote the orbit closure of the padded permanent ($n>m$): 
\begin{equation}\label{def:Znm}
 \ocp_{n,m} := \ol{\GL_{n^2}\cdot X_{11}^{n-m} \per_m} \subseteq \Sym^n (\C^{n\times n})^*.
\end{equation}
If $X_{11}^{n-m} \per_m$ is contained in $\Det_n$, then 
$\ocp_{n,m} \subseteq \Det_n$, and the restriction defines 
a surjective $G$-equivariant homomorphism 
$\C[\Omega_n]\to\C[\ocp_{n,m}]$ of the coordinate rings. 
Schur's lemma implies that 
if $\la$ occurs in $\C[\ocp_{n,m}]$, 
then it must also occur in $\C[\Det_n]$. 
A partition $\la$ violating this condition is called an 
{\em occurrence obstruction}. Its existence thus 
proves that $\ocp_{n,m} \not\subseteq \Det_n$ 
and hence $\dc(\per_m)>n$.
It is known that occurrence obstructions~$\la$ 
must satisfy $|\la| = nd$ and $\ell(\la) \le m^2$, 
see~\cite{gct1,gct2,BLMW:11}.
Here $|\la| := \sum_i \la_i$ denotes the {\em size} of $\la$ 
and $\ell(\la)$ denotes the {\em length} of $\la$, 
which is defined as the number of 
nonzero parts of $\la$.
We write $\la \vdash |\la|$, so in our case $\la \vdash nd$.

In \cite{gct1,gct2} it was suggested to prove Conjecture~\ref{conj:dc-bord} 
by exhibiting occurrence obstructions. 
More specifically, the following conjecture was put forth. 

\begin{conjecture}[Mulmuley and Sohoni 2001]\label{conj:occ-obstr}
For all $c\in\N_{\ge 1}$, 
for infinitely many $m$, 
there exists a partition~$\la$ 
occurring in $\C[Z_{m^c,m}]$ but not in $\C[\Det_{m^c}]$.
\end{conjecture}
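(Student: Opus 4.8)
The plan is to rephrase occurrence in each of the two coordinate rings in the language of highest weight vectors, record the two standard necessary conditions for occurrence, and then construct an explicit family of partitions that meets the requirements for $\C[Z_{m^c,m}]$ while failing them for $\C[\Det_{m^c}]$. Write $G=\GL_{n^2}$ and, for $f\in\Sym^n\C^{n^2}$, let $H_f\le G$ denote its stabilizer and $V_\lambda$ the irreducible polynomial $G$-module of type~$\lambda$. A partition $\lambda$ with $|\lambda|=nd$ and $\ell(\lambda)\le n^2$ occurs in $\C[\ol{G\cdot f}]_d=\C[\Sym^n\C^{n^2}]_d/I(\ol{G\cdot f})_d$ if and only if there is a highest weight vector of weight $\lambda$ in $\C[\Sym^n\C^{n^2}]_d=\Sym^d((\Sym^n\C^{n^2})^*)$ that does not vanish identically on the orbit $G\cdot f$ --- equivalently, that does not vanish at $f$ after applying some $g\in G$. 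The inclusions $\ol{G\cdot f}\subseteq\Sym^n\C^{n^2}$ and $\C[\ol{G\cdot f}]\hookrightarrow\C[G\cdot f]\hookrightarrow\C[G]^{H_f}$ give two necessary conditions for occurrence: (N1) $a_\lambda(n,d)>0$, where $a_\lambda(n,d)$ is the multiplicity of $V_\lambda$ in the plethysm $\Sym^d\Sym^n\C^{n^2}$; and (N2) $\dim V_\lambda^{H_f}>0$, by the algebraic Peter--Weyl theorem. The relevant stabilizers are known: $H_{\det_n}$ is, up to the transpose and a finite extension, $\{(A,B)\in\GL_n\times\GL_n:\det(A)\det(B)=1\}$ acting by $X\mapsto AXB$, and $H_{X_{11}^{n-m}\per_m}$ contains a maximal torus of $\GL_{m^2}$ together with the two copies of $S_m$ permuting the rows and the columns of the $\per_m$ block.

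With this setup, obstructing the instance $n=m^c$ amounts to producing partitions $\lambda=\lambda^{(m)}$ with $\ell(\lambda)\le m^2$ and $|\lambda|=m^c d$ for a suitable $d=d_m$, such that: (i) some highest weight vector of weight $\lambda$ does not vanish at $X_{11}^{m^c-m}\per_m$ --- proving, by the criterion above, that $\lambda$ occurs in $\C[Z_{m^c,m}]$; and (ii) $a_\lambda(m^c,d)=0$ --- proving, by (N1), that $\lambda$ does not occur in $\C[\Det_{m^c}]$, so that $\lambda$ is an occurrence obstruction. For step~(i) I would first use the torus part of $H_{X_{11}^{n-m}\per_m}$ together with the two $S_m$-actions to restrict attention to those $\lambda$ whose relevant weight space carries a nonzero $S_m\times S_m$-semiinvariant --- a combinatorial filter on~$\lambda$ --- and then check non-vanishing at the padded permanent for a small list of candidate families by explicitly evaluating Young-symmetrized tensors, in the spirit of the obstruction-design computations in this area. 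Step~(i) is laborious but amounts to a finite verification per family.

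The heart of the proof, and the step I expect to be the main obstacle, is the plethysm vanishing~(ii) for infinitely many~$m$. Here $n=m^c$ is huge compared with $\ell(\lambda)\le m^2$, and this is precisely the regime in which $a_\lambda(n,d)$ is least likely to vanish: a substantial body of results and computations indicates that once $n$ is large relative to $|\lambda|/n$, almost every $\lambda$ with $\ell(\lambda)\le n$ occurs in $\Sym^d\Sym^n$. The whole difficulty is thus to isolate a \emph{systematic} family of vanishing plethysm coefficients --- for example partitions with one very long first row and only short remaining columns, ruled out by a semigroup or monotonicity argument for $\Sym^d\Sym^n$, or partitions removed from $\C[\Det_n]$ by an explicit low-degree element of $I(\Det_n)$ (which would replace~(ii) by the weaker condition that $\lambda$ does not occur in $\C[\Det_n]$, bypassing the plethysm entirely). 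Granting such a family, the conclusion is immediate: for every $c$ and for the infinitely many~$m$ the construction provides, $\lambda^{(m)}$ occurs in $\C[Z_{m^c,m}]$ but not in $\C[\Det_{m^c}]$, which is exactly Conjecture~\ref{conj:occ-obstr}, and by the reductions recalled above it implies Conjectures~\ref{conj:dc-bord} and~\ref{conj:dc}. I should stress that this final step is where the program is most fragile: if the positivity tendency of plethysm --- and the parallel tendency for occurrences in $\C[\Det_n]$ --- is as strong as the evidence suggests, then no such separating family exists and the occurrence-obstruction approach cannot succeed.
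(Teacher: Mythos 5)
The statement you set out to prove is precisely the conjecture that this paper refutes: Theorem~\ref{th:main} shows that for $n\ge m^{25}$ every $\la\vdash nd$ occurring in $\C[Z_{n,m}]$ also occurs in $\C[\Omega_n]$, so already for $c=25$ there is no occurrence obstruction for any~$m$, and Conjecture~\ref{conj:occ-obstr} is false. Consequently the gap you leave open cannot be closed: the object your argument needs --- a family $\la^{(m)}$ occurring in $\C[Z_{m^c,m}]$ but not in $\C[\Det_{m^c}]$ --- does not exist once $m^c\ge m^{25}$. Your own closing caveat is in fact the content of the paper: the ``positivity tendency'' for occurrences in $\C[\Det_n]$ is strong enough to absorb everything that can occur in $\C[Z_{n,m}]$. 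The paper proves this by combining the Kadish--Landsberg constraint (Proposition~\ref{pro:KaLa}: $\ell(\la)\le m^2$ and $|\bar\la|\le md$, forcing a very long first row), inner and outer degree liftings of highest weight vectors in plethysms, explicit tableau constructions showing that long low rectangles and certain near-rectangles with a prolonged first row occur in $\C[\Det_n]$, and the semigroup property (Lemma~\ref{le:SGP}) to reassemble $\la$ from these building blocks.

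Independently of this, your step~(ii) as stated is inconsistent with your step~(i). Since $Z_{n,m}\subseteq\Sym^n\C^{n^2}$, the ring $\C[Z_{n,m}]_d$ is a quotient of $\Sym^d(\Sym^n\C^{n^2})^*$, so the multiplicity of $\la$ in $\C[Z_{n,m}]_d$ is also bounded above by the plethysm coefficient $a_\la(d[n])$; condition~(N1) is symmetric between the two orbit closures and can never separate them. Thus $a_\la(m^c,d)=0$ would force $\la$ not to occur in $\C[Z_{m^c,m}]_d$ either, contradicting~(i). Your fallback --- showing directly that $\la$ does not occur in $\C[\Det_{m^c}]$ via an explicit element of the vanishing ideal --- is the only coherent formulation of an occurrence obstruction, and it is exactly the one Theorem~\ref{th:main} rules out.
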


This conjecture implies Conjecture~\ref{conj:dc-bord}
by the above reasoning. 

Conjecture~\ref{conj:occ-obstr} on the existence of occurrence obstructions  
has stimulated a lot of research and 
has been the main focus of researchers 
in geometric complexity theory
in the past years, see Section~\ref{se:prev-work}.
Unfortunately, this conjecture is false! This is the main result of this work.
More specifically, we show the following. 

\begin{theorem}\label{th:main}
Let $n,d,m$ be positive integers with 
$n \ge m^{25}$ and $\la\vdash nd$. If $\la$ occurs in 
$\C[Z_{n,m}]$, then $\la$~also occurs in $\C[\Omega_n]$. 
In particular, Conjecture~\ref{conj:occ-obstr} is false. 
\end{theorem}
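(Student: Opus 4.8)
The plan is to show that every partition $\la$ occurring in $\C[Z_{n,m}]$ also occurs in $\C[\Omega_n]$, whenever $n\ge m^{25}$. The natural strategy is to produce, for each such $\la$, an explicit nonzero highest weight vector of weight $\la$ in $\C[\Omega_n]$. Since $\C[\Omega_n]$ is a quotient of $\C[\Sym^n\C^{n^2}]=\bigoplus_{d}\Sym^d(\Sym^n\C^{n^2})^*$, and highest weight vectors in $\Sym^d(\Sym^n\C^{n^2})^*$ of weight $\la$ are classical objects (they correspond to certain combinatorial data on Young tableaux, via the plethysm coefficient $a_\la(d[n])$), the first step is to understand when such an HWV does \emph{not} vanish on $\Omega_n$, i.e.\ does not vanish identically on the orbit $\GL_{n^2}\cdot\det_n$. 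Equivalently, evaluating the HWV (a polynomial function on $\Sym^n\C^{n^2}$) at $\det_n$ itself must be shown to be nonzero after possibly acting by a generic group element. So the core of the argument reduces to a non-vanishing statement: for the relevant $\la$, there is an HWV of weight $\la$ in $\Sym^d(\Sym^n\C^{n^2})^*$ whose restriction to $\Omega_n$ is nonzero.

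The second step is to see what constraints $\la$ occurring in $\C[Z_{n,m}]$ imposes. As recalled in the excerpt, such $\la$ satisfy $|\la|=nd$ and $\ell(\la)\le m^2$, so $\la$ has few rows; combined with $n\ge m^{25}$ this means $\la$ is an extremely "wide" partition (first row length on the order of $nd/m^2$, which dwarfs $n$). The key structural input I would extract about $\la$ from occurrence in $\C[Z_{n,m}]$ is that it is built from the padded permanent, hence $\la$ "contains" a large rectangular piece coming from the padding variable $X_{11}^{n-m}$ — concretely $\la_1$ is huge. The aim is then to use this wideness: partitions with a very long first row and bounded length should \emph{always} occur in $\C[\Omega_n]$, because the determinant's orbit closure is large enough (recall $\Omega_2=\Sym^2\C^4$ is everything, and in general $\Omega_n$ has high dimension) to support such representations. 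So the real target theorem is a positivity/non-vanishing result of the form: if $\ell(\la)\le m^2$, $|\la|=nd$, and $n$ is large compared to $m$, then $\la$ occurs in $\C[\Omega_n]$.

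For that non-vanishing result I would proceed as follows. First, reduce from $\C[\Omega_n]$ to a more tractable module: since $\det_n$ is stabilized (up to scalar) by a large group, $\C[\Omega_n]$ admits a description via invariants of $\mathrm{stab}(\det_n)$, but a cleaner route is to exhibit HWVs directly. Build an HWV of weight $\la$ as a product of "building block" HWVs of simpler shape — e.g.\ obtained from the $n\times n$ minors structure of $\det_n$ and from power-sum-type plethysm vectors — chosen so that the shapes add up to $\la$ and the evaluation at $\det_n$ is manifestly nonzero (a determinant of a full-rank matrix, or a monomial with nonzero coefficient). Concretely, one writes $\la = \mu + \nu$ where $\mu$ is the wide rectangular part absorbed by padding-type vectors that evaluate nontrivially, and $\nu$ is a small "correction" partition handled by known occurrences in $\C[\Omega_n]$ for small degree; then multiply the corresponding HWVs and check the product is a nonzero HWV of weight $\la$ on $\Omega_n$. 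The inequality $n\ge m^{25}$ is what gives enough room for the rectangular part to exist and for the multiplicities of the building blocks to be positive.

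The main obstacle I expect is precisely this non-vanishing step: showing that the constructed HWV does not vanish on $\Omega_n$, i.e.\ that the relevant plethysm/Kronecker-type multiplicity in $\C[\Omega_n]$ (not just in $\Sym^\bullet(\Sym^n\C^{n^2})^*$) is positive. Multiplicities can drop to zero upon passing to the orbit-closure quotient, so one cannot argue purely formally; one genuinely needs an evaluation argument at $\det_n$ (or at a generic point of its orbit) combined with semicontinuity, or a clever use of the $\mathrm{SL}_n\times\mathrm{SL}_n\rtimes\Z/2$ symmetry of the determinant to force the combinatorial coefficient to be nonzero. Making the bookkeeping of "which shapes survive the restriction to $\Omega_n$" precise — and getting the explicit threshold $m^{25}$ out of it — is where the hard quantitative work lies.
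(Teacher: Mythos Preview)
Your high-level outline---assemble a highest weight vector of weight $\la$ from simpler building blocks via the semigroup property, then prove it does not vanish on $\Omega_n$---is indeed the paper's strategy. But the proposal has a genuine gap in the non-vanishing step, and a missing quantitative input.

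On the quantitative side: from occurrence in $\C[Z_{n,m}]$ you extract only $\ell(\la)\le m^2$, giving $\la_1\gtrsim nd/m^2$. You gesture at a sharper bound coming from the padding variable but do not state it. The paper needs and proves (following Kadish--Landsberg, Proposition~\ref{pro:KaLa}) the much stronger inequality $|\bar\la|\le md$, equivalently $\la_1\ge (n-m)d$. This is what drives the entire case split (small $d$; extremely small $|\bar\la|$; generic) and makes the building-block decomposition feasible: the body $\bar\la$ has at most $md$ boxes regardless of $n$, so one is really splitting a \emph{small} partition into rectangles and then prolonging the first row.

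The more serious gap is your proposed mechanism for non-vanishing: evaluation at $\det_n$ or at a generic orbit point, possibly via the stabilizer $\mathrm{SL}_n\times\mathrm{SL}_n\rtimes\Z/2$. That route leads straight to (symmetric) rectangular Kronecker coefficients, whose positivity is precisely what one cannot control directly---and even when one can (as in~\cite{ik-panova:15}), positivity of $k_n(\la)$ does not by itself give occurrence in $\C[\Omega_n]$. The paper's key idea, which your proposal misses, is that $\Omega_n$ contains all padded power sums $X_1^{n-s}(v_1^s+\cdots+v_d^s)$ whenever $n\ge sd$ (Theorem~\ref{eq:containlinforms}, a consequence of Valiant's formula-to-determinant reduction). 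Every non-vanishing statement in the proof is obtained by evaluating an explicit tableau highest weight vector $v_T$ at such a padded power sum---a tractable combinatorial computation---after first using the inner- and outer-degree lifting maps of Section~\ref{sec:kllifting} to bring $n$ and $d$ down to manageable size. The paper emphasizes that this power-sum containment is the \emph{only} property of $\Omega_n$ it uses; the stabilizer of $\det_n$ never enters.
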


One can likely improve the bound on~$n$ by a more careful analysis.

\subsection{Related work: Kronecker coefficients}\label{se:prev-work}

Kronecker coefficients are fundamental quantities that 
have been the object of study in algebraic combinatorics 
for a long time~\cite{murna:38}. 
A difficulty in their study is that there is no known counting interpretation 
of them~\cite{stanley:00}. 
The {\em Kronecker coefficient} $\kron(\la,\mu,\nu)$ 
of three partitions $\la,\mu,\nu$ of the same size~$d$
is defined as the dimension of the space of $\aS_d$-invariants 
of $[\la]\ot [\mu]\ot [\nu]$, where $[\la]$ denotes the 
irreducible $\aS_d$-module of type $\la$
and $\aS_d$ is the symmetric group on $d$ symbols;  
see \cite[I\S7, internal product]{macdonald:95}. 
We write $n\times d$ for the rectangular 
partition $(d,\ldots,d)$ of size $nd$ and call 
$\kron_n(\la) := \kron(\la,n\times d,n\times d)$ 
the {\em rectangular Kronecker coefficient} 
of $\la\vdash nd$. 
In \cite{ik-m-w:15} it was shown that deciding positivity of 
Kronecker coefficients in general is NP-hard, but this proof 
fails for rectangular formats. 

Let $\gctm_n(\la)$ denote the multiplicity by which the irreducible $\GL_{n^2}$-module of type 
$\la\vdash nd$ occurs in $\C[\Omega_n]_d$. 
We call the numbers $\gctm_n(\la)$ the {\em GCT-coefficients}.   
In \cite{gct1} it was realized that GCT-coefficients 
can be upper bounded by rectangular Kronecker coefficients:
we have $\gctm_n(\la) \le \kron_n(\la)$ for $\la\vdash nd$. 
In fact, the multiplicity of $\la$ in the coordinate ring of the orbit 
$\GL_{n^2}\cdot\det_n$ equals the so-called symmetric 
rectangular Kronecker coefficient~\cite{BLMW:11}, which is  
upper bounded by $\kron_n(\la)$. 

Note that an occurrence obstruction for 
$Z_{n,m}\not\subseteq\Omega_n$ 
is a partition $\la$ for which  $\gctm_n(\la)=0$ 
and such that $\la$ occurs in $\C[Z_{n,m}]$. 
Since hardly anything was known about the GCT-coefficients, 
it was proposed in \cite{gct1} to find $\la$ 
for which the Kronecker coefficient $k_n(\la)$ vanishes 
and such that $\la$ occurs in $\C[Z_{n,m}]$. 
This potential method of proving Conjecture~\ref{conj:occ-obstr},
has stimulated research in algebraic combinatorics on these quantities. 
However, the recent article of Ikenmeyer and Panova~\cite{ik-panova:15} showed that 
proving Conjecture~\ref{conj:occ-obstr} in this way is not possible. 
Our work is greatly inspired by~\cite{ik-panova:15}, even though 
it differs vastly in the details and techniques from the present paper
(see \S\ref{se:outline}).



Before~\cite{ik-panova:15} there were several papers showing 
that asymptotic considerations cannot resolve Conjecture~\ref{conj:occ-obstr}. 
A first attempt towards finding occurrence obstructions 
was by asymptotic considerations (moment polytopes): 
for Kronecker coefficients 
this was ruled out in~\cite{buci:09}, where it  
was proven that for all $\la\vdash nd$ 
there exists a stretching factor~$\ell\ge 1$ such that $\kron_n(\ell\la) > 0$. 
Kumar~\cite{Kum:15} 
ruled out asymptotic considerations for 
the GCT-coefficients: assuming the Alon-Tarsi Conjecture~\cite{AT:92}, 
he showed that 
$\gctm_n(n\la) > 0$ for all $\la\vdash nd$ and even $n$.
A similar conclusion, unconditional, although with less information on the stretching factor, 
was obtained in~\cite{BHI:15}. 


\subsection{Future directions}

While our main result, Theorem~\ref{th:main},  
rules out the possibility of proving 
the Conjecture~\ref{conj:dc-bord} via occurrence obstructions, 
there still remains the possibility that one may succeed so 
by comparing multiplicities. 
If the orbit closure $\ocp_{n,m}$ of the padded permanent
$X_{11}^{n-m} \per_m$ is contained in $\Det_n$, then 
the restriction defines a surjective $G$-equivariant homomorphism 
$\C[\Omega_n]\to\C[\ocp_{n,m}]$ of the coordinate rings, 
and hence  the multiplicity of the type $\la$ in 
$\C[\ocp_{n,m}]$ is bounded from above by
the GCT-coefficient $\gctm_n(\la)$. 
Thus, proving that $\gctm_n(\la)$ is strictly smaller 
than the latter multiplicity implies that  
$\ocp_{n,m}\not\subseteq\Det_n$. 
We note that the paper~\cite{cdw:12} rules out 
one natural asymptotic method for achieving this.
Mulmuley pointed out to us a paper by 
Larsen and Pink~\cite{lapa:90} that is of 
potential interest in this connection. 


\section*{Acknowledgments}
{\footnotesize
This work, as well as \cite{ik-panova:15}, are an outcome of the program ``Algorithms and Complexity in Algebraic Geometry'', 
held at the Simons Institute for the Theory of Computing in Berkeley in 2014 and organized by B\"urgisser, Landsberg, Mulmuley, and Sturmfels.
We thank all the participants of the program for their interest and stimulating discussions and the Simons Institute for the financial support that 
made this work possible. 
We are grateful to Avi Wigderson for pointing that our proof reveals that, using occurrence obstructions, 
one cannot prove significant lower bounds in the significantly weaker model of power sums. 
}

\section{Overview of proof}\label{se:outline}

The proof of our main result, Theorem~\ref{th:main}, 
is intricate and combines various new techniques. 
We present below the main ingredients and outline the structure of the proof. 
Moreover, we shall discuss two aspects of our method, that we believe 
are of independent interest: fundamental invariants of forms and the lifting of 
highest weight vectors in plethysms.

\subsection{Outline and ingredients}

The only information we exploit about the orbit closure $\ocp_{n,m}$ of the padded permanent
(recall \eqref{def:Znm}) is an insight due to Kadish and Landsberg~\cite{kadish-landsberg:14}, which 
was also crucially used in Ikenmeyer and Panova~\cite{ik-panova:15}. 
For stating it, we introduce the notion of the {\em body} $\bar\la$ of a partition~$\la$, which 
is obtained be deleting the first component of $\la$, or pictorially, by removing the first row in its diagram. 

\begin{theorem}[\cite{kadish-landsberg:14}]\label{pro:I-KaLa}
If $\la\vdash nd$ occurs in $\IC[Z_{n,m}]_d$, then $\ell(\la)\leq m^2$ and $|\bar\la|\leq md$. 
\end{theorem}

This means that if $\la\vdash nd$ occurs in $\IC[Z_{n,m}]_d$, then $\la$ must have a very long first row
if $n$ is substantially larger than $m$. 
(Note that $|\bar\la|\leq md$ is equivalent to $\la_1\ge (n-m)d$.)

It is remarkable that Proposition~\ref{pro:I-KaLa} still holds if the permanent~$\per_m$ is replaced by 
{\em any} homogeneous polynomial of degree $m$ in $m^2$ variables: 
this can be checked by tracing the proof 
that we provide in Section~\ref{sec:tensorcontrandfuncteval}. 
So no specific property of the permanent is used: only the padding 
(multiplication with a power of a linear form) is relevant. 
Let us point out the recent paper~\cite{GIP:17} which indicates that in other models of computation 
also expressing $\VPs\ne\VNP$, but avoiding the padding (trace of matrix powers), 
no lower bounds can be obtained via occurence obstructions.

For the proof of Theorem~\ref{th:main}, we need to show that many partitions $\la$ occur in $\C[\Det_n]$. 
For this we shall establish the occurrence of certain basic shapes in $\C[\Det_n]$. 
Then we get more shapes by the following \emph{semigroup property}.%

\begin{lemma}\label{le:SGP}
If $\la$ occurs in $\C[\Det_n]$ and $\mu$ occurs in $\C[\Det_n]$,
then $\la+\mu$ occurs in $\C[\Det_n]$.
\end{lemma}

\begin{proof}
$\la$ occurs in $\C[\Det_n]$ iff $\C[\Det_n]$ contains a highest weight vector 
of weight $\la$ (see Section~\ref{sec:hwvs} for basics on highest weight vectors). 
Moreover, it is immediate that the product of a highest weight vector of weight~$\la$ 
with a highest weight vector of weight~$\mu$ is a  highest weight vector of weight $\la+\mu$.
Nonzeroness of this product follows from the irreducibility of the variety $\Det_n$.
\end{proof}


For the basis shapes it turns out to be sufficient to take rectangular diagrams 
to which there are appended a first column and a (long) first row
(see Theorem~\ref{cor:buildingblock}).  

In order to explain the strategy in a simpler situation, 
let us focus here on the special case of even partitions $\la$, i.e.\ where all parts $\la_i$ are even. 
For the basic shapes, it then suffices to consider rectangular diagrams $k\times \ell$, 
consisting of $k$ rows of length $\ell$, to which a row has been appended. 
Let us call such shapes row extended rectangles. 
The following result provides the basic building blocks. 
We write $\la^{\sharp D}$ for the partition of the ``lifted shape'' $\la+(D-|\la|)$, 
that arises from $\la$ by extending the first row so that $\la^{\sharp D}$ has $D$ boxes.

\begin{proposition}\label{pro:I-explicitHWV}
Let $n \ge k\ell$ and $\ell$ be even. 
Then $(k\times \ell)^{\sharp nk}$ occurs in $\IC[\Det_n]_k$. 
\end{proposition}


The strategy of the proof of the Main Theorem~\ref{th:main} is now as follows: 
Suppose we are given an even $\la\vdash nd$ such that $n \ge m^{25}$ and 
$\la$ occurs in $\C[Z_{n,m}]$. 
By Proposition~\ref{pro:I-KaLa} we have 
$\ell(\la)\leq m^2$ and $|\bar\la|\leq md$.

We distinguish two cases. 
If the degree~$d$ is large (say $d \ge 24m^6$) we proceed as in~\cite{ik-panova:15}.  
We decompose the body $\bar\la$ of $\la$ into a sum of even rectangles $k\times \ell$ (i.e., $\ell$ is even). 
Since $n$ and $d$ are sufficiently large in comparison with $m$, 
it turns out that we can write $\la$ as a sum of row extended rectangles $(k \times \ell)^{\sharp nk}$, 
where $n\ge k\ell$.
By Proposition~\ref{pro:I-explicitHWV}, all $(k \times \ell)^{\sharp nk}$ 
occur in $\IC[\Det_n]_k$. The semigroup property then implies that 
$\la$ occurs in $\IC[\Det_n]_d$. 
(See Proposition~\ref{thm:newoccurrence} for more details.) 

If the degree~$d$ is small, we rely on the following result. 
We again assume that $V=\C^{n\times n}$. 

\begin{proposition}\label{pro:I-degreebound}
Let $\la \vdash nd$ be such that there exists a positive integer~$m$ satisfying 
$|\bar{\la}| \le md$ and $md^2\le n$. 
Then {\em every} highest weight vector of weight $\la$ in 
$\Sym^d\Sym^n V$,
viewed as a degree $d$ polynomial function on  $\Sym^n V^*$,
does not vanish on~$\Det_n$. In particular, 
if $\lambda$ occurs in $\Sym^d\Sym^n V$, 
then $\lambda$ occurs in $\IC[\Det_n]_d$.
\end{proposition}

In fact, in order to treat the general case of noneven partitions $\la$, 
we need to make a further case distinction to treat separately 
the case $|\bar\la| < m^{10}$ of a very small body (or extremely long first row).   
See  Proposition~\ref{thm:verylongfirstrow} for details. Fortunately, this case can be 
dealt with by essentially the same techniques as for 
Proposition~\ref{pro:I-degreebound}. 



In order to prove the above two propositions, we exploit very little information 
on the orbit closure $\Det_n$ of the determinant. 
The only property we use is that $\Det_n$ contains many padded power sums. 
Here is the formal statement. 

\begin{theorem}\label{eq:I-containlinforms} 
Let $X,\varphi_1,\ldots,\varphi_k$ be linear forms on $\C^{n\times n}$ and assume $n \geq sk$. 
Then the power sum 
$X^{n-s} (\varphi_1^{s}+\cdots+\varphi_k^s)$
of $k$ terms of degree~$s$, padded to degree $n$, is contained in $\Det_n$.
\end{theorem}

\begin{proof} 
The case $n=1$ is trivial, so we assume $k<n^2$ without loss of generality.  
Let $X_1,\ldots,X_{n^2}$ denote the standard basis of $(\C^{n\times n})^*$.
Writing the power sum $X_1^s+\cdots+X_k^s$ as a formula requires at most 
$(s-1)k + k - 1 = sk - 1$ many additions and multiplications.
Valiant's construction~\cite{Val:79a} implies that 
$X_1^s+\cdots+X_k^s$ has the determinantal complexity at most $sk\le n$, 
i.e., it can be written as the determinant of an $n\times n$-matrix 
with affine linear entries in $X_1,\ldots,X_k$. 
By substituting $X_i$ with $X_i/X_{k+1}$ and multiplying with $X_{k+1}^{n}$,  
we see that 
$X_{k+1}^{n-s}(X_1^s+\cdots+X_k^s)$
equals the determinant of an $n\times n$-matrix 
with homogeneous linear entries in $X_1,\ldots,X_{k+1}$. 
It follows that 
$X_{k+1}^{n-s}(X_1^s+\cdots+X_k^s)$ lies in $\Det_n$. 
Let now 
$X,\varphi_1,\ldots,\varphi_k$ be linear forms in $(\C^{n\times n})^*$. 
The assertion follows by 
applying a linear map sending $X_{k+1}$ to $X$ and 
$X_i$ to $\varphi_i$ for $1\le i\le k$. 
\end{proof}


We next discuss the ideas underlying the proof of 
Proposition~\ref{pro:I-explicitHWV} and Proposition~\ref{pro:I-degreebound}. 

\subsection{Generic fundamental invariants of tensors}\label{subs:GFI}


A theorem by Howe~\cite{howe:87} tells us about the smallest degree~$d$ such that 
the plethysm $\Sym^d\Sym^n \C^N$ has nonzero $\SL_N$-invariants: 
$$
 \dim \big(\Sym^d\Sym^n \C^N\big)^{\SL_N} = 
 \left\{
 \begin{array}{ll} 
  0 & \mbox{ if $d<N$,}\\ 
  1 & \mbox{ if $d=N$ and $n$ even,}\\
  0 & \mbox{ if $d=N$ and $n$ odd.}
  \end{array}
 \right.
$$
So if $n$ is even, there is (up to scaling) a unique 
nonzero invariant $F_{n,N}\in \Sym^N\Sym^n \C^N$. 
This invariant was already known to Cayley~\cite{cayley-coll-work} 
and it is sometimes called a ``hyperdeterminant''. We prefer to call it 
the {\em fundamental invariant} of $\Sym^n \C^N$;  
see \cite[eq.~(3.8)]{buik:15} for an explicit formula and generalizations.

Let $V=\C^N$, denote by $e_1,\ldots,e_N$ the standard basis of $V$ 
and by $X_1,\ldots,X_N$ be the basis of $V^*$ dual to the standard basis of $V$.
We interpret $F_{n,N}$ as a polynomial function on the space $\Sym^n V^*$. 
The basic observation is as follows: 
if some $p\in \Sym^n V^*$ satisfies $F_{n,N}(p) \ne 0$, 
then the rectangular partition $d\times n$ 
occurs in the orbit closure of $\ol{\GL_N\cdot p}$. 
Indeed, the restriction of coordinate rings 
$\C[\Sym^n V^*] \to \C[\ol{\GL_N\cdot p}]$ 
maps $F_{n,N}$ to a nonzero function and hence 
the $d\times n$-isotypical component of $\C[\ol{\GL_N\cdot p}]$ is nonzero.

Implementing this idea is not as easy as it may look.
For instance, consider $p=X_1\cdots X_n$ for even~$n$.
Then $F_{n,n}(p) \ne 0$ turns out to be equivalent to the 
Alon-Tarsi conjecture~\cite{AT:92}, which states that 
the number of column-even latin squares of size $n$ is different from 
the number of column-odd latin squares of size $n$
(see \cite{Kum:15,buik:15}). This conjecture is still open.  

However, for the power sum 
$p=c_1 X_1^n +\cdots + c_N X_N^n$ with nonzero coefficients $c_i$,  
one can show that $F_{n,N}(p) \ne 0$; see\ \cite[Thm.~3.19]{buik:15}. 
This is based on a technique introduced in~\cite{bci:10}. 
An extension of this method leads to the proof of Proposition~\ref{pro:I-explicitHWV}, 
which is provided in Section~\ref{sec:BBS}.

\subsection{Lifting of highest weight vectors}\label{subs:hwv-lift} 

The proof of Proposition~\ref{pro:I-degreebound} 
has led us to insights that are of independent interest and that we 
are going to discuss now. 
Consider the plethysm $\Sym^d \Sym^mV$, where again $V=\C^N$.  
The multiplicity of the irreducible $\GL_N$-module of type $\mu\vdash dm$ 
in $\Sym^d \Sym^m V$ equals the dimension of its space 
$\HWV_{\mu}(\Sym^d \Sym^m V)$ of highest weight vectors of weight $\mu\vdash dm$; 
see Section~\ref{sec:hwvs} for these notions. 
The known stability property of plethysms (\cite[Cor.~1.8]{wei:90}, \cite{carre-thibon:92}) 
can be expressed as 
$$
 \dim \HWV_{\mu}(\Sym^d \Sym^mV) = \dim \HWV_{\mu^{\sharp dn}}(\Sym^d \Sym^{n} V) 
$$
for $\mu\vdash md$, $n\ge m$, provided $\mu_2\le m$. 
Recall that $\mu^{\sharp dn}$ denotes the shape obtained from $\mu$ by 
adding $d(n-m)$ boxes to its first row.

We deepen our understanding of this numerical identity by 
constructing an explicit injective linear ``lifting map''  
between the corresponding spaces 
\begin{equation}\label{eq:lift}
  \kappa^d_{m,n}\colon\Sym^d \Sym^m V \to \Sym^d \Sym^{n} V \, , 
\end{equation}
which maps highest weight vectors of weight $\mu\vdash md$ to 
highest weight vectors of weight $\mu^{\sharp dn}$. 
The map $\kappa^d_{m,n}$ arises as the $d$-fold symmetric power of the linear map 
$M\colon\Sym^m V \to \Sym^{n} V,\, p\mapsto p\, e_1^{n-m}$, 
which is the multiplication with $e_1^{n-m}$.

We show how to label a system of generators $v_T$ of $\HWV_{\mu}(\Sym^d \Sym^m V)$ 
by tableaux $T$ of shape~$\mu$ with content $d\times m$ (see Section~\ref{se:sgen}). 
Similar to \cite{bci:10,ike:12b},  we work out a combinatorial rule for contracting~$v_T$ with a tensor of rank one,
which is a technical tool used throughout our developments (see Theorem~\ref{eq:comb_contraction}).  

The lifting of generators turns out to have a beautifully simple combinatorial description in terms 
of the tableaux, see Theorem~\ref{th:lifting-vT}. 
This also leads to a proof of the stability property of plethysms that is  
different from the proofs in \cite{wei:90,carre-thibon:92}.
As a consequence we obtain the following insight, which is crucial for our purposes. 
If $\la\vdash nd$ satisfies $\la_2 + |\bar \la| \le md$, then every 
highest weight vector of weight $\la$ in the right-hand side of \eqref{eq:lift} 
is obtained by lifting a highest weight vector in the left-hand side.

In order to understand the effect of liftings with regard to polynomial evaluation, 
we use duality to show that
$\big\langle\kappa^d_{m,n}(f), q^{d} \big\rangle = \big\langle f, M^*(q)^{d}\big\rangle$ 
for $f\in\Sym^d\Sym^m V$, and $q\in\Sym^n V^*$
(see\ Theorem~\ref{th:lift-ip-preserv}). 
Here $M^*\colon\Sym^n V^* \to \Sym^m V^*$ denotes the dual map of $M$. 
It turns out that $M^*(q)$ essentially 
equals the $(n-m)$-fold partial derivative of $q$ in the direction of $e_1$. 
The proof of Proposition~\ref{pro:I-degreebound}, 
which is given in Section~\ref{sec:SmDeg}, 
combines all the statements mentioned so far, plus 
Proposition~\ref{cor:nonvanishing}.

\subsection{No occurrence obstructions for Waring rank}


We shed further light on the method of occurrence obstructions by investigating what happens
when replacing $\det_n$ by the sum of $n$th powers of linear forms, thus inquiring about the 
Waring rank of polynomials. The Waring rank is a 
considerably weaker model of computation, which can be seen from the fact 
that for this complexity measure, exponential lower bounds for the permanent can be proven. 
More specifically, the {\em Waring rank} $R(p)$ 
of a polynomial $p\in \Sym^n V^*$ is defined 
as the minimum~$r$ such that there exists a representation 
$p=\varphi^n_1+\ldots + \varphi^n_r$ as a sum of $r$ powers of linear forms $\varphi_i\in V^*$. 
One can show that $R(\det_n)$ and $R(\per_n)$ are both below by ${n\choose \lfloor n/2\rfloor}^2$; 
see~\cite{landsberg-teitler:10}. 

Before proceeding, we note that $R(X^aY^b) \le b+1$ if $a<b$, which follows from the identity 
\begin{equation*}
 \sum_{j=0}^{b-1} (X+\zeta^j Y)^{a+b} = b X^{a+b} + b {a+b \choose a} X^a Y^b ,
\end{equation*}
where $\zeta$ is a primitive $b$th root of unity. 
(We can even do better: 
if $\omega$ is a primitive $a$th root of unity, the identity 
$$
 \sum_{j=0}^{a-1} (\varepsilon \omega^j X + Y) ^{a+b} = 
  Y^{a+b} + a {a+b \choose a}  X^a Y^b \varepsilon^a + \mathcal{O}(\varepsilon^{a+1}) 
$$
shows that $X^aY^b$ is a limit of polynomials with Waring rank at most $a+1$.) 

One may think of proving lower bounds on the Waring rank by studying the 
orbit closure 
$$
 \PS_n := \ol{\GL_{n^2}\cdot (X^n_1 + \cdots + X^n_{n^2})} \subseteq \Sym^n (\C^{n^2})^* .
$$
of the power sum with $n^2$ terms. Indeed, 
suppose that $p\in\Sym^m V^*$ satisfies $R(p) \le n$ for some $n > 2m$. 
Consider the padded power sum $X^{n-m}p$, where $X\in V^*$. 
It satisfies $R(X^{n-m}p) \le n^2$, since 
we showed $R(X^{n-m} X_i^m) \le n-m+1 \le n$ above. 
Therefore, $X^{n-m}p \in \PS_n$. 
Thus a possible strategy for showing lower bounds on $R(p)$ could be 
to disprove that the padded polynomial $X^{n-m}p$ is contained in $\PS_n$. 
By analogy,  replacing $\Det_n$ with the simpler orbit closure $\PS_n$, 
we may ask whether the separation can be achieved using occurence obstructions.
Unfortunately, the answer turns out to be no!
 

The only information used about the orbit closure of the determinant~$\Det_n$ in the proof of our main result 
is that it contains certain padded power sums (Theorem~\ref{eq:I-containlinforms}). 
More specifically, for the proof of Theorem~\ref{th:main}, we consider highest weights $\lambda$ occurring in $\C[\Det_n]_d$ 
for different values of $d$ -- Proposition~\ref{pro:I-degreebound} and Proposition~\ref{thm:verylongfirstrow}. 
In both cases we show that each highest weight vector $f \in \Sym^d \Sym^n V$ of weight $\lambda$ of interest appears in $\C[\Det_n]_d$, 
by showing that there is a corresponding power sum $p=\varphi_1^s+\cdots+\varphi^s_k$ on which $f$ does not vanish. 
In Proposition~\ref{pro:I-degreebound} we have $k=d<n, s=md<n$; and in Proposition~\ref{thm:verylongfirstrow} 
we have $k=m^2s, \,  m^2s^2 \leq n$. For all values of $d$ and the corresponding $k$ 
we have that $R(p) \leq k$ and $R(X_1^{n-s}p) \leq n k\leq n^2$, so $X_1^{n-s}p \in \PS_n$ 
and hence $\lambda$ occurs in $\C[\PS_n]_d$.  
Hence we can replace $\Det_n$ by $\PS_n$ and obtain the following result.

\begin{corollary}\label{cor:main-PS}
Let $n,d,m$ be positive integers with 
$n \ge m^{25}$ and $\la\vdash nd$. If $\la$ occurs in 
$\C[Z_{n,m}]$, then $\la$~also occurs in $\C[\PS_n]$. 
\end{corollary}

Recall that in this statement, $Z_{n,m}$ is the orbit closure of the padded permanent 
$X^{n-m} \per_m$; see~\eqref{def:Znm}. 
Tracing the proof reveals that the permanent can be replaced by 
any homogeneous polynomial~$p$ of degree $m$ in $m^2$ variables. 

So we obtain the dramatic result that the strategy of occurrence obstructions 
cannot even be used in the weaker model of $\PS_n$ against padded polynomials.


\section{Preliminaries} 
\label{sec:prelim}

\subsection{Symmetric powers}\label{sec:ML} 

We refer to~\cite{northcott:84} for background on multilinear algebra. 
Let us begin with some notational conventions. 
Let $V$ be a finite dimensional complex vector space. 
We write $v_i$ for vectors in $V$, $\varphi_j$ for linear 
forms in $V^*$, and write $\langle v_i, \varphi_j \rangle$ 
or $\langle \varphi_j ,v_i \rangle$ for the contraction $\varphi_j(v_i)$ 
of $v_i$ with $\varphi_j$. 
If $V=\C^N$, then $e_1,\ldots,e_N$ denotes the standard basis of $V$, 
and we denote by $X_1,\ldots,X_N$ the basis of $V^*$ dual to it, so that 
$\langle e_i, X_j\rangle =\delta_{ij}$. 
A basis of the $d$th tensor power $\tensor^d V$ is given by the tensors 
$e_I := e_{i_1} \otimes e_{i_2} \otimes \cdots \otimes e_{i_{d}}$, 
where $I$ runs over all tuples  
$I=(i_1,\ldots,i_d) \in [N]^d$, 
where we write $[N]:=\{1,2,\ldots,N\}$. 
Similarly, we define the dual basis $X_J$ of $\tensor^d V^*$ satisfying 
$\langle e_I, X_J \rangle = \delta_{I,J}$. 

The symmetric group $\aS_d$ on $d$ symbols acts on the $d$th tensor power 
$\tensor^d V$ by permuting the factors. 
The {\em $d$th symmetric power} $\Sym^d V$ 
is the subspace of $\tensor^d V$ consisting of the $\aS_d$-invariant tensors. 
It is obtained as the image of the symmetrizing projection
$$
 \Pi_d :=\frac{1}{d!}\sum_{\pi\in\aS_d} \pi
$$
which we can view as en element of the the group algebra~$\C[\aS_d]$. 
The multiplication maps 
\begin{equation*}\label{eq:symm-prod}
 \Sym^{d_1} V \times \Sym^{d_2} V \to \Sym^{d_1+d_2} V,\, 
 (p,q) \mapsto p\cdot q := \Pi_{d_1+d_2}(p\otimes q) 
\end{equation*}
turn 
$\Sym V := \oplus_{d\in\N} \Sym^d V$ 
into an associative graded $\C$-algebra on which $\GL(V)$ 
acts by algebra automorphisms. 
We note that 
the construction of the symmetric power is functorial: 
a linear map $\Phi\colon V\to W$ of finite dimensional $\C$-vector spaces 
defines a unique linear map 
$\Sym^d\Phi\colon\Sym^d V \to \Sym^d W$ satisfying 
$(\Sym^d\Phi) (v^d) = \Phi(v)^d$. 
Moreover $\Sym^d(\Psi\circ\Phi) = \Sym^d\Psi \circ \Sym^d\Phi$ 
for linear maps $\Psi,\Phi$ that can be composed. 
In particular, the {\em symmetric product} of $d$ vectors $v_1,\ldots,v_d \in V$ is defined as 
$v_1\cdots v_d := \Pi_d(v_1\ot\cdots\ot v_d)$.
It is well known that $\Sym^d V$ is spanned by the $d$th powers $v^d= v^{\ot d}$.

We can identify a symmetric tensor $p\in\Sym^d V^*$ with the homogeneous 
polynomial of degree~$d$ given by 
$V\to\C, v\mapsto \langle p,v^{d} \rangle$: one calls 
$p$ the {\em polarization} of the corresponding homogeneous polynomial.  
This way, we can view $\Sym V^*$ as the $\C$-algebra of polynomials in $N$ variables. 
The interpretation of homogeneous degree~$d$ polynomials 
in the variables $X_i$ as symmetric tensors in  $\Sym^d V^*$ 
is highly essential for our work! 

For $\alpha\in\IN^N$ with $|\alpha|=d$ we define the monomial 
$$
 e^\alpha  := e_1^{\alpha_1}\cdots e_N^{\alpha_N} 
 = \Pi_d(e_1^{\ot\alpha_1}\ot\cdots \ot e_N^{\ot\alpha_N})
 = \frac{1}{d!} \sum_{\pi\in\aS_d} \pi e_1^{\ot\alpha_1}\ot\cdots \ot e_N^{\ot\alpha_N}
 = \frac{1}{{d\choose \alpha}} \sum_I e_I ,
$$
where the sum is over all tuples $I \in [N]^d$ with the frequency $\alpha$, i.e.,
$\alpha_k$ is the number of occurences of $k$ in $I$. 
It is well known that the $e^\alpha$ form a basis of $\Sym^d V$. 
Similarly, the $X^\alpha := X_1^{\alpha_1}\cdots X_N^{\alpha_N}$ form a basis 
of $\Sym^d V^*$.
A straightforward calculation shows that 
\begin{equation}\label{eq:dualsym}
\langle e^\alpha, X^\beta \rangle = \frac{1}{{d\choose \alpha}}\, \delta_{\alpha,\beta} ,
\end{equation}
where ${d\choose \alpha}$ denotes the multinomial coefficient.
Therefore, 
$({d\choose \beta} X^\beta)$ is the dual basis of $(e^\alpha)$. 

Let $p\in\Sym^d V^*$ with $d\ge 1$ and $v\in V$.
Viewing $p$ as a polynomial function on $V$, there is a well-defined 
{\em directional derivative} of $p$ at $u\in V$ in direction~$v$: 
$$
 \partial_v p (u) := \lim_{\epsilon \to 0} \frac{1}{\epsilon} \big( p(u +\epsilon v) - p(u)\big) . 
$$
This defines $\partial_v p \in \Sym^{d-1} V^*$.
In coordinates, we have 
$\partial_v p = \sum_i v_i \frac{\partial p}{\partial X_i}$, 
where $v=\sum_i v_i e_i$.
So for fixed $v$, we obtain a linear map
$\partial_v\colon \Sym^d V^*\to\Sym^{d-1}V^*,\, p\mapsto \partial_v p$, 
which we define to be $0$ if $d=0$. 
We denote by $\partial_v^k$ the $k$-fold composition of $\partial_v$. 


After recalling these general facts, 
we present now a useful lemma on the evaluation of polynomials 
at ``points of low rank''. 

\begin{lemma}\label{le:polar}
Let $W$ be a finite dimensional $\C$-vector space and 
let $p\in\Sym^d W^*$ such that 
$\langle p, (\sum_{j=1}^r w_j)^d \rangle \ne 0$ for some $w_1,\ldots,w_r \in W$.  
Then there exists $J\subseteq [r]$ with $|J| \le d$ and 
$\langle p, (\sum_{j\in J} w_j)^d \rangle \ne 0$. 
\end{lemma}

\begin{proof}
By multilinearity we have 
$$
 0 \ne \Big\langle p, \big(\sum_{j=1}^r w_j \big)^{\otimes d} \Big\rangle = 
 \sum_{j_1,\ldots,j_d} \big\langle p, w_{j_1} \ot \cdots \ot w_{j_d} \big\rangle ,
$$
hence there exist $j_1,\ldots,j_d \in [r]$ such that 
$\langle p, w_{j_1} \ot \cdots \ot w_{j_d} \rangle \ne 0$. 
The polarization formula \cite[p.~5]{dolgachev:03} gives
\begin{equation}\label{eq:pol-formula}
  \big\langle p, w_{j_1}\ot\cdots\ot w_{j_d} \big\rangle = 
  \frac{1}{d!} \sum_{I\subseteq [d]} (-1)^{d-|I|}\,  
    \big\langle p , \big(\sum_{i\in I} w_{j_i} \big)^d\big\rangle  .
\end{equation}
Hence there must be a nonzero contribution for some $I$ 
and the assertion follows.
\end{proof}


\begin{proposition}\label{cor:nonvanishing}
Let $V$ be a finite dimensional $\C$-vector space and $d,n\ge 1$. 
If $f\in\Sym^d\Sym^n V$ is nonzero, then 
$\langle f, (\varphi_1^n + \cdots + \varphi_d^n)^d \rangle \ne 0$ 
for Zariski almost all $(\varphi_1,\ldots,\varphi_d) \in (V^*)^d$. 
This means that~$f$, viewed as a homogeneous polynomial function of degree~$d$ 
on $\Sym^n V^*$, does not vanish on almost all power sums $\varphi_1^n + \cdots + \varphi_d^n$ 
with $d$ terms.
\end{proposition}

\begin{proof} 
Let $w\in \Sym^n V^*$ be such that $\langle f, w^d \rangle \ne 0$. 
There exist $r\in\N$ and 
$\psi_1,\ldots,\psi_r \in V^*$ such that 
$w=\psi_1^n +\ldots + \psi_r^n$. By Lemma~\ref{le:polar}, there are 
$1\le j_1,\ldots,j_d \le r$ such that 
$\langle f, (\psi_{j_1}^n +\ldots + \psi_{j_d}^n)^d \rangle \ne 0$. 
Therefore, the open set 
$\{(\varphi_1,\ldots,\varphi_d) \in (V^*)^d \mid \langle f, (\varphi_1^n +\ldots +\varphi_d^n)^d \rangle\ne 0 \}$ 
is nonempty and hence Zariski dense. 
\end{proof}



\subsection{Basics on highest weight vectors}\label{sec:hwvs}

We refer to \cite{FH:91} for more details and proofs for the following basic notions and facts.

A {\em partition} $\la$ is a nondecreasing finite sequence of nonnegative integers $(\la_1,\ldots,\la_N)$. 
It can be visualized as a {\em Young diagram}, 
which is a finite collection of boxes, arranged in left-justified rows, 
with $\la_i$ boxes in the $i$th row. Depending on the context, 
we also write $\la$ for the set of boxes of the diagram. 
Recall that 
$|\la| := \sum_i \la_i$ is the {\em size} of $\la$, 
which is the number of the boxes of the Young diagram, 
and $\ell(\la)$ is the {\em length} of~$\la$, which is defined as 
the number of its nonzero parts~$\la_i$.
We briefly write $\la \vdash D$ to express that $\la$ is a partition of size $D$. 
The {\em body} $\bar\la$ of $\la$ is obtained from $\la$
by removing its first row. 
We denote by 
$k\times\ell$ the rectangular diagram with $k$ rows 
of length $\ell$, so it has size $k\ell$. 
In partition notation, 
$k\times\ell = (\ell,\ldots,\ell)$ with $\ell$ appearing $k$ times. 

Recall that $\aS_D$ denotes the symmetric group on $D$ symbols. 
It is well known that the irreducible $\aS_D$-modules can be encoded 
by partitions $\la\vdash D$ of size~$D$. They are called {\em Specht modules}  
and we shall denote them by $[\la]$. 


We denote by $U_N\subseteq \GL_N(\C)$ the subgroup of upper triangular matrices with ones on the main diagonal.
Moreover, let $\diag(\alpha_1,\ldots,\alpha_N)$ denote the diagonal matrix with entries $\alpha_i$
on the diagonal.
Suppose that $\sV$ is a rational $\GL_N(\C)$-module. 
A nonzero vector $f \in \sV$ is called a \emph{highest weight vector} of weight $\la\in\IZ^N$ iff 
$f$ is $U_N$-invariant, i.e., $u \cdot f = f$ for all $u \in U_N$, 
and $f$ is a weight vector of weight $\la$, i.e.,
$\diag(\alpha_1,\ldots,\alpha_N)\cdot f = \alpha_1^{\la_1} \cdots \alpha_N^{\la_N} f$ 
for all $\alpha_i \in\IC^\times$. 
We remark that necessarily $\la_1\ge\ldots\ge\la_N$, 
so that $\la$ is a partition if its entries are nonnegative. 
We denote by $\HWV_\la(\sV)$ the vector space of highest weight vectors of weight~$\la$.
An irreducible  $\GL_N$-module $\sV$ is called a {\em Schur-Weyl module}.
It is known that there is a unique $\la$ such that 
$\HWV_\la(\sV)$ is one-dimensional. Moreover, $\HWV_\mu(\sV)=0$ for all $\mu\ne\la$. 
We call $\la$ the {\em type} of the Schur-Weyl module $\sV$
and shall abbreviate $\sV$ by the symbol~$\{\la\}$. 
Isomorphic $\GL_N$-modules have the same type.

In the following we assume $V:=\IC^N$ and denote by $e_1,\ldots,e_N$ the standard basis of $V$. 
The group $\GL(V)$ acts on the $D$th tensor power~$\tensor^D V$  
by $g (v_1\ot\cdots\ot v_D) = (gv_1)\ot\cdots\ot (gv_D)$ and the group $\aS_D$ 
acts by permuting the factors. Since these actions commute, we have an action of 
$\GL(V) \times \aS_D$ on $\tensor^D V$.  

We next explain how to construct highest weight vectors in $\tensor^d V$.
Let $\la\vdash D$ and $\mu$ denote the transpose of $\la$, 
so $\mu_i$ denotes the number of boxes in the $i$-th column of $\la$.
For $j \le N$ we note that 
$v_{j\times 1} := e_1 \wedge e_2 \wedge \cdots \wedge e_j \in \bigwedge^j V$ 
is a highest weight vector of weight $j \times 1$. 
We define now: 
\begin{equation}\label{eq:vla}
v_\la := v_{\mu_1 \times 1} \ot\cdots\ot v_{\mu_{\lambda_1} \times 1} \ \in\ \tensor^D V.
\end{equation}
It is easy to check that $v_\la$ is a highest weight vector of weight $\la$. 

\begin{proposition}\label{pro:orbit-HWV} 
Let $\la\vdash D$. Then the vector space $\HWV_\la(\tensor^D V)$ is spanned by 
the $\aS_D$-orbit of~$v_\la$. 
\end{proposition}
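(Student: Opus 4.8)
The plan is to identify $\HWV_\la(\tensor^D V)$ with the irreducible $\aS_D$-module $[\la]$ and then conclude by a one-line irreducibility argument.

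First I would check that $\HWV_\la(\tensor^D V)$ is an $\aS_D$-submodule of $\tensor^D V$. Since the actions of $G=\GL(V)=\GL_N$ and $\aS_D$ on $\tensor^D V$ commute, for $f\in\HWV_\la(\tensor^D V)$ and $\pi\in\aS_D$ we have $u\cdot(\pi f)=\pi\cdot(u f)=\pi f$ for all $u\in U_N$, and likewise $\diag(\alpha_1,\ldots,\alpha_N)\cdot(\pi f)=\pi\cdot\big(\diag(\alpha_1,\ldots,\alpha_N)\cdot f\big)=\alpha_1^{\la_1}\cdots\alpha_N^{\la_N}\,\pi f$; hence $\pi f\in\HWV_\la(\tensor^D V)$, so $\aS_D$ permutes highest weight vectors of weight $\la$ among themselves.

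Next I would invoke Schur--Weyl duality (see \cite{FH:91}): as a $G\times\aS_D$-module, $\tensor^D V\cong\bigoplus_{\mu}\{\mu\}\ot[\mu]$, the sum running over partitions $\mu\vdash D$ with $\ell(\mu)\le N$. Passing to highest weight vectors of weight $\la$ is additive over the direct sum and, since $G$ acts only on the first factor, gives $\HWV_\la(\tensor^D V)\cong\bigoplus_{\mu}\HWV_\la(\{\mu\})\ot[\mu]$ as $\aS_D$-modules. By the uniqueness statement recalled above, $\HWV_\la(\{\mu\})$ is one-dimensional for $\mu=\la$ and zero for $\mu\neq\la$, so $\HWV_\la(\tensor^D V)\cong[\la]$. (If $\ell(\la)>N$ no summand with $\mu=\la$ occurs, so $\HWV_\la(\tensor^D V)=0$; but then $v_\la=0$ as well, and the proposition is vacuous.) Finally, $v_\la$ is a nonzero element of $\HWV_\la(\tensor^D V)$ as noted after~\eqref{eq:vla}; hence the linear span of its $\aS_D$-orbit is a nonzero $\aS_D$-submodule of the irreducible module $\HWV_\la(\tensor^D V)$, and therefore equals all of it.

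There is no genuine obstacle here: the proof reduces to Schur--Weyl duality plus the fact that a Schur--Weyl module $\{\mu\}$ has a one-dimensional space of highest weight vectors, concentrated in weight $\mu$. The only points that deserve a moment's attention are the $\aS_D$-equivariance of the weight decomposition (handled above) and the degenerate case $\ell(\la)>N$.
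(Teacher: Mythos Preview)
Your proof is correct and follows essentially the same approach as the paper: both invoke Schur--Weyl duality to identify $\HWV_\la(\tensor^D V)$ with the irreducible $\aS_D$-module $[\la]$, and then conclude by irreducibility that the orbit of any nonzero element spans. Your version is slightly more explicit in verifying the $\aS_D$-stability of $\HWV_\la$ and in handling the degenerate case $\ell(\la)>N$, but the argument is otherwise identical.
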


\begin{proof}
Schur-Weyl duality provides a $\GL(V) \times \aS_D$-isomorphism 
\[
\tensor^D V \simeq \bigoplus_{\la \vdash D} \{\la\} \otimes [\la] .
\]
Recalling that $\HWV_\la(\{\la\})$ is one-dimensional, we see that 
$\HWV_\la(\tensor^D V)$ is isomorphic to $[\la]$ as an $\aS_D$-module. 
It follows that $\HWV_\la(\tensor^D V)$ is spanned by the $\aS_D$-orbit of any of its nonzero elements. 
\end{proof}

We analyze now the tensors $v_\la$ in more detail. 
A {\em Young tableau} of shape $\la\vdash D$ 
is a filling of the boxes of the diagram $\la$ with numbers. 
We shall assume that each of the numbers $1,\ldots,D$ occurs exactly once, 
so that we obtain an enumeration of the boxes. 
The \emph{column-standard Young tableau $T^\textup{std}_\la$ of shape~$\la$} is the Young tableau 
of shape~$\la$ that contains the numbers $1,\ldots,D$ ordered columnwise, 
from top to bottom and left to right. For example,
\[
\Yvcentermath1 T^\textup{std}_{(4,2)} = \young(1356,24)
\]
is column-standard.
The symmetric group $\aS_{D}$ acts on the set of Young tableaux 
of the diagram~$\la$ by replacing each entry $i$ with $\pi(i)$.
For example, for $\pi= (2453)$, we obtain
\begin{equation}\label{eq:exatableau}
\Yvcentermath1 \pi T^\textup{std}_{(4,2)} = \young(1236,45).
\end{equation}

Describing $\pi v_\la$ in terms of the permutation~$\pi$ has redundancies 
that can be read off the tableau $\pi T^\textup{std}_{\la}$. Namely, the following 
is an easy consequence of the definition of $v_\la$ in~\eqref{eq:vla}. 
For a transposition $\tau=(i \ j)$ we have: 
\begin{equation}\label{eq:transpositioninverse}
 \tau \pi v_\la = -\pi v_\la \quad\mbox{if $i$ and $j$ are in the same column in $\pi T^\textup{std}_{\la}$.}
\end{equation}
Moreover, if $\tau$ is a permutation that switches two columns 
of the same length in $\pi T^\textup{std}_{\la}$, then 
\begin{equation}\label{eq:exchangecols}
 \tau \pi v_\la = \pi v_\la .
\end{equation}

\section{Plethysms}\label{se:plethysms}

We partition the position set 
$[dn] := \{1,\ldots,dn\}$
into the {\em blocks} $B_1,\ldots,B_d$, 
where 
$B_u := \{ (u-1)n +v \mid 1\le v \le n\}$. 
The subgroup of $\aS_{dn}$ of permutations 
that preserve the partition into blocks
is called the {\em wreath product} $\aS_d \wr \aS_n$.
It is generated generated by the permutations 
leaving the blocks invariant,  
and the permutations of the form 
$(u-1)n+v \mapsto (\tau(u) -1) n +v$ with $\tau \in\aS_d$, 
which simultaneously permute the blocks. Structurally, 
the wreath product is a semidirect product 
$\aS_d \wr \aS_n \simeq(\aS_n)^d \rtimes  \aS_d$. 
Note that its order equals $d!\, n!^d$. 
Symmetrizing over $\aS_\out \wr \aS_\inn$, we obtain 
\begin{equation}\label{eq:PROJ}
\Sigma_{d,n} := \frac{1}{d!\, n!^d} \sum_{\sigma \in \aS_d \wr \aS_n} \sigma 
\end{equation}
in the group algebra $\C[\aS_{dn}]$. 
We obtain the plethysm $\Sym^d \Sym^n V$ as the space of 
$\aS_d \wr \aS_n$-invariants in $\tensor^{dn} V$. 
This space is the image of the projection 
$\tensor^{dn} V \twoheadrightarrow \Sym^d\Sym^n V, w\mapsto \Sigma_{d,n} w$ 
induced by~$\Sigma_{d,n}$. 
We define the {\em plethysm coefficient}, for $\la\vdash dn$, 
\begin{equation*}
 a_\la(d[n]) :=\dim\HWV_\la(\Sym^\out \Sym^\inn V) ,
\end{equation*}
as the multiplicity of $\{\la\}$ in $\Sym^\out \Sym^\inn V$. 

\subsection{A system of generators encoded by tableaux}\label{se:sgen}

Again let $V=\IC^N$ with the standard basis $e_1,\ldots,e_N$. 
As a consequence of Proposition~\ref{pro:orbit-HWV}, 
the highest weight vector space of $\Sym^\out \Sym^\inn V$ 
of weight $\la\vdash dn$ is spanned 
by the projections $\Sigma_{d,n} \pi v_\la$, where $\pi$ 
runs over all permutations in $\aS_{dn}$. 
However, this description is highly redundant: 
we have 
$\Sigma_{d,n}\pi = \Sigma_{d,n}\pi'$ iff 
$(\aS_d \wr \aS_n)\pi = (\aS_d \wr \aS_n)\pi'$,
for $\pi,\pi'\in\aS_{dn}$.
We next give an intuitive description of the cosets of $\aS_d \wr \aS_n$ 
in terms of certain tableaux.

\begin{definition}\label{def:tab-rect-cont}
A {\em tableau $T$ of shape $\la\vdash dn$ with 
content $\out \times \inn$} is a partition 
of the set of boxes of the Young diagram of $\la$  
into $d$ classes $C_1,\ldots,C_d$, each of size $n$. 
\end{definition}

Intuitively, we think of a such a tableau $T$ as a filling of the Young diagram of $\la$ with 
$d$ different letters, where all boxes in the class~$C_u$ have the same letter  
(and the order of the letters is irrelevant). 

We assign to a permutation $\pi\in\aS_{dn}$ and $\la\vdash dn$ 
a tableau~$T_\la(\pi)$ of shape $\la$ with content $d\times n$ as follows: 
take $d$ different letters and replace in $\pi T^{\textup{std}}_\la$ the numbers in the block~$B_u$ 
by the same letter. 
It is clear that any tableau of shape $\la\vdash dn$ with content $d\times n$
can be obtained this way. 
For example, for the tableau in~\eqref{eq:exatableau} we get for $n=3,d=2$, 
using the letters $a,b$, 
\[
\Yvcentermath1 \young(aaab,bb) = \young(bbba,aa).
\]
It should be clear that $T_\la(\pi) = T_\la(\pi')$ iff 
$(\aS_\inn \wr \aS_\out)\pi = (\aS_\inn \wr \aS_\out)\pi'$, 
for $\pi,\pi'\in\aS_{dn}$. 

By this observation, the following is well-defined.

\begin{definition}\label{def:v_T}
Let $T$ be a tableau of shape $\la$ with content $\out \times \inn$.
We define 
$v_T := \Sigma_{d,n}\pi v_\lambda$
where $\pi\in\aS_{dn}$ is such that $T = T_\la(\pi)$. 
\end{definition}

By Proposition~\ref{pro:orbit-HWV}, $v_T$ is a highest weight vector
in $\Sym^d\Sym^n V$ of weight $\la$. 
We can restrict our attention to certain $T$ because of the following. 

\begin{lemma}\phantomsection\label{cla:wedgevanish}
\begin{enumerate}
\item Let $T$ be a tableau of shape $\la$ with content $\out \times \inn$.
If the same letter appears in a column of $T$ more than once, then $v_T=0$.
\item Let $T$ and $T'$ be two tableaux of shape $\la$ with content $\out \times \inn$
that can be obtained from each other by switching two columns that have the same length.
Then $v_T=v_{T'}$.
\end{enumerate}
\end{lemma}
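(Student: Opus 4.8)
The plan is to trace both statements back to the basic relations \eqref{eq:transpositioninverse} and \eqref{eq:exchangecols} for the vectors $\pi v_\la$ in $\tensor^{dn}V$, and then push them through the symmetrization operator $\Sigma_{d,n}$. For part (1), suppose the letter corresponding to block $B_u$ appears twice in some column of $T$, say in boxes that, under a permutation $\pi$ with $T=T_\la(\pi)$, are occupied by numbers $i,j$ lying in the same column of $\pi T^{\textup{std}}_\la$. Since the same letter is attached to both boxes, both $i$ and $j$ lie in the block $B_u$, so the transposition $\tau=(i\ j)$ belongs to the subgroup $\aS_n$ acting inside block $B_u$, hence $\tau\in\aS_d\wr\aS_n$ and $\Sigma_{d,n}\tau=\Sigma_{d,n}$. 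On the other hand, because $i$ and $j$ are in the same column of $\pi T^{\textup{std}}_\la$, relation \eqref{eq:transpositioninverse} gives $\tau\pi v_\la=-\pi v_\la$. Therefore
\[
 v_T=\Sigma_{d,n}\pi v_\la=\Sigma_{d,n}\tau\pi v_\la=-\Sigma_{d,n}\pi v_\la=-v_T,
\]
so $v_T=0$.

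For part (2), let $T$ and $T'$ differ by swapping two columns of equal length, say columns $c$ and $c'$ of $\la$ with $\mu_c=\mu_{c'}$. Choose $\pi$ with $T=T_\la(\pi)$. Switching columns $c$ and $c'$ of the tableau $\pi T^{\textup{std}}_\la$ is realized by left-multiplying $\pi$ by some permutation $\tau'$ that exchanges the two columns as blocks of entries; after this swap the letters decorating those columns get exchanged as well, and — since the two columns have the same length and the order of letters is irrelevant — the resulting content-$d\times n$ tableau is exactly $T'$, i.e. $T'=T_\la(\tau'\pi)$ up to the action of $\aS_d\wr\aS_n$. But $\tau'$ is precisely a permutation switching two columns of the same length in $\pi T^{\textup{std}}_\la$, so \eqref{eq:exchangecols} gives $\tau'\pi v_\la=\pi v_\la$. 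Hence $v_{T'}=\Sigma_{d,n}\tau'\pi v_\la=\Sigma_{d,n}\pi v_\la=v_T$. (One should double-check the edge case where the two swapped columns carry overlapping sets of letters; but since a column with a repeated letter already forces $v_T=v_{T'}=0$ by part (1), that case is harmless.)

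The main point requiring care — and the only real obstacle — is the bookkeeping that identifies the column-swap on tableaux with left-multiplication of $\pi$ by a permutation of the type covered by \eqref{eq:exchangecols}, together with checking that this operation is compatible with passing to the $d\times n$-content picture (i.e. that relabeling "which numbers get which letter" after the swap indeed produces $T'$ and not some other tableau). Here the hypothesis that the two columns have equal length is essential: it guarantees the swap is an honest bijection between the box-sets of the two columns and that \eqref{eq:transpositioninverse}/\eqref{eq:exchangecols} apply on the nose. Once this identification is set up cleanly, both assertions are immediate from invariance of $\Sigma_{d,n}$ under left multiplication by elements of $\aS_d\wr\aS_n$ and from the two relations \eqref{eq:transpositioninverse}, \eqref{eq:exchangecols}.
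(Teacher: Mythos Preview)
Your proposal is correct and follows essentially the same approach as the paper. For part~(1) you use that the transposition $\tau=(i\ j)$ lies in $\aS_d\wr\aS_n$ (so $\Sigma_{d,n}\tau=\Sigma_{d,n}$) together with \eqref{eq:transpositioninverse}, and for part~(2) you identify the column swap with a left multiplication covered by \eqref{eq:exchangecols}; this is exactly what the paper does, only the paper states part~(2) more tersely (``analogously, but using~\eqref{eq:exchangecols}'').
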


\begin{proof}
For the first assertion let $(r,c)$ and $(r',c)$ be different positions in~$T$ in the same column 
that have the same letter.
Assume $T=T_\la(\pi)$ and let $i$ and $j$ denote the entries of $\pi T^{\textup{std}}_\la$ 
at the positions $(r,c)$ and $(r',c)$, respectively.
Then $i$ and $j$ lie in the same block, since they are mapped to the same letter. 
Hence the transposition $\tau := (i \ j)$ is an element of $\aS_d \wr \aS_n$.
Using~\eqref{eq:transpositioninverse}, we see that
symmetrizing $\pi v_\la$ over the 2-element subgroup 
$\{\text{id}, \tau\} \subseteq \aS_d \wr \aS_n$ maps $\pi v_\la$ to zero. 
Hence symmetrizing over the full group $\aS_{d} \wr \aS_n$ maps $\pi v_\la$ 
to zero as well. 

The second assertion is shown analogously, 
but using \eqref{eq:exchangecols} instead of \eqref{eq:transpositioninverse}.
\end{proof}

By a \emph{singleton column} we understand a column of length one. 
According to Lemma~\ref{cla:wedgevanish}(2), singleton columns of $T$ 
can be permuted without changing the value of $v_T$. 
Moreover, according to Lemma~\ref{cla:wedgevanish}(1), 
we can restrict attention to tableaux,  
where no letter appears more than once in a column. 
This leads to the following definition. 

\begin{definition}\label{def:tabl-class}
Let $T$ and $T'$ be tableaux of shape $\la$ with content $d \times n$, 
where no letter appears more than once in a column.
We call $T$ and $T'$ {\em equivalent} if they differ only by a reordering of their singleton columns.
In this case we write $T \simeq T'$.
\end{definition}

By Lemma~\ref{cla:wedgevanish}(2), $v_T$ depends only on the equivalence class of $T$.
For example, the following two tableaux with content $2\times 3$ are equivalent: 
\[
\Yvcentermath1 \young(aaab,bb) \simeq
 \young(aaba,bb) .
\]

Summarizing, we arrived at the following result.

\begin{proposition}\label{pro:HWVSS-gen}
The vector space $\HWV_\la(\Sym^\out \Sym^\inn V)$ is spanned by the 
highest weight vectors~$v_T$, where $T$ ranges over all equivalence classes of tableaux 
of shape $\la$ with content $\out \times \inn$,  
such that no letter appears more than once in a column of $T$.
\end{proposition}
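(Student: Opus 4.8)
The plan is to combine Proposition~\ref{pro:orbit-HWV} with the definition of the plethysm as a space of $\aS_d\wr\aS_n$-invariants, and then to prune the redundant spanning set produced this way. First I would recall that $\Sym^d\Sym^n V$ is the image of the projector $\Sigma_{d,n}$ acting on $\tensor^{dn}V$, and that this projector is $\GL(V)$-equivariant (it lives in the group algebra of $\aS_{dn}$, which commutes with the $\GL(V)$-action). Consequently $\Sigma_{d,n}$ maps $\HWV_\la(\tensor^{dn}V)$ onto $\HWV_\la(\Sym^d\Sym^n V)$: equivariance gives that $\Sigma_{d,n}$ sends highest weight vectors to highest weight vectors (or to zero) of the same weight, and surjectivity holds because any highest weight vector of $\Sym^d\Sym^n V$ already lies in $\tensor^{dn}V$ and is fixed by $\Sigma_{d,n}$. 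By Proposition~\ref{pro:orbit-HWV}, $\HWV_\la(\tensor^{dn}V)$ is spanned by the $\aS_{dn}$-orbit $\{\pi v_\la\}$, so applying $\Sigma_{d,n}$ shows $\HWV_\la(\Sym^d\Sym^n V)$ is spanned by the vectors $\Sigma_{d,n}\pi v_\la = v_{T_\la(\pi)}$, i.e.\ by $\{v_T : T \text{ a tableau of shape }\la\text{ with content }d\times n\}$, using the bijection between $\aS_d\wr\aS_n$-cosets and such tableaux noted before Definition~\ref{def:v_T}.

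Next I would prune this spanning set using Lemma~\ref{cla:wedgevanish}. Part~(1) of that lemma lets us discard every $T$ in which some letter appears twice in a single column, since $v_T=0$ for those, so the remaining $v_T$ still span. Part~(2) says that $v_T$ is unchanged when we switch two columns of equal length in $T$; specializing to columns of length one, this means $v_T$ depends only on the equivalence class of $T$ under reordering of singleton columns, which is exactly the relation $\simeq$ of Definition~\ref{def:tabl-class}. Therefore it suffices to let $T$ range over one representative of each $\simeq$-class among tableaux of shape $\la$ with content $d\times n$ in which no letter repeats within a column, and the corresponding $v_T$ still span $\HWV_\la(\Sym^d\Sym^n V)$. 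That is precisely the assertion of Proposition~\ref{pro:HWVSS-gen}.

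Most of this is bookkeeping; the one genuinely substantive point is the equivariance/surjectivity argument showing $\Sigma_{d,n}\HWV_\la(\tensor^{dn}V)=\HWV_\la(\Sym^d\Sym^n V)$, together with the identification $\Sigma_{d,n}\pi v_\la=v_{T_\la(\pi)}$ and the fact that $\Sigma_{d,n}\pi=\Sigma_{d,n}\pi'$ exactly when $\pi,\pi'$ lie in the same right coset of $\aS_d\wr\aS_n$ (this is what makes $v_T$ well-defined in Definition~\ref{def:v_T} and what guarantees the tableaux $T_\la(\pi)$ exhaust all tableaux of the required content). I expect the main obstacle — such as it is — to be stating the surjectivity of $\Sigma_{d,n}$ on highest weight spaces cleanly: one must observe that if $f\in\HWV_\la(\Sym^d\Sym^nV)$ then $f\in\tensor^{dn}V$ and $\Sigma_{d,n}f=f$, so $f$ is in the image, rather than arguing abstractly via semisimplicity. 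Everything else follows by directly invoking Proposition~\ref{pro:orbit-HWV} and Lemma~\ref{cla:wedgevanish}, so no further calculation is needed.
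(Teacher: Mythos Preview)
Your proposal is correct and follows essentially the same route as the paper: the paper states (just before Definition~\ref{def:tab-rect-cont}) that, as a consequence of Proposition~\ref{pro:orbit-HWV}, $\HWV_\la(\Sym^d\Sym^n V)$ is spanned by the projections $\Sigma_{d,n}\pi v_\la$, then identifies these with the $v_T$ via the coset description, and finally prunes using Lemma~\ref{cla:wedgevanish} and Definition~\ref{def:tabl-class}, presenting Proposition~\ref{pro:HWVSS-gen} as a summary of this discussion. Your write-up is in fact slightly more explicit than the paper's in justifying the surjectivity of $\Sigma_{d,n}$ on highest weight spaces.
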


\subsection{Explicit contractions} 
\label{sec:tensorcontrandfuncteval}


Our goal here is to work out an explicit combinatorial rule for contracting a highest weight vector in the plethysm
with a tensor of rank one. This will be our main technical tool for proving that a specific highest weight vector 
does not vanish on the orbit closure $\Det_n$ or $Z_{n,m}$, respectively.  

Again we assume $V=\C^N$ and denote by 
$X_1, \ldots, X_{N}$ the basis of $V^*$ dual to the standard basis $e_1,\ldots,e_N$ of $V$. 
The basis vectors $X_{s(1)} \ot\ldots\ot X_{s(dn)}$ of $(V^*)^{\ot dn}$ 
are encoded by maps $s\colon [dn] \to [N]$. 
We view $\la\vdash dn$ as a Young diagram 
so that $\la$ also denotes the set of boxes of this diagram.
Recall that a tableau $T$ of shape~$\la$ with content $d\times n$ is 
given by a partition $\la =C_1 \cup\ldots \cup C_d$
of the set of boxes of $\la$ into classes $C_u$ of 
size $n$.
Also, recall the decomposition 
$[dn]=B_1\cup\ldots \cup B_d$ into blocks of size~$n$ 
introduced at the very beginning of Section~\ref{se:plethysms}. 
We consider now bijective assignments $\vartheta\colon\la \to [dn]$ 
that map boxes in the same class to numbers in the same block. 
More formally: 

\begin{definition}\label{def:respects}
Let $T$ be a tableau of shape $\la$ with content $d\times n$ 
and $\vartheta\colon\la \to [dn]$ be a bijection. 
We say that $\vartheta$ {\em respects $T$} 
iff there exists a permutation $\tau\in\aS_d$ such that 
$\vartheta(C_i) = B_{\tau(i)}$ for all~$i$. 
\end{definition}

This notion is closely related to the wreath product $\aS_d\wr\aS_n$ as follows. 
Let us call  {\em standard enumeration} $\vartheta_0\colon\la \to [dn]$ 
the labeling of the boxes in $T^\textup{std}_\la$ and let $\pi\in\aS_{dn}$. 
Then the assignments~$\vartheta$ respecting $T_\la(\pi)$ are given by 
$\vartheta = \sigma\circ\pi\circ\vartheta_0$, where $\sigma\in \aS_d\wr\aS_n$.  

It is useful to introduce some further notations. 
Let $j=(j_1,\ldots,j_k)$ be a list of integers. 
If $\{j_1,\ldots,j_k\}=\{1,2,\ldots,k\}$, then $\sgn(j)$ 
denotes the {\em sign} of the permutation $j$; otherwise, we 
define $\sgn(j) :=0$. 
For instance, $\sgn(2,1,3)=-1$ and $\sgn(2,1,2)=0$. 

Suppose $\vartheta\colon\la \to [dn]$ respects the tableau~$T$ of shape~$\la$ 
with content $d\times n$, and take a map $s\colon [dn] \to [N]$. 
We define the {\em value} $\val_\vartheta(s)$ of~$\vartheta$ 
at $s\colon [dn] \to [N]$ by 
\begin{equation}\label{eq:prodcols}
 \val_\vartheta (s) := \prod_{\textup{column $c$ of } \la} \sgn (s\circ\vartheta)_{|_c} .
\end{equation}
Here, $\sgn (s\circ\vartheta)_{|_c}$ denotes the sign of the list of integers 
$(s(\vartheta(1,c)),\ldots,s(\vartheta(\mu_c,c))$ 
corresponding to the $c$th column of the diagram $\la$. 
It is important to note that if $\val_\vartheta (s) \ne 0$, then 
$s(\vartheta(\Box))=1$ for all singleton columns $\Box$ of $\la$.

We shall use the following 
rule throughout the paper.

\begin{theorem}\label{eq:comb_contraction} 
Let $T$ be a tableau of shape $\la\vdash dn$ with content $d\times n$
and $s\colon [dn] \to [N]$ be a map. 
Then 
\begin{equation*}
\langle v_T, X_{s(1)} \ot\ldots\ot X_{s(dn)} \rangle = \frac{1}{d!\, n!^d} 
 \sum_{\vartheta} \val_\vartheta (s) ,
\end{equation*}
where the sum is over all bijections $\vartheta\colon\la\to [dn]$ respecting $T$. 
\end{theorem}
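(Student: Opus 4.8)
The plan is to unwind the definition of $v_T$ as a symmetrization of $\pi v_\lambda$ and to compute the inner product term by term. First I would fix a permutation $\pi\in\aS_{dn}$ with $T=T_\la(\pi)$, so that $v_T=\Sigma_{d,n}\pi v_\la$ by Definition~\ref{def:v_T}. Since the $\aS_{dn}$-action on $\tensor^{dn}V$ preserves the standard inner product, I would move the projector $\Sigma_{d,n}$ onto the basis vector:
\[
 \langle v_T, X_{s(1)}\ot\cdots\ot X_{s(dn)}\rangle
 = \langle \Sigma_{d,n}\pi v_\la, X_s\rangle
 = \langle \pi v_\la, \Sigma_{d,n} X_s\rangle
 = \frac{1}{d!\,n!^d}\sum_{\sigma\in\aS_d\wr\aS_n}\langle \pi v_\la, \sigma X_s\rangle,
\]
writing $X_s := X_{s(1)}\ot\cdots\ot X_{s(dn)}$. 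Reindexing $\sigma\mapsto \sigma^{-1}$ and using that the action is unitary, each summand equals $\langle \sigma\pi v_\la, X_s\rangle$, so the whole expression is $\frac{1}{d!\,n!^d}\sum_{\sigma}\langle \sigma\pi v_\la, X_s\rangle$.

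The next step is to identify $\langle \rho v_\la, X_s\rangle$ for an arbitrary $\rho\in\aS_{dn}$ with the quantity $\val$. Recall from~\eqref{eq:vla} that $v_\la$ is a tensor product, over the columns $c$ of $\la$, of wedge-type vectors $v_{\mu_c\times 1}=X_1\wedge\cdots\wedge X_{\mu_c}$, placed in the positions dictated by $T^{\textup{std}}_\la$; here $\mu_c$ is the length of column $c$. Applying $\rho$ permutes these positions, so $\rho v_\la$ is again, in each column $c$, the antisymmetrization $\sum_{w\in\aS_{\mu_c}}\sgn(w)\,X_{w(1)}\ot\cdots\ot X_{w(\mu_c)}$ sitting in the positions $\rho\vartheta_0(1,c),\ldots,\rho\vartheta_0(\mu_c,c)$, where $\vartheta_0$ is the standard enumeration. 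Taking the inner product against the basis vector $X_s$ then picks out, column by column, the sign of the list $\bigl(s(\rho\vartheta_0(1,c)),\ldots,s(\rho\vartheta_0(\mu_c,c))\bigr)$ when that list is a permutation of $\{1,\ldots,\mu_c\}$ and $0$ otherwise — which is exactly $\sgn(s\circ(\rho\vartheta_0))_{|_c}$. Multiplying over all columns gives $\langle \rho v_\la, X_s\rangle = \val_{\rho\vartheta_0}(s)$ in the notation of~\eqref{eq:prodcols}.

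Finally I would match the index sets. Setting $\rho=\sigma\pi$ with $\sigma$ ranging over $\aS_d\wr\aS_n$, the bijection $\vartheta := \rho\vartheta_0 = \sigma\circ\pi\circ\vartheta_0$ is, by the remark following Definition~\ref{def:respects}, precisely the generic form of a bijection $\vartheta\colon\la\to[dn]$ respecting $T=T_\la(\pi)$; moreover this parametrization $\sigma\mapsto\sigma\pi\vartheta_0$ is a bijection between $\aS_d\wr\aS_n$ and the set of such $\vartheta$ (if $\sigma\pi\vartheta_0=\sigma'\pi\vartheta_0$ then $\sigma=\sigma'$ since $\pi$ and $\vartheta_0$ are invertible). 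Substituting, $\frac{1}{d!\,n!^d}\sum_\sigma \val_{\sigma\pi\vartheta_0}(s) = \frac{1}{d!\,n!^d}\sum_{\vartheta\text{ respects }T}\val_\vartheta(s)$, which is the claimed formula. The only slightly delicate point, and the one I would write out most carefully, is the column-by-column bookkeeping in the middle step: one must be sure that applying $\rho$ genuinely transports the antisymmetric block of column $c$ to the positions $\rho\vartheta_0(\cdot,c)$ and that no cross-column interaction occurs, which holds because $v_\la$ is an honest tensor product across columns and $X_s$ is a pure tensor. The coset redundancy ($\Sigma_{d,n}\pi=\Sigma_{d,n}\pi'$ iff $\pi,\pi'$ lie in the same right coset of $\aS_d\wr\aS_n$) guarantees that the right-hand side depends only on $T$, consistent with the left-hand side, so no well-definedness issue arises.
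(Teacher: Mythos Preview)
Your proof is correct and follows essentially the same route as the paper's: expand $v_T=\Sigma_{d,n}\pi v_\la$, move the group average across the inner product, recognize each term $\langle \sigma\pi v_\la, X_s\rangle$ as a product over columns of signs, and reparametrize the sum over $\sigma\in\aS_d\wr\aS_n$ as a sum over bijections $\vartheta=\sigma\pi\vartheta_0$ respecting~$T$. The only cosmetic difference is that the paper first writes $\langle \pi v_\la, t\rangle$ for a general rank-one tensor $t$ as a product of determinants of top minors and then specializes to $t_i=X_{s(i)}$, whereas you go directly to the sign interpretation; both arrive at $\val_{\sigma\pi\vartheta_0}(s)$ in the same way.
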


\begin{proof} 
We first note that for $\varphi_1,\ldots,\varphi_k \in V^*$, the contraction
$\langle e_1\wedge\ldots\wedge e_{k},  \varphi_{1} \ot\ldots\ot \varphi_{k} \rangle$ 
equals the determinant of the top $k\times k$ minor 
of the $N\times k$ matrix whose columns are $\varphi_{1},\ldots,\varphi_{k}$.
In particular, if $\varphi_i = X_{s(i)}$ are from the dual standard basis, we get 
$$
\langle e_1\wedge\ldots\wedge e_{k},  X_{s(1)} \ot\ldots\ot X_{s(k)} \rangle = 
 \sgn (s(1),\ldots,s(k)).
$$
Let $\mu$ denote the transpose of $\la$ and put $r:=\la_1$.
From the definition~\eqref{eq:vla} of the highest weight vector~$v_\la \in \tensor^{dn} V$, we get 
$$
 \langle v_\la , X_{s(1)} \ot\ldots\ot X_{s(dn)} \rangle = 
 \langle e_1\wedge\ldots\wedge e_{\mu_1},  X_{s(1)} \ot\ldots\ot X_{s(\mu(1))} \rangle \cdots 
 \langle e_1\wedge\ldots\wedge e_{\mu_r},  X_{s(dn-\mu_r+1)} \ot\ldots\ot X_{s(dn)} \rangle 
$$
Combined with the above observation we obtain that
\begin{equation}\label{eq:wedgerule}
 \langle v_\la , X_{s(1)} \ot\ldots\ot X_{s(dn)} \rangle = \val_{\vartheta_0} (s) ,
\end{equation}
where $\vartheta_0$ is the standard enumeration of $\la$. 

Suppose that $T=T_\la(\pi)$ for $\pi\in\aS_{dn}$. 
From the definition of $\Sigma_{d,n}$ in~\eqref{eq:PROJ} and 
$v_T= \Sigma_{d,n} \pi v_\la$, 
we obtain for any $\Phi\in \Sym^d\Sym^n V^*$ that 
\begin{equation*}
 \langle v_T,\Phi\rangle 
 =  \frac{1}{d!\, n!^d} \sum_{\sigma \in \aS_d\wr \aS_n} \langle \sigma\pi v_\la , \Phi\rangle
 = \frac{1}{d!\, n!^d} \sum_{\sigma \in \aS_d \wr \aS_n} \langle v_\la , (\sigma\pi)^{-1} \Phi\rangle.
\end{equation*}
Applying \eqref{eq:wedgerule} to 
$\Phi = X_{s(1)} \ot\ldots\ot X_{s(dn)}$, we obtain 
$$
 \langle v_\la , (\sigma\pi)^{-1} (X_{s(1)} \ot\ldots\ot X_{s(dn)}) \rangle 
 = \langle v_\la , X_{s\sigma\pi(1)} \ot\ldots\ot X_{s\sigma\pi(dn)}) \rangle 
 = \val_{\vartheta_0} (s\sigma\pi) .
$$ 
Recall that if $\sigma$ runs through the wreath group $\aS_d\wr\aS_n$, 
then $\vartheta = \sigma\circ\pi\circ\vartheta_0$  
runs through the assignments respecting $T_\la(\pi)$. 
Using this, we easily see that 
$\val_{\vartheta_0} (s\sigma\pi) = \val_{\vartheta} (s)$, 
which completes the proof.
\end{proof}

As a first application we give a short proof of the following result from~\cite[Thm.~3.19]{buik:15} 
on the evaluation of highest weight vectors on power sums. 
An extension of this will lead us later to the proof of Proposition~\ref{pro:I-explicitHWV}. 


\begin{corollary}\label{cor:ps-eval}
Let $n$ be even and $T$ be the tableau of shape $d\times n$ with content $d\times n$, 
in which in each row all boxes have the same letter. Then 
$\big\langle v_T, \big(c_1 X_1^n+\ldots+c_dX_d^n \big)^{d} \big\rangle = d! \, c_1\ldots c_d$
for $c_1,\ldots,c_d\in \C$. 
\end{corollary}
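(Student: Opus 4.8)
The plan is to apply the combinatorial contraction rule (Theorem~\ref{eq:comb_contraction}) to the specific tableau $T$ and the specific rank-one-power input. Here $T$ has shape $d\times n$, and its $u$th row is entirely filled with the $u$th letter, i.e.\ the class $C_u$ is exactly the $u$th row of the rectangle. First I would expand $\big(c_1X_1^n+\cdots+c_nX_d^n\big)^{\ot d}$ multilinearly. Writing $p := c_1X_1^n+\cdots+c_nX_d^n \in \Sym^n V$ as a symmetric tensor, $p$ itself is a $\Sigma_{1,n}$-symmetrization of a sum of rank-one tensors $X_j^{\ot n}$; and $p^{\ot d}$ is a sum over functions $g\colon[d]\to[d]$ of terms $c_{g(1)}\cdots c_{g(d)}$ times (a symmetrization of) $X_{g(1)}^{\ot n}\ot\cdots\ot X_{g(d)}^{\ot n}$. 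Since $v_T$ is already $\aS_d\wr\aS_n$-invariant, the outer symmetrization in each factor can be absorbed, and it suffices to compute $\langle v_T, X_{s(1)}\ot\cdots\ot X_{s(dn)}\rangle$ for the maps $s$ that are constant on each block $B_u$, taking value $g(u)$ on $B_u$, and then sum $\sum_g c_{g(1)}\cdots c_{g(d)}\,\langle v_T, X_s\rangle$ with the appropriate multinomial normalization.

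Next I would evaluate $\langle v_T, X_s\rangle$ for such a block-constant $s$ using Theorem~\ref{eq:comb_contraction}: it equals $\frac{1}{d!\,n!^d}\sum_\vartheta \val_\vartheta(s)$, the sum over bijections $\vartheta\colon\la\to[dn]$ respecting $T$, i.e.\ $\vartheta(C_i)=B_{\tau(i)}$ for some $\tau\in\aS_d$. Because $s$ is constant on blocks, the composite $s\circ\vartheta$ is constant on each class $C_i$ (= the $i$th row), taking the value $g(\tau(i))$ there. Then $\val_\vartheta(s)=\prod_{\text{columns }c}\sgn\big((s\circ\vartheta)|_c\big)$; each column of the rectangle $d\times n$ has the $d$ entries $g(\tau(1)),\dots,g(\tau(d))$ read top to bottom, independent of which column $c$ we look at. So $\val_\vartheta(s) = \big(\sgn(g\circ\tau)\big)^n$, where $\sgn$ of the list $(g(\tau(1)),\dots,g(\tau(d)))$ is zero unless $g$ is a bijection of $[d]$, in which case it is $\pm1$. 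Since $n$ is even, $\big(\sgn(g\circ\tau)\big)^n = 1$ whenever $g$ is a permutation, and is $0$ otherwise. Thus $\langle v_T, X_s\rangle = \frac{1}{d!\,n!^d}\cdot(\#\{\vartheta\text{ respecting }T\})\cdot[g\text{ is a permutation}]$, and the number of respecting $\vartheta$ is $d!\,(n!)^d$ (choose $\tau\in\aS_d$, then a bijection $C_i\to B_{\tau(i)}$ for each $i$). Hence $\langle v_T, X_s\rangle = 1$ if $g$ is a permutation of $[d]$ and $0$ otherwise.

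Finally I would reassemble: the only surviving terms in $\sum_g c_{g(1)}\cdots c_{g(d)}\langle v_T, X_s\rangle$ come from permutations $g\in\aS_d$, each contributing $c_{g(1)}\cdots c_{g(d)} = c_1\cdots c_d$, and there are $d!$ of them, giving $d!\,c_1\cdots c_d$; the multinomial bookkeeping in expanding $p^{\ot d}$ is designed to cancel, since each permutation $g$ corresponds to a distinct multi-index with multiplicity one. I expect the main obstacle to be purely bookkeeping: carefully tracking the normalization constants when passing from $\big(c_1X_1^n+\cdots+c_nX_d^n\big)^{\ot d}$ (with its symmetric-tensor normalization of monomials from~\eqref{def:monomial}) through the multilinear expansion, and confirming that the outer $\aS_d$-symmetrization built into $v_T$ exactly accounts for the $d!$ reorderings so that no spurious factor of $d!$ or $1/d!$ survives. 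The parity hypothesis on $n$ enters only at one clean point — killing the sign $\big(\sgn(g\circ\tau)\big)^n$ — so the argument would otherwise produce $\sum_{g\in\aS_d}\sgn(g)\,c_1\cdots c_d = 0$ for odd $n$, which is the expected contrast.
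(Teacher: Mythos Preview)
Your proposal is correct and follows essentially the same approach as the paper's own proof: expand $p^{\otimes d}$ multilinearly over all choices $(i_1,\ldots,i_d)\in[d]^d$ (your function $g$), evaluate each summand via Theorem~\ref{eq:comb_contraction}, observe that every column of the rectangle gets the same list $(g(\tau(1)),\ldots,g(\tau(d)))$ so that $\val_\vartheta(s)=\sgn(g\circ\tau)^n$, and conclude that only permutations $g$ survive (each contributing $c_1\cdots c_d$) since $n$ is even. Your worry about multinomial bookkeeping dissolves immediately once you note, as the paper does, that $X_i^n=X_i^{\otimes n}$ by~\eqref{def:monomial}, so no extra normalization constants appear in the expansion of $p^{\otimes d}$.
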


\begin{proof} 
We have 
$(c_1 X_1^n+\ldots+c_dX_d^n)^{\ot d} = 
\sum_{i_1,\ldots,i_d} c_{i_1}\cdots c_{i_d} X_{i_1}^{\otimes n} \otimes \ldots \otimes X_{i_d}^{\otimes n}$, 
where the sum is over all $1\le i_1,\ldots,i_d\le d$. 

Assume first $(i_1,\ldots,i_d)=(1,\ldots,d)$. 
We apply Theorem~\ref{eq:comb_contraction} to compute 
$\langle v_T, X_{1}^{\otimes n} \otimes \cdots \otimes X_{d}^{\otimes n} \rangle$.
Using a rowwise enumeration of the boxes of $\la$, the bijections
$\vartheta\colon\la\to [dn]$ respecting $T$
are in one to one correspondence with the elements of the wreath product 
$\aS_d\wr \aS_n$. They are given by permutations $\tau\in\aS_d$ 
of the rows, and permutations of the numbers within the rows. 
Such a bijection $\vartheta$ contributes 
$\val_\vartheta(s)=\sgn(\tau)^n$, 
where $s=(1,\ldots,1,\ldots,d,\ldots,d)$
(each index occuring $n$ times). 
Hence we obtain, by the assumption that $n$ is even,  
$$
\langle v_T, c_1\cdots c_d X_{1}^{\otimes n} \otimes \cdots \otimes X_{d}^{\otimes n} \rangle
 = \frac{c_1\cdots c_d}{d!} \sum_{\tau\in\aS_d} \sgn(\tau)^n = c_1\cdots c_d .
$$
We turn now to the contributions of an arbitrary sequence $(i_1,\ldots,i_d)$. 
If it is a permutation of $1,\ldots,d$, then, by the same argument as before, 
we see that we get the same contribution. 
On the other hand, if the sequence is not a permutation of $1,\ldots,d$, 
we get zero. 
This proves the assertion.
\end{proof}

As a further application, we prove the following general result, which immediately
implies Theorem~\ref{pro:I-KaLa}.

\begin{theorem}\label{thm:KaLaG}
Let $V$ be a finite dimensional $\C$-vector space and $n\in\N$. 
\begin{enumerate}
\item Assume $\pi\colon V\to W$ is a projection,  
$p\in\Sym^n W^*$, and let $Z$ denote the $\GL(V)$-orbit closure of $\pi^*(p)\in\Sym^n V^*$. 
If $\la\vdash nd$ occurs in $\IC[Z]_d$, then $\ell(\la)\leq \dim W$.

\item Let $q=X^{n-m}p$ where $X\in V^*$ and $p\in\Sym^mV^*$ for $m\le n$.
Let $Z$ denote the $\GL(V)$-orbit closure of $q$. 
If $\la\vdash nd$ occurs in $\IC[Z]_d$, then $|\bar\la|\leq md$. 
\end{enumerate}
\end{theorem}

\begin{proof}
(1) For the first assertion, choose a basis $e_1,\ldots,e_N$ of $V$ such that $e_1,\ldots,e_M$ is a basis of~$W$. 
Then $W^*$ is the span of $X_1^*,\ldots,X_M^*$, where $X_1,\ldots,X_N$ denotes the dual basis of $V^*$. 
Assume $\ell(\la) > M$. We need to prove that $\la\vdash dn$ does not occur in $\C[Z]$. 
By Proposition~\ref{pro:HWVSS-gen}, this means that 
$\langle v_T, \pi^*(p)^{\ot d} \rangle = 0$
for all tableau~$T$ of shape~$\la$ with content $d\times n$.
For this, it is enough to show that $\langle v_T, \Phi\rangle =0$ for all tensors 
$\Phi=X_{s(1)}\otimes\cdots\otimes X_{s(dn)}$, where 
$s\colon [dn] \to [M]$.
The first column $c$ of $\la$ has $\ell$ boxes. 
We compose $s$ with a bijection $\vartheta\colon\la \to [dn]$.
Then the restriction of 
$s\circ\vartheta$ to the first column $c$ 
is not injective since $\ell>M$. 
Hence $\sgn (s\circ\vartheta)_{|_c} =0$, 
and Theorem~\ref{eq:comb_contraction} implies that indeed 
$\langle v_T, \Phi\rangle =0$.

(2) We now choose the basis $(e_i)$ of $V$ such that $X=X_1$ is 
the first element of the dual basis of~$(e_i)$. 
Assume $|\bar\la|>md$, so that $\la_1 < (n-m)d$. 
We need to prove that 
$\langle v_T, q^{\ot d} \rangle = 0$
for all tableau~$T$ of shape~$\la$ with content $d\times n$.
We can express $q^{\otimes d}$ as a linear combination of tensors
$\Phi=X_{s(1)}\otimes\cdots\otimes X_{s(dn)}$, where 
$s\colon[dn]\to [N]$ maps at least $(n-m)d$ elements to $1$.
Fix such a tensor~$\Phi$ and consider a bijection $\vartheta\colon\la\to [dn]$.  
Since $\la_1 < (n-m)d$, $\la$ has less than $(n-m)d$ columns. 
By the pigeonhole principle, there is a column~$c$ 
in which $s\circ\vartheta$ puts a $1$ in at least two boxes. 
Therefore, $\sgn (s\circ\vartheta)_{|_c} =0$ and 
Theorem~\ref{eq:comb_contraction} implies that indeed
$\langle v_T, \Phi\rangle =0$.

\end{proof}

\section{Lifting highest weight vectors in plethysms}\label{sec:kllifting}


In the following we analyze two ways of ``lifting'' highest weight vectors 
in $\Sym^d\Sym^m V$ by raising either the {\em inner degree~$m$} 
or the {\em outer degree}~$d$. 
If $d$ and $m$ are sufficiently large in comparison with~$|\bar\mu|$
for $\mu \vdash dm$, 
then these liftings provide isomorphisms of the 
spaces of highest weight vectors. In particular, the multiplicity~$a_\mu(d[m])$ does not increase, 
which is known as the stability property of the plethysm coefficients
\cite{wei:90,carre-thibon:92,mani:98}. 
A detailed understanding of the lifting in terms of highest weight vectors 
and the tableaux encoding them 
is central for the proof of our main result. 
We believe the result of this section are of independent mathematical interest. 
As a side result, we also obtain new proofs for the stability properties. 


\subsection{The multiplication maps and their duals}

Again assume $V=\C^N$ with the standard basis $e_1,\ldots,e_N$. 
Let $n\ge m$ and consider the multiplication with $e_1^{n-m}$ 
\begin{equation}\label{eq:Me1}
 M\colon\Sym^m V \to \Sym^n V,\, p \mapsto p e_1^{n-m}  . 
\end{equation}
Clearly, this is an injective linear map. 
Moreover, $M$ is $U_N$-equivariant, 
since $U_N$ acts on $\Sym V$ via algebra automorphisms and leaves $e_1$ invariant.


We determine now the dual $M^*\colon\Sym^n V^* \to \Sym^m V^*$
of the linear map~$M$. It turns out to be proportional to the $(n-m)$-fold directional 
derivative $\partial_{e_1}^{n-m}$ in the direction $e_1$, see~Section~\ref{sec:ML}.

\begin{lemma}\label{eq:dualM}
For $q\in\Sym^n V^*$ we have 
$M^*(q)=  \frac{m!}{n!}\, \partial_{e_1}^{n-m}q$. 
\end{lemma}

\begin{proof}
 We have to show that 
for all $f\in\Sym^mV$ and $q\in\Sym^n V^*$: 
$$
 \langle M(f), q \rangle = \frac{m!}{n!}\, \langle f, \partial_{e_1}^{n-m}q \rangle .
$$
By bilinearity, it suffices to check the equality for the basis elements 
$f=e^\alpha$ and $q=X^\beta$, where $|\alpha|=m$ and $|\beta|=n$. 
We put $\delta:=(n-m,0,\ldots,0)$. 
By~\eqref{eq:dualsym} we have 
\begin{equation}\label{eq:eX}
 \langle e^{\alpha+\delta}, X^\beta \rangle = \frac{1}{{n\choose \beta}} \delta_{\alpha+\delta,\beta}.
\end{equation} 
We assume $\beta_1 \ge n-m$, otherwise this expression vanishes. We calculate 
$$
 \partial_{e_1}^{n-m} q = \beta_1 ( \beta_1-1)\cdots ( \beta_1-n+m+1) X^{\beta-\delta} 
$$
and obtain 
$$
 \frac{m!}{n!}\,\langle e^{\alpha}, \partial_{e_1}^{n-m} q \rangle 
  = \frac{m!}{n!}\,\beta_1 ( \beta_1-1)\cdots ( \beta_1-n+m+1) 
  \frac{1}{{m\choose \alpha}} \delta_{\alpha,\beta-\delta} . 
$$ 
It is easily verified that this equals \eqref{eq:eX}.
\end{proof}

The map $M^*$ is clearly surjective. We now show that when restricting it to the subspace 
$X_1^{n-m}\Sym^m V^*$, we get an isomorphism.

\begin{lemma}\label{le:newscaling}
The map 
$\Delta_{m,n}\colon\Sym^m V^* \to \Sym^m V^*,\, p\mapsto M^*(X_1^{n-m}p)$
is a linear automorphism.
\end{lemma}

\begin{proof}
We expand the polynomial 
$p=\sum_{i=0}^m a_i X_1^i$ with respect to the variable $X_1$. It is easy to check that   
$$
 \frac{\partial^{n-m}}{\partial X_1^{n-m}} (X_1^k p) = \sum_{i=0}^m (i+n-m)(i+n-m-1)\cdots (i+1) a_i X_1^i ,
$$
hence $p\mapsto \frac{\partial^{n-m}}{\partial X_1^{n-m}} (X_1^{n-m} p)$ is injective. 
Now we apply Lemma~\ref{eq:dualM}. 
\end{proof}

\subsection{Inner degree lifting}

We denote by $\kappa^d_{m,n} := \Sym^dM$ the $d$-fold symmetric product  
of the map 
$M\colon\Sym^m V \to \Sym^n V$,
and call this the {\em inner degree lifting} by $n-m$:
\begin{equation}\label{eq:SMe1}
 \kappa^d_{m,n}\colon \Sym^d\Sym^m V \to \Sym^d\Sym^n V,\, p^d \mapsto M(p)^d 
 = (p\, e_1^{n-m})^d .
\end{equation}
Clearly, $\kappa^d_{m,n}$ is an injective linear map. 
This map behaves nicely regarding highest weight vectors. 

\begin{lemma}\label{le:LNEW}
The inner lifting $\kappa^d_{m,n}$ maps 
highest weight vectors of weight $\mu\vdash dm$ 
to highest weight vectors of weight $\mu^{\sharp dn}$. 
\end{lemma}

\begin{proof}
The $U_N$-equivariance of $M$ immediately extends to its $d$-fold symmetric product $\kappa^d_{m,n}$. 
Therefore, $\kappa^d_{m,n}$ maps $U_N$-invariants to $U_N$-invariants. 

To conclude the proof, it suffices to check that $\kappa^d_{m,n}$ 
raises the weight of weight vectors by $(d(n-m),0,\ldots,0)$. 
For this, we note that $\kappa^d_{m,n}$ is obtained by restriction from 
the $d$th tensor power 
$\tensor^d M\colon \tensor^d\Sym^m V \to \tensor^d\Sym^n V$ of $M$, 
so that it suffices to verify this property for $\tensor^d M$.

Let $\Sym^m V = \oplus_\alpha \C e^\alpha$ be the weight space decomposition, where 
the sum is over all $\alpha\in\N^m$ and 
$e^\alpha := e_1^{\alpha_1}\cdots e_N^{\alpha_N}$
has the weight $\alpha$.  
We obtain the decomposition
$$
 \tensor^d \Sym^m V = \bigoplus_{\alpha^1,\ldots,\alpha^d} \C\, 
 e^{\alpha^1}\ot \cdots \ot e^{\alpha^d} ,
$$
where $e^{\alpha^1}\ot \cdots \ot e^{\alpha^d}$ has the weight $\alpha^1+\cdots+\alpha^d$.
The multiplication map $M$ sends $e^{\alpha}$ to 
$e^{\alpha+ \delta}$, where $\delta :=(n-m,0,\ldots,0)$.
Hence the tensor power $\tensor^d M$ of $M$ sends 
$e^{\alpha^1}\ot \cdots \ot e^{\alpha^d}$ to 
$e^{\alpha^1+\delta}\ot \cdots \ot e^{\alpha^d+\delta}$, 
which has the weight $\alpha^1+\cdots+\alpha^d+ d\delta$. 
\end{proof}

The dual of the inner degree lifting
$$
 (\kappa^d_{m,n})^*\colon \Sym^d\Sym^n V^* \to \Sym^d\Sym^m V^*
$$
it easily described: since $\kappa^d_{m,n} =\Sym^d M$, we have 
$(\kappa^d_{m,n})^* =\Sym^d M^*$. 
This immediately implies the following important observation that we 
state as a theorem. 

\begin{theorem}\label{th:lift-ip-preserv}
Let $n\ge m$, $f\in\Sym^d\Sym^m V$, and $q\in\Sym^n V^*$. Then
$$
\big\langle \kappa^d_{m,n}(f), q^{d} \big\rangle = \big\langle f, M^*(q)^{d}\big\rangle .
$$ 
\end{theorem}

This allows to express the evaluation of the lifted $\kappa^d_{m,n}(f)$ at~$q$, 
viewed as a polynomial function on $\Sym^n V^*$, as the evaluation of $f$ at $M^*(q)$.
Recall that $M^*(q)$ equals (up to a normalizing factor) the $(n-m)$-fold directional derivative of~$q$ 
in direction $e_1$ (see Lemma~\ref{eq:dualM}). 
(We remark that 
in the conference version of our paper from FOCS 2016 \cite{BIP:16}, there is an error 
regarding the definition of the lifting map and the statement of Theorem~\ref{th:lift-ip-preserv}.)

Recall the generators $v_T$ of $\Sym^d\Sym^m V$ labeled by tableaux $T$
(Definition~\ref{def:v_T}).  
We show now that $\kappa^d_{m,n}$ 
maps $v_T$ to a generator $v_{T'}$, whose tableau $T'$ arises from $T$ 
in a simple way. 

\begin{theorem}\label{th:lifting-vT}
Let $T$ be a tableau of shape $\mu$ with content $d\times m$ and 
let the tableau $T'$ of shape $\mu':=\mu^{\sharp dn}$ 
with content $d\times n$ 
be obtained from $T$ by adding $n-m$ copies of each of the $d$ letters 
in the first row (in some order). 
Then $\kappa^d_{m,n} (v_T) = v_{T'}$.  
\end{theorem}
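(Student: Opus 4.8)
The plan is to compare the two highest weight vectors $\kappa^d_{m,n}(v_T)$ and $v_{T'}$ by expressing both as explicit elements of $\tensor^{dn}V$ coming from applications of the symmetrization projectors, and then to unwind the definitions. First I would reduce to the case $n=m+1$, since $M_{m,n}=M_{n-1}\circ\cdots\circ M_m$ and $\kappa^d_{m,n}=\kappa^d_{n-1}\circ\cdots\circ\kappa^d_m$, and on the tableau side adding $n-m$ copies of each letter to the first row is the composite of $n-m$ single-step additions; so if the theorem holds for every single step it holds in general. Thus it suffices to show $\kappa^d_m(v_T)=v_{T'}$ where $T'$ of shape $\mu+(d)$ with content $d\times(m+1)$ is obtained from $T$ by appending one box of each of the $d$ letters to the first row.

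Next I would write out both sides. By Definition~\ref{def:v_T}, $v_T=\Sigma_{d,m}\,\pi v_\mu$ for some $\pi\in\aS_{dm}$ with $T=T_\mu(\pi)$, and $v_{T'}=\Sigma_{d,m+1}\,\pi' v_{\mu'}$ for a suitable $\pi'\in\aS_{d(m+1)}$ realizing $T'$. On the left, $\kappa^d_m$ is the restriction of $\tensor^d M_m$, and $\tensor^d M_m(\Sigma_{d,m}\pi v_\mu)$ should be computed using the description $M_m(w)=\frac1{m+1}\sum_{i=0}^m\tau_i(w\ot X_1)$ from \eqref{eq:way}. The key structural point is that $\mu'$ is obtained from $\mu$ by adding $d$ boxes to row one, and correspondingly $v_{\mu'}=v_\mu\ot X_1^{\ot d}$ after a suitable reindexing (each added box is a new singleton column, contributing a factor $v_{1\times 1}=X_1$ to the tensor product \eqref{eq:vla}). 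So the heart of the matter is to show that applying $\tensor^d M_m$ and then symmetrizing over $\aS_d\wr\aS_{m+1}$ produces exactly the symmetrization of $v_\mu\ot X_1^{\ot d}$ that defines $v_{T'}$. The cleanest route is probably Lemma~\ref{le:Mmn}(3): for each of the $d$ tensor factors, $M_m$ inserts an $X_1$ and re-symmetrizes within that block, i.e.\ $\tensor^d M_m\circ\Sigma_{d,m}=\Sigma_{d,m+1}\circ(\text{insertion of }X_1\text{ into each block})\circ\Sigma_{d,m}$ in an appropriate sense, because $\Pi_{m+1}(\,\cdot\ot X_1)$ applied blockwise together with the block-permuting part of $\aS_d$ rebuilds all of $\Sigma_{d,m+1}$. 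Matching the inserted $X_1$'s with the appended singleton columns of $T'$ then identifies $\tensor^d M_m(v_T)$ with $\Sigma_{d,m+1}\pi'v_{\mu'}=v_{T'}$, using Lemma~\ref{cla:wedgevanish}(2) to absorb the ambiguity in the order of the appended boxes (they are singleton columns, so reordering does not change $v_{T'}$).

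The main obstacle I anticipate is the bookkeeping in this block-by-block argument: one must carefully track how the cosets of $\aS_d\wr\aS_m$ in $\aS_{dm}$ relate to those of $\aS_d\wr\aS_{m+1}$ in $\aS_{d(m+1)}$ under the natural inclusion of position sets $[dm]\hookrightarrow[d(m+1)]$ that leaves the $d$ new positions (one per block) free, and to verify that the normalization constants $\frac1{d!\,m!^d}$ and $\frac1{(m+1)!}$ appearing in $M_m$ combine to exactly $\frac1{d!\,(m+1)!^d}$. A slick way to sidestep most of this is to verify the identity by contraction with rank-one tensors: by Theorem~\ref{eq:comb_contraction}, it is enough to check $\langle \kappa^d_m(v_T),X_{s(1)}\ot\cdots\ot X_{s(d(m+1))}\rangle=\langle v_{T'},X_{s(1)}\ot\cdots\rangle$ for all $s\colon[d(m+1)]\to[N]$; the right side is $\frac1{d!\,(m+1)!^d}\sum_\vartheta\val_\vartheta(s)$ over $\vartheta$ respecting $T'$, and on the left one uses that $\kappa^d_m$ is the adjoint-free restriction of $\tensor^d M_m$ together with \eqref{eq:way} to reduce to the same sum after observing that the extra sign factors $\sgn(s\circ\vartheta)_{|c}$ for the new singleton columns $c$ of $T'$ force the corresponding entries to be $1$, exactly mirroring the $X_1$'s inserted by $M_m$. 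Either way, once the dictionary between inserted $X_1$'s and appended letters is pinned down, the computation is routine; the difficulty is purely in setting up that dictionary cleanly.
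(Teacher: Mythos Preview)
Your proposal is correct and follows essentially the same route as the paper: reduce to $n=m+1$, use $v_{\mu'}=v_\mu\ot X_1^{\ot d}$, and invoke Lemma~\ref{le:Mmn}(3) blockwise to pass from $\tensor^d M_m\circ\Sigma_{d,m}$ to $\Sigma_{d,m+1}$ applied to an ``insert one $X_1$ per block'' tensor. The paper pins down the bookkeeping you flag as the obstacle by introducing an explicit merging permutation $\rho\in\aS_{d(m+1)}$ (so that $\pi':=\rho\tilde\pi$ realizes $T'$) and by isolating the compatibility of $\tensor^d M_m$ with the outer symmetrization $\Pi_{d,m}$ as a separate commutative-diagram lemma (Lemma~\ref{le:diagramm}); your alternative via contraction with rank-one tensors is not used here.
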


\begin{proof}
Since we can decompose $\kappa^d_{m,n} = \kappa^d_{n-1,n}\circ\cdots\circ\kappa^d_{m,m+1}$
into liftings by one, it suffices to prove the assertion for those. 
We thus focus on $\kappa := \kappa^d_{m,m+1}$. 

Suppose that $T=T_\mu(\pi)$ for $\pi\in\aS_{dm}$ so that 
$v_T = \Sigma_{d,m}(\pi v_\mu)$ (see Definition~\ref{def:v_T}).
By definition, the tableau $T'$ of shape $\mu' := \mu + (d)$ is 
obtained from $T$ by adding one copy of each of the $d$ letters 
in the first row (in some order). 
Let us first determine a permutation $\pi' \in \aS_{d(m+1)}$ such that 
$T' \simeq T_{\mu'}(\pi')$ (see Definition~\ref{def:tabl-class}).
We denote by $\rho\in\aS_{d(m+1)}$ the permutation that merges 
the last $d$ entries into the first $d$ blocks, each at the end of the block, respectively. 
More specifically,
$$
 \rho(v_1\ot\cdots \ot v_{dm}\ot w_1\ot\cdots\ot w_d) = 
  v_1\ot\cdots \ot v_{m}\ot w_1\ot v_{m+1}\cdots\ot v_{2m}\ot w_2\ot\cdots \ot v_{(d-1)m+1}\ot\cdots\ot v_{dm}\ot w_d .
$$
(For example, for $m=3,d=2$, we have $\rho = (4567)$.) 
We view $\pi\in\aS_{dm}\subseteq\aS_{d(m+1)}$ and define now 
$\pi' := \rho \pi\in\aS_{d(m+1)}$.
The reader should verify that we indeed have 
$T' \simeq T_{\mu'}(\pi')$. 

We have $v_{\mu'} = v_\mu \ot e_1^{\ot d}$ and hence 
\begin{equation}\label{eq:rhof}
 \pi' v_{\mu'} = \rho \pi v_{\mu'} 
 = \rho \pi (v_\mu \ot e_1^{\ot d}) = \rho ((\pi v_\mu)\ot e_1^{\ot d}) .
\end{equation}

We can write the symmetrizer $\Sigma_{d,m}$  
of the wreath group $\aS_d\wr\aS_m$ (see~\eqref{eq:PROJ})
as the composition 
$\Sigma_{d,m} = \Pi_{d,m}\circ \tensor^d \Pi_m$,
where 
$\Pi_{d,m}$ denotes the symmetrizer of the subgroup isomorphic to $\aS_d$, 
consisting of the permutations in $\aS_{dn}$ of the form
$(u-1)n+v \mapsto (\tau(u) -1) n +v$ with $\tau \in\aS_d$, 
(simultaneous permutations of the blocks). 
We define 
$$
 w_T := \big(\tensor^d \Pi_m\big) (\pi v_\mu), \quad 
 w_{T'} := \big(\tensor^d \Pi_{m+1}\big) (\pi' v_{\mu'}) 
$$
so that we can write $v_T = \Pi_{d,m}(w_T)$ and $v_{T'} = \Pi_{d,m+1}(w_{T'})$. 

Let $M\colon\Sym^m V \to \Sym^{m+1} V,\, p \mapsto p\, e_1 = \Pi_{n}(p \ot e_1)$ 
denote the multiplication by $e_1$ (see~\eqref{eq:Me1})
and note that the following diagram is commutative: 
$$
\begin{array}{cccc}
\tensor^d\Sym^m V & \stackrel{\bigotimes^d M}{\longrightarrow} & \tensor^d\Sym^{m+1} V \\
  \scriptstyle{\Pi_{d,m}}\ \downarrow  &                   & \downarrow\ \scriptstyle{\Pi_{d,m+1}}   \\
\Sym^d\Sym^m V & \stackrel{\Sym^d M}{\longrightarrow} & \Sym^d\Sym^{m+1} V .
\end{array}
$$
Therefore, it is sufficient to prove that 
\begin{equation}\label{eq:wT}
 \big(\tensor^d M\big) (w_T) = w_{T'} .
\end{equation}

For showing this,
we first note that for $p\in\tensor^m V$ we have 
\begin{equation}\label{le:MM}
 \Pi_{m+1}(p \ot e_1) = \Pi_{m+1}(\Pi_m(p) \ot e_1) = M (\Pi_m(p)) .
\end{equation}
We claim now that for $w\in\tensor^d\tensor^m V$, we have
\begin{equation}\label{eq:MClaim}
 (\tensor^d \Pi_{m+1}) \big(\rho ( w \ot e_1^{\ot d}) \big) = \big(\tensor^d (M\circ \Pi_m)\big) (w) .
\end{equation}
For verifying this, we may assume that 
$w=w_1\ot\cdots\ot w_d$ with $w_i\in\tensor^m V$. 
By the definition of~$\rho$, we have 
$\rho(w \ot e_1^{\ot d}) = w_1\ot e_1 \ot \cdots\ot w_d\ot e_1$,  
and hence, using~\eqref{le:MM} for the second equality, 
$$
 \big(\tensor^d\Pi_{m+1}\big) (\rho (w\ot e_1^{\otimes d})) = \tensor_{i=1}^d \Pi_{m+1} (w_i\ot e_1)
 = \tensor_{i=1}^d M (\Pi_m(w_i)) 
$$
and \eqref{eq:MClaim} follows.

Using~\eqref{eq:rhof} and~\eqref{eq:MClaim} with $w=\pi v_\mu$, 
we argue now as follows:  
$$
 w_{T'} = \big(\tensor^d \Pi_{m+1} \big) (\pi' v_{\mu'}) =
 \big(\tensor^d \Pi_{m+1} \big) \rho \big( (\pi v_\mu) \ot e_1^{\ot d} \big) = 
 \big(\tensor^d (M\circ \Pi_m)\big)(\pi v_\mu) = 
 \big(\tensor^d M\big)(w_T) ,
$$
which shows \eqref{eq:wT} and completes the proof. 
\end{proof}

Theorem~\ref{th:lifting-vT} now easily implies the stability property of plethysms 
with respect to inner degree lifting. 

\begin{proposition}\phantomsection\label{pro:lifting}
\begin{enumerate}
\item Suppose $\mu\vdash md$ is such that $\mu_2 \le m$ and let $n\ge m$. 
Then the inner degree lifting $\kappa^d_{m,n}$ 
defines an isomorphism
\begin{equation*}\label{eq:InnerLift}
\HWV_{\mu}(\Sym^\out \Sym^\imm V) \to \HWV_{\mu^{\sharp dn}}(\Sym^\out \Sym^{n} V) ,\ 
 f\mapsto \kappa^d_{m,n}(f) .
\end{equation*}

\item Suppose that $\la\vdash nd$ satisfies $\la_2\le m$ and $\la_2 + |\bar{\la}| \le md$. 
Then every highest weight vector of weight~$\la$ in 
$\Sym^d\Sym^n V$  is obtained by lifting a highest weight vector in $\Sym^d\Sym^{m} V$ 
of weight $\mu$, where $\mu\vdash md$ such that $\bar{\mu} = \bar{\la}$.
\end{enumerate}
\end{proposition}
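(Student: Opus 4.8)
The plan is to establish the two parts in turn, using the tableau description of highest weight vectors (Proposition~\ref{pro:HWVSS-gen}) and the explicit action of the inner degree lifting on the generators $v_T$ (Theorem~\ref{th:lifting-vT}), together with the combinatorial contraction rule (Theorem~\ref{eq:comb_contraction}). For part (1), injectivity of $\kappa^d_{m,n}$ on $\HWV_\mu$ is immediate from Lemma~\ref{le:dMmn}(1), and Lemma~\ref{le:dMmn}(2) shows it lands in $\HWV_{\mu+(d(n-m))}$. The content is \emph{surjectivity}. Here is where I would use the hypothesis $\mu_2\le m$. By Proposition~\ref{pro:HWVSS-gen}, the target space $\HWV_{\mu+(d(n-m))}(\Sym^d\Sym^nV)$ is spanned by the $v_{T'}$ where $T'$ ranges over equivalence classes of tableaux of shape $\mu':=\mu+(d(n-m))$ with content $d\times n$, no letter repeated in a column. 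I claim that because $\mu_2\le m < n$, the second row of $\mu'$ is strictly shorter than the first, so that the boxes of $\mu'$ lying strictly below the first row form exactly the diagram $\bar{\mu'}=\bar\mu$ (which is also $\subseteq$ a shape with at most $m$ columns). Any such $T'$ has, in its first row, $\mu'_1 = \mu_1 + d(n-m)$ boxes filled with $d$ letters; since no letter repeats in a column and the first row contains all $d$ letters' ``column-tops,'' I would argue that after the reorderings allowed by $\simeq$ one can pull out $n-m$ copies of each letter from the first row so that $T'$ is equivalent to a tableau obtained from some tableau $T$ of shape $\mu$ with content $d\times m$ by the operation of Theorem~\ref{th:lifting-vT}. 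The delicate point is that this requires each of the $d$ letters to occur at least $n-m$ times in the first row of $T'$, and I expect this to follow from the constraint $\mu_2 \le m$ via a counting argument on how often a letter can appear outside the first row (at most $\mu_2 \le m$ times in each of its occupied rows below, but more relevantly: a letter occupies $n$ boxes total, at most $\ell(\bar\mu)$ of which... ) — this pigeonhole is the technical heart of part (1) and the main obstacle. Once every $v_{T'}$ is in the image, surjectivity and hence the isomorphism follows, and Theorem~\ref{th:lifting-vT} identifies the map on generators explicitly.

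For part (2), the hypotheses are $\la_2\le m$ and $\la_2 + |\bar\la|\le md$. I would first choose $\mu\vdash md$ with $\bar\mu=\bar\la$; this forces $\mu_1 = md - |\bar\la|$, and the condition $\la_2 + |\bar\la| \le md$ exactly guarantees $\mu_1 \ge \la_2 = \mu_2$, so $\mu$ is indeed a partition, and moreover $\mu_2 = \la_2 \le m$ so part (1) applies to $\mu$. Since $\bar\mu = \bar\la$ and both $\mu,\la$ have first row at least as long as the second, we have $\la = \mu + (d(n-m))$ precisely when $\la_1 - \mu_1 = d(n-m)$; and indeed $\la_1 = nd - |\bar\la|$ while $\mu_1 = md - |\bar\la|$, so $\la_1 - \mu_1 = (n-m)d$ as required. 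Thus $\la = \mu + (d(n-m))$ with $\mu_2 \le m$, and part (1) gives that $\kappa^d_{m,n}\colon \HWV_\mu(\Sym^d\Sym^mV)\to\HWV_\la(\Sym^d\Sym^nV)$ is an isomorphism; in particular every highest weight vector of weight $\la$ is the lift of one of weight $\mu$, which is the claim.

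The step I expect to be hardest is the surjectivity counting argument in part (1): showing that the hypothesis $\mu_2\le m$ forces, for every relevant tableau $T'$ of shape $\mu'=\mu+(d(n-m))$, each letter to appear at least $n-m$ times within the first row, so that $T'$ is equivalent to a ``lifted'' tableau. I would approach this by noting that a letter occupies exactly $n$ boxes of $\mu'$, that no letter repeats within a column, and that the columns of $\mu'$ of length $\ge 2$ are confined to the first $\mu_2 \le m$ columns; hence a given letter has at most $\mu_2 \le m$ of its $n$ boxes in columns of height $\ge 2$, leaving at least $n - m > n - m$... precisely $\ge n-m$... boxes (here I will need to be slightly careful with the exact inequality, possibly requiring $n > m$ or adjusting by one) that lie in singleton columns, all of which are in the first row. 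Modulo this arithmetic, combined with Lemma~\ref{cla:wedgevanish} and Definition~\ref{def:tabl-class} to handle the reordering of singleton columns, the argument closes. Everything else — injectivity, the weight bookkeeping, and the reduction of part (2) to part (1) — is routine once this combinatorial lemma is in hand.
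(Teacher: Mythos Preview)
Your proposal is correct and follows essentially the same approach as the paper. The counting argument you flag as ``hardest'' is in fact clean and requires no adjustment: a letter appears exactly $n$ times in $T'$, the non-singleton columns of $\mu'=\mu+(d(n-m))$ number exactly $\mu'_2=\mu_2\le m$, and since no letter repeats in a column the letter appears at most $\mu_2$ times outside singleton columns, hence at least $n-\mu_2\ge n-m$ times in singleton columns---so your hesitation about ``possibly requiring $n>m$ or adjusting by one'' is unnecessary, and Theorem~\ref{eq:comb_contraction} is not needed at all.
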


\begin{proof}
(1) We need to show the surjectivity of the map. 
Let $T'$ be a tableau of shape $\mu^{\sharp dn}$ with content $d\times n$
such that no letter appears more than once in a column. 
Then each of the $d$ letters appears at least 
$n-\mu_2\ge n-m$ times in singleton columns. 
Hence 
$T'$ is obtained from a tableau~$T$ of shape $\mu$ with content $d\times m$ 
as in Theorem~\ref{th:lifting-vT} (the order of the letters in the singleton columns is irrelevant, 
see Definition~\ref{def:tabl-class}). 
Since $\kappa^d_{m,n}(v_T) = v_{T'}$ by Theorem~\ref{th:lifting-vT},
the assertion follows with Proposition~\ref{pro:HWVSS-gen}.

(2) Note that $\la_2 + |\bar{\la}| \le md$ is the number of boxes of $\la$ that appear in columns 
that are not singleton columns. We can therefore shorten the given~$\la$ to a partition 
$\mu\vdash md$ by removing singleton columns. 
Then $\bar{\mu}= \bar{\la}$ and 
$\la = \mu^{\sharp dn}$ and we conclude with part one. 
\end{proof}



\subsection{Outer degree lifting} 
\label{se:deglift}

We keep the notation from before.
Let $k\le d$. By multiplying with the $(d-k)$th power of $e_1^m$, 
we obtain the injective linear map
\begin{equation}\label{eq:deglift}
\Sym^k \Sym^m V \to \Sym^d \Sym^m V,\, 
 f\mapsto (e_1^m)^{d-k} \cdot f .
\end{equation}
Since $\GL(V)$ acts on $\Sym V$ by algebra automorphisms, 
it follows that \eqref{eq:deglift} maps a highest weight vector 
of weight $\nu\vdash mk$ to a highest weight vector of weight 
$\nu^{\sharp dm}$. 


\begin{lemma}\label{le:deglift}
Let $T$ be a tableau of shape $\nu$ with content $k\times m$  
and let the tableau $T''$ of shape $\nu^{\sharp dm}$ be obtained from $T$ 
by adding $m$ copies of $d-k$ new letters to the first row (in some order). 
Then $v_{T''}$ is obtained as the image of $v_T$ under the map~\eqref{eq:deglift}
\end{lemma}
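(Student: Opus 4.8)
\textbf{Proof plan for Lemma~\ref{le:deglift}.}

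The plan is to mimic the strategy of the proof of Theorem~\ref{th:lifting-vT}, but now with the outer degree being raised rather than the inner degree. The key structural fact is that the symmetrizer over the wreath product factors as $\Sigma_{d,m} = \Pi_{d,m}\circ(\tensor^d\Pi_m)$, where $\Pi_{d,m}$ symmetrizes over the block-permuting copy of $\aS_d$; this lets me separate the contribution of the inner symmetrizations from the outer one. I will write $T = T_\nu(\pi)$ for a suitable $\pi\in\aS_{km}$, so that $v_T = \Sigma_{k,m}(\pi v_\nu)$. I then pick a permutation $\pi''\in\aS_{dm}$ that extends $\pi$ by appending $d-k$ fresh blocks of size $m$, filling the positions of those blocks in the standard way and placing the $d-k$ new letters (i.e.\ $X_1^{\ot m}$ in each) in the first row; by the construction of $T''$ this gives $T'' \simeq T_{\nu+((d-k)m)}(\pi'')$, up to reordering singleton columns, which by Lemma~\ref{cla:wedgevanish}(2) does not change $v_{T''}$.

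First I would record that $v_{\nu+((d-k)m)} = v_\nu \ot (X_1^{\wedge 1})^{\ot?}$ --- more precisely, since the $d-k$ new columns each have length one and carry the entry $X_1$, the highest weight vector $v_{\nu''}$ for $\nu'' := \nu + ((d-k)m)$ equals $v_\nu \ot X_1^{\ot (d-k)m}$ after suitably interleaving; here I must be careful about where in the tensor the new factors land, but this is governed entirely by $\pi''$ and $T^{\textup{std}}_{\nu''}$, exactly as in the inner-lifting proof. Next I would compute $\Sigma_{d,m}(\pi'' v_{\nu''})$. Because the new blocks correspond to the letter $X_1$ sitting in a singleton column, symmetrizing over the $m!$ permutations inside each new block leaves $X_1^{\ot m}$ invariant. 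What remains is the symmetrization over the block-permuting $\aS_d$ together with the original $\Sigma_{k,m}$ acting on the first $k$ blocks. A short computation --- essentially the same "sum over cosets" manipulation as in Lemma~\ref{le:diagramm} --- shows that this equals the symmetric product $(X_1^m)^{d-k}\cdot v_T$ in $\Sym^d\Sym^m V$, which is precisely the image of $v_T$ under the map~\eqref{eq:deglift}.

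The step I expect to be the main obstacle is the bookkeeping: verifying that $T''\simeq T_{\nu''}(\pi'')$ for the $\pi''$ I construct, and that the wreath-product symmetrization of $\pi'' v_{\nu''}$ genuinely collapses to $(X_1^m)^{d-k}\cdot v_T$ rather than to some scalar multiple. The scalar must come out to exactly $1$ (unlike the inner-lifting case, where a normalization factor $(m+1)^d$ appeared in Claim~1 of Theorem~\ref{th:lift-ip-preserv}); I would check this by counting the sizes of the relevant cosets of $\aS_k\wr\aS_m$ inside $\aS_d\wr\aS_m$ and confirming that the normalizations $\tfrac{1}{d!\,m!^d}$ and $\tfrac{1}{k!\,m!^k}$ in the definitions of $\Sigma_{d,m}$ and $\Sigma_{k,m}$, together with the definition of the symmetric product in~\eqref{eq:symm-prod}, are exactly compatible. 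Once the coset combinatorics is pinned down, the rest is formal, and the argument can be reduced --- as in Theorem~\ref{th:lifting-vT} --- to the case $d = k+1$ by iterating, which keeps the bookkeeping manageable.
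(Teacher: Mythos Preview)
Your plan is correct, and in fact carries through cleanly: writing $\nu'':=\nu+((d-k)m)$, one has $v_{\nu''}=v_\nu\ot X_1^{\ot (d-k)m}$ since the new columns are all singletons, and with $\pi''$ acting as $\pi$ on the first $km$ positions and as the identity on the last $(d-k)m$ positions one gets $T''\simeq T_{\nu''}(\pi'')$. After reducing to $d=k+1$, the factorization $\Sigma_{k+1,m}=\Pi_{k+1,m}\circ(\tensor^{k+1}\Pi_m)$ gives $v_{T''}=\Pi_{k+1,m}(w_T\ot X_1^m)$ with $w_T=(\tensor^k\Pi_m)(\pi v_\nu)$, and the coset decomposition $\aS_{k+1}=\bigcup_{i}\tau_i\aS_k$ yields $\Pi_{k+1,m}(w_T\ot X_1^m)=\Pi_{k+1,m}\big((\Pi_{k,m}w_T)\ot X_1^m\big)=X_1^m\cdot v_T$, with the normalizations matching exactly and no stray scalar.

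However, this is \emph{not} the route the paper takes. The paper also reduces to $d=k+1$, but then proves $v_{T''}=X_1^m\cdot v_T$ by pairing both sides against arbitrary rank-one tensors $t=t_1\ot\cdots\ot t_{k+1}$ with $t_i\in\tensor^m V$, using the contraction formula of Theorem~\ref{eq:comb_contraction}. Concretely, it observes that every bijection $\vartheta''\colon\nu''\to[(k+1)m]$ respecting~$T''$ decomposes uniquely as a bijection $\vartheta\colon\nu\to[km]$ respecting~$T$, a choice of which block $B_i$ receives the new letter (via a block swap $\tau_i$), and a permutation $\pi$ of $B_{k+1}$; summing over these recovers exactly the expansion of the symmetric product $X_1^m\cdot v_T$. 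Your approach stays closer to the algebraic proof of Theorem~\ref{th:lifting-vT} and is arguably more direct; the paper's approach instead exercises the combinatorial evaluation machinery, which makes the connection to the symmetric product formula~\eqref{eq:symm-prod} visible at the level of summands.
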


\begin{proof} 
The lifting from degree $k$ to~$d$ can be obtained as a composition of 
liftings that increase the degree by one only.  Hence we may assume without loss of generality that $d=k+1$. 
So we assume that $T''$ is obtained from $T$ by adding $m$ copies of a new letter to the first row. 
In order to show that $v_{T''} = e_1^m\cdot v_T$, it is suffices to prove that 
\begin{equation}\label{eq:vT''-eval}
 \langle v_{T''}, \Phi \rangle = \langle e_1^m \cdot v_T, \Phi\rangle 
\end{equation}
for all tensors $\Phi=\varphi_1\otimes \cdots \otimes \varphi_{k+1}$,
where $\varphi_i\in \otimes^m V^*$ is of rank one.  
We shall evaluate the inner products with Theorem~\ref{eq:comb_contraction}. 
For this, consider the decomposition
$[(k+1)m] = B_1 \cup\ldots\cup B_{k+1}$ into the blocks~$B_i$ 
of size $m$ as in the beginning of Section~\ref{se:plethysms}. 
For $1\le i\le k+1$, let $\tau_i\in\aS_{(k+1)m}$ be the permutation that 
exchanges the blocks~$B_i$ and~$B_{k+1}$ and preserves the order 
within the blocks (note $\tau_{k+1}=\id$). 

Suppose that $\vartheta\colon\nu\to [km]$ respects~$T$ and 
put $\nu'':=\nu^{\sharp dm}$.
We extend $\vartheta$ to a map 
$\tilde{\vartheta}\colon\nu''\to [km+m]$ 
by sending the $m$ boxes of $T''\setminus T$ bijectively to the
numbers in the block $B_{k+1}$.
Clearly $\tilde\vartheta$ respects $T''$. 
Let $1\le i \le k+1$ and $\pi$ be a permutation of $B_{k+1}$.  
Composing~$\tilde{\vartheta}$ with $\tau_i \pi$,  
we obtain a map 
$\vartheta''\colon\nu''\to [km+m]$ 
respecting $T''$. Moreover, any assignment $\vartheta''$ respecting $T''$ 
arises this way from uniquely determined $\vartheta$, $i$, and $\pi$. 

Taking this observation into account, we deduce from 
Theorem~\ref{eq:comb_contraction} after some thought that: 
\begin{equation*}
\begin{split}
 \langle v_{T''}, \Phi \rangle &= 
  \frac{1}{k+1}\Big(
 \langle v_T, \varphi_1\ot\cdots\ot \varphi_k \rangle \cdot \langle  e_1^m, \varphi_{k+1}\rangle  \\
  &+ \sum_{i=1}^{k} \langle v_T, \varphi_1\ot\cdots\ot \varphi_{i-1}\ot \varphi_{k+1}\ot \varphi_{i+1}\cdots\ot\cdots\ot \varphi_k \rangle 
    \cdot \langle  e_1^m, \varphi_{i}\rangle \Big) .
\end{split}
\end{equation*}
(Note that the first summand corresponds to $i=k+1$.)
This equals $\langle e_1^m\cdot v_T , \Phi \rangle$ 
by the definition of the symmetric product. 
\end{proof}

The outer degree lifting~\eqref{eq:deglift} behaves nicely with respect to highest weight vectors. 

\begin{proposition}\phantomsection\label{le:newlift}
\begin{enumerate}
\item Suppose $\nu\vdash mk$ such that 
$\nu_2 + |\bar{\nu}|\le k$ and let $d\ge k$. 
The lifting \eqref{eq:deglift} defines an isomorphism
\begin{equation*}
 \HWV_{\nu}(\Sym^k \Sym^m V) \to \HWV_{\nu^{\sharp dm}}(\Sym^d \Sym^m V) .
\end{equation*}

\item Suppose that $f$ is a highest weight vector in $\Sym^d\Sym^m V$ of weight 
$\mu\vdash dm$ and assume that $\mu_2 +|\bar{\mu}| \le k\le d$ for some $k$. 
Then $\mu=\nu^{\sharp dm}$ for some $\nu\vdash mk$ and 
$f = (e_1^m)^{d-k} \cdot g$ for some $g\in \HWV_{\nu}(\Sym^k\Sym^m V)$. 
 
\end{enumerate}
\end{proposition}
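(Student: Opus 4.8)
The plan is to prove part~(1) directly with the tableau generators $v_T$ and Lemma~\ref{le:deglift}, and then to deduce part~(2) as a short corollary.

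For part~(1): by~\eqref{eq:deglift} the lifting is injective and carries $\HWV_\nu(\Sym^k\Sym^m V)$ into $\HWV_{\nu''}(\Sym^d\Sym^m V)$, where $\nu'':=\nu+((d-k)m)$, so it remains to prove surjectivity. By Proposition~\ref{pro:HWVSS-gen} the target is spanned by the vectors $v_{T''}$ with $T''$ a tableau of shape $\nu''$ and content $d\times m$ having no repeated letter in any column, and it suffices to place each such $v_{T''}$ in the image. First I would note that adding boxes only to the first row leaves $\nu''_2=\nu_2$ and $\bar{\nu''}=\bar\nu$ unchanged, so the number of boxes of $\nu''$ lying in columns of height $\geq 2$ equals $\nu''_2+|\bar{\nu''}|=\nu_2+|\bar\nu|\leq k$. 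Each such box carries a single letter and no letter repeats in a column, so at most $k$ of the $d$ letters of $T''$ meet these non-singleton columns; hence at least $d-k$ letters occur only in singleton columns, each such letter filling exactly $m$ distinct singleton columns. Picking $d-k$ of them, all of their $(d-k)m$ boxes sit in the first row, and because $v_{T''}$ depends only on the equivalence class of $T''$ (Lemma~\ref{cla:wedgevanish}(2), Definition~\ref{def:tabl-class}) I may reorder singleton columns so that those boxes become the last $(d-k)m$ columns. Deleting them leaves a tableau $T$ of shape $\nu$ with content $k\times m$, still with no repeated letter in a column, and $T''$ is recovered from $T$ exactly as in the hypothesis of Lemma~\ref{le:deglift}. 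That lemma then gives $v_{T''}=(X_1^m)^{d-k}\cdot v_T$ with $v_T\in\HWV_\nu(\Sym^k\Sym^m V)$, so $v_{T''}$ is in the image. Surjectivity follows, and together with injectivity we obtain the claimed isomorphism.

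For part~(2): given $f\in\HWV_\mu(\Sym^d\Sym^m V)$ with $\mu_2+|\bar\mu|\leq k\leq d$, set $\nu:=(\mu_1-(d-k)m,\mu_2,\mu_3,\ldots)$. Then $|\nu|=dm-(d-k)m=mk$, and $\nu_1-\nu_2=\mu_1-(d-k)m-\mu_2=mk-(\mu_2+|\bar\mu|)\geq mk-k\geq 0$, so $\nu$ is a partition of $mk$ with $\nu_2=\mu_2$ and $\bar\nu=\bar\mu$; in particular $\nu_2+|\bar\nu|\leq k$, so part~(1) applies to $\nu$. Hence the lifting $\HWV_\nu(\Sym^k\Sym^m V)\to\HWV_{\nu+((d-k)m)}(\Sym^d\Sym^m V)=\HWV_\mu(\Sym^d\Sym^m V)$ is an isomorphism, and taking $g$ to be the preimage of $f$ gives $f=(X_1^m)^{d-k}\cdot g$ with $g\in\HWV_\nu(\Sym^k\Sym^m V)$, while $\mu=\nu+((d-k)m)$ by construction.

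The step I expect to be the main obstacle is the combinatorial bookkeeping inside part~(1): one must verify carefully that the inequality ``at most $k$ letters meet the non-singleton columns'' genuinely lets one strip off $d-k$ whole letters' worth of singleton columns and land on shape \emph{exactly} $\nu$, and that the residual tableau $T$ satisfies the hypotheses of Lemma~\ref{le:deglift} (content $k\times m$, no letter twice in a column). Once the freedom to reorder singleton columns is invoked, the remaining verifications are routine.
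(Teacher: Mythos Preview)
Your proposal is correct and follows essentially the same approach as the paper: for part~(1) you use Proposition~\ref{pro:HWVSS-gen} to reduce surjectivity to the generators $v_{T''}$, count that at most $\nu_2+|\bar\nu|\le k$ letters can touch non-singleton columns so that at least $d-k$ letters lie entirely in singleton columns, strip those off, and invoke Lemma~\ref{le:deglift}; this is exactly the paper's argument. For part~(2) you give a slightly more direct variant than the paper---you define $\nu$ arithmetically and verify $\nu_1\ge\nu_2$ by the computation $\nu_1-\nu_2=km-(\mu_2+|\bar\mu|)\ge k(m-1)\ge 0$, whereas the paper first argues via the existence of a tableau of shape~$\mu$ that $\mu$ has the form $\nu+((d-k)m)$---but the two routes are equivalent and equally short.
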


\begin{proof} 
(1) We need to show the surjectivity. 
Let $T''$ be a tableau of shape $\mu :=\nu^{\sharp dm}$ with content $d\times m$. 
Note that $\nu_2 + |\bar{\nu}|$ is the number of boxes in $\mu$ that are not singleton columns.
Hence there are least
$d - (\nu_2 + |\bar{\nu}|) \ge d - k$ many letters appearing in singleton columns of $T''$ only.   
Removing the $(d-k)m$ many boxes with these letters from the first row of $\mu$ leads to 
a tableau $T$ of shape $\nu$ with content $k\times m$. We conclude with 
Lemma~\ref{le:deglift} and Proposition~\ref{pro:HWVSS-gen}.

(2) There exists a tableau $T''$ of shape $\mu$ with content $d\times m$ 
by assumption and Proposition~\ref{pro:HWVSS-gen}.
As before, we see that there are at least $d-k$ many letters 
appearing in singleton columns of $T''$ only.  
In particular, we have $\mu =\nu^{\sharp dm}$ for some $\nu\vdash mk$. 
Now we apply part one.
\end{proof}

We remark that the stability of plethysm in Proposition~\ref{le:newlift}(1),
for the slightly weaker condition $|\bar{\nu}| \le k \le d$,
was first shown in~\cite{mani:98} with a geometric method. 

\section{Proof of Theorem~\ref{th:main}}

\subsection{Small degrees or extremely long first rows}\label{sec:SmDeg}

In the following we put $V:=\C^{n\times n}$. 
To warm up, we first show that $(n)$ occurs in $\IC[\Det_n]_1$.
Indeed, we have $X_{1}^n \in\Det_n$ and $e_1^n \in \Sym^n V$ is 
a highest weight vector of weight $(n)$ such that 
$\langle e_1^n, X_1^n \rangle = 1$.



\begin{proof}[Proof of Proposition~\ref{pro:I-degreebound}]
We assume that $\la \vdash nd$ is such that there exists a positive integer~$m$ satisfying 
$|\bar{\la}| \le md$ and $md^2\le n$. 
Further, let $h\in\Sym^d\Sym^n V$ be any highest weight vector of weight $\la$.
We need to prove that there exists $q\in\Det_n$ such that
$\langle h,q^d \rangle \ne 0$. 
The case $d=1$ is trivial as noted before. So suppose $d\ge 2$. 

We have $\la_2 \le |\bar{\la}| \le md$ and 
$\la_2 + |\bar{\la}| \le 2 |\bar{\la}| \le 2md \le md\cdot d$.
Therefore, we are in the setting of Proposition~\ref{pro:lifting}(2) 
with respect to the lifting (note that the inner degree on the left-hand side is $md$ and not $d$) 
$$
 \Sym^d\Sym^{md} V \to \Sym^d\Sym^{n} V .
$$ 
We conclude that $h$ arises by an inner degree lifting from 
a highest weight vector~$f\in\Sym^d\Sym^{md} V$ of weight $\la$; 
so we have $h=\kappa^d_{md,n}(f)$. 
Recall the linear automorphism  
$\Delta:=\Delta_{md,n}\colon\Sym^{md} V^* \to \Sym^{md} V^*$ 
from Lemma~\ref{le:newscaling}. 

We view $f$ as a degree~$d$ homogeneous polynomial map $\Sym^{md} V^* \to \C$ and 
apply Proposition~\ref{cor:nonvanishing} to the composition 
$ \Sym^{md} V^* \to \C,\, p\mapsto \langle f, \Delta(p)^d\rangle$.
Hence there is a power sum 
$$
 p = \varphi_1^{md}+\cdots+\varphi_d^{md} 
$$
with at most $d$ terms 
such that $\langle f, \Delta(p)^d \rangle \ne 0$, 
for some $\varphi_1,\ldots,\varphi_{d}\in V^*$. 
We apply now Theorem~\ref{th:lift-ip-preserv} with  $q:=X_1^{n-md}\, p$  
and obtain
$$
  \langle h, q^{d} \rangle = \langle \kappa_{m,n}^d(f), q^{d} \rangle = 
  \langle f, M^*(q)^{d} \rangle =\langle f, \Delta(p)^{d} \rangle \ne 0 .
$$
(Note that $M^*(q) = \Delta(p)$ by the definition of $\Delta$, see~Lemma~\ref{le:newscaling}.) 
By Theorem~\ref{eq:I-containlinforms} we have 
$q\in \Det_n$ since $n\ge md\cdot d$. 
The assertion follows.
\end{proof}

While Proposition~\ref{pro:I-degreebound} is meant to deal with the case 
of partitions $\la \vdash nd$ where $d$ is small, the 
next result deals with the extreme situation, where the body $\bar{\la}$ is very small compared 
with the size of $\la$. 
(In this situation, the splitting strategy in the proof of Proposition~\ref{thm:newoccurrence} 
below would fail.) 

\begin{proposition}\label{thm:verylongfirstrow}
Let $\la\vdash nd$ and assume there exist positive integers $s,m$ such that 
$\ell(\la) \le m^2$, $\la_2 \le s$, $m^2s^2 \le n$, and $m^2 s \le d$. 
Then every highest weight vector $h\in\Sym^d\Sym^n V$ of weight $\la$, 
viewed as a degree $d$ polynomial function on  $\Sym^n V^*$,
does not vanish on~$\Det_n$.   
\end{proposition}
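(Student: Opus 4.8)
The plan is to follow the template of Proposition~\ref{pro:degreebound}, but now $\la$ must be ``shrunk'' in \emph{both} the inner and the outer degree before a padded power sum of the shape demanded by Theorem~\ref{eq:containlinforms} comes into reach. Let $h\in\HWV_\la(\Sym^d\Sym^n V)$ be the symmetric tensor corresponding to $H$ via~\eqref{eq:always-use}; we must produce a point of $\Det_n$ at which $H$ does not vanish. I would first record two elementary estimates: since $\ell(\la)\le m^2$, the body $\bar\la$ has at most $m^2-1$ rows, each of length at most $\la_2\le s$, so $\la_2+|\bar\la|\le m^2 s$; and since $s\ge 1$, the hypothesis $m^2 s\le d$ gives $m^2\le d$, hence $m^2 s\le sd$.

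Next, perform an \emph{inner degree un-lifting}: by Proposition~\ref{pro:lifting}(2), applied with target inner degree $s$ (using $\la_2\le s$ and $\la_2+|\bar\la|\le m^2 s\le sd$), we may write $h=\kappa^d_{s,n}(h_1)$ for a highest weight vector $h_1\in\HWV_{\mu_1}(\Sym^d\Sym^s V)$ with $\bar{\mu_1}=\bar\la$, and $\mu_{1,2}=\la_2$ since $\la=\mu_1+(d(n-s))$. Then perform an \emph{outer degree un-lifting}: put $k:=m^2 s$, so that $\mu_{1,2}+|\bar{\mu_1}|=\la_2+|\bar\la|\le k\le d$; by Proposition~\ref{le:newlift}(2) we get $h_1=(X_1^s)^{d-k}\cdot h_2$ in the symmetric algebra of $\Sym^s V$, with $h_2\in\HWV_{\nu_1}(\Sym^k\Sym^s V)$ and $\nu_1\vdash sk$. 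Both un-lifting maps are injective, so $h\ne 0$ forces $h_1\ne 0$ and $h_2\ne 0$.

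Now I would choose the evaluation point. Let $H_2\in\Sym^k(\Sym^s V)^*$ correspond to $h_2$, and recall the scaling isomorphism $S_{s,n}\colon\Sym^s V\to\Sym^s V$ of~\eqref{eq:scaling}. The function $w\mapsto H_2(S_{s,n}^{-1}w)$ is a nonzero element of $\Sym^k(\Sym^s V)^*$, so by Corollary~\ref{cor:nonvanishing} it is nonzero at $u_1^s+\cdots+u_k^s$ for Zariski almost all $(u_1,\dots,u_k)\in V^k$; since $(u_1,\dots,u_k)\mapsto\sum_i(u_i)_1^s$ is also not identically zero on $V^k$ (take $u_1=X_1$, the rest $0$), we may fix $(u_1,\dots,u_k)$ satisfying both conditions. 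Set $p:=S_{s,n}^{-1}(u_1^s+\cdots+u_k^s)\in\Sym^s V$ and
\[
 q:=X_1^{n-s}\,(u_1^s+\cdots+u_k^s)=X_1^{n-s}\,S_{s,n}(p)\ \in\ \Sym^n V.
\]
By Theorem~\ref{th:lift-ip-preserv} (with $r=0$) we get $\langle h,q^{\ot d}\rangle=\langle h_1,p^{\ot d}\rangle$; and since $h_1=(X_1^s)^{d-k}\cdot h_2$ while $q^{\ot d}$ is already symmetric, this equals $\langle X_1^s,p\rangle^{d-k}\,\langle h_2,p^{\ot k}\rangle$. Here $\langle h_2,p^{\ot k}\rangle=H_2(S_{s,n}^{-1}(u_1^s+\cdots+u_k^s))\ne 0$, and $\langle X_1^s,p\rangle=\langle X_1^{\ot s},S_{s,n}^{-1}(\sum_i u_i^{\ot s})\rangle=\sum_i(u_i)_1^s\ne 0$, using that $S_{s,n}$ is self-adjoint in the orthogonal monomial basis and fixes $X_1^{\ot s}$. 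Hence $H(q)=\langle h,q^{\ot d}\rangle\ne 0$. Finally $q\in\Det_n$ by Theorem~\ref{eq:containlinforms}, applied with inner power $s$, with $k$ summands, and $\ell=X_1$, whose hypothesis $n\ge sk=m^2 s^2$ is precisely our assumption. Thus $H$ does not vanish on $\Det_n$.

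The one genuinely delicate point is the bookkeeping with $S_{s,n}$: one must apply Corollary~\ref{cor:nonvanishing} to $H_2\circ S_{s,n}^{-1}$ rather than to $H_2$ itself, so that after the inner re-lifting the point $q$ is an \emph{honest} padded power sum $X_1^{n-s}(u_1^s+\cdots+u_k^s)$ and not $X_1^{n-s}$ times a rescaled power sum, which Theorem~\ref{eq:containlinforms} could not handle. One also has to carry along the factor $\langle X_1^s,p\rangle^{d-k}$ contributed by the outer-degree padding, which is why the auxiliary genericity condition $\sum_i(u_i)_1^s\ne 0$ is imposed.
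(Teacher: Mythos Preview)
Your proof is correct and follows essentially the same route as the paper's own argument: un-lift in the inner degree via Proposition~\ref{pro:lifting}(2), then un-lift in the outer degree via Proposition~\ref{le:newlift}(2) with $k=m^2s$, evaluate at a padded power sum $q=X_1^{n-s}(u_1^s+\cdots+u_k^s)$ using Theorem~\ref{th:lift-ip-preserv}, and invoke Theorem~\ref{eq:containlinforms} via $n\ge sk=m^2s^2$. Your treatment is in fact slightly more explicit than the paper's in two places: you spell out that Corollary~\ref{cor:nonvanishing} must be applied to $H_2\circ S_{s,n}^{-1}$ rather than to $H_2$, and you compute $\langle X_1^s,p\rangle=\sum_i(u_i)_1^s$ directly to justify the extra genericity condition, whereas the paper simply asserts that both nonvanishing conditions can be met simultaneously.
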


\begin{proof}
We first consider the inner degree lifting 
$\Sym^d\Sym^s V \to \Sym^d\Sym^n V$, 
see~\eqref{eq:SMe1}. Since  
$$ 
\la_2\le s, \quad
\la_2 + |\bar{\la}| \le \la_2 + (\ell(\la)-1) \la_2 =  \ell(\la) \la_2 
 \le m^2 s \le d \le ds,
$$  
the assumptions of Proposition~\ref{pro:lifting}(2) 
are satisfied and we conclude that $h$ arises by 
lifting some $f\in\HWV_{\mu}(\Sym^d\Sym^{s} V)$ 
with $\mu\vdash ds$ and $\bar{\mu} = \bar{\la}$. 

By assumption, $k:= m^2 s \le d$. 
We continue with the outer degree lifting map 
$\Sym^k\Sym^s V \to \Sym^d\Sym^s V$, see~\eqref{eq:deglift}.
We have, using the above,  
$$
 \mu_2 + |\bar{\mu}| = \la_2 + |\bar{\la}| \le m^2 s  = k , 
$$ 
hence the assumptions of Proposition~\ref{le:newlift}(2) 
are satisfied and we have $f= (e_1^s)^{d-k}\cdot g$ 
for some highest weight vector $g\in\Sym^{k}\Sym^s V$ of 
weight~$\nu\vdash ks$ such that
$\bar{\nu}=\bar{\mu}$. 

Recall the linear automorphism  
$\Delta:=\Delta_{s,n}\colon\Sym^{s} V^* \to \Sym^{s} V^*$ 
from Lemma~\ref{le:newscaling}
and consider power sums 
$p = \varphi_1^{s} + \cdots + \varphi_{k}^{s}$.
By Proposition~\ref{cor:nonvanishing}, there are $\varphi_1,\ldots,\varphi_k \in V^*$ 
such that $\langle g, \Delta(p)^k \rangle \ne 0$ and 
$\langle e_1^s, \Delta(p)\rangle \ne 0$. 
Applying Theorem~\ref{th:lift-ip-preserv}, we obtain with 
$q :=X_1^{n-s}\, p$ 
that 
$$
           \langle h , q^{d} \rangle 
        = \langle f , M^*(q)^{d} \rangle 
        = \langle f , \Delta(p)^{d} \rangle 
        = \langle e_1^s, \Delta(p)\rangle^{d-k}\, \langle g, \Delta(p)^{k} \rangle\ne 0 .
$$
On the other hand,  by Theorem~\ref{eq:I-containlinforms},  
the padded polynomial 
$X_1^{n-s} \big( \varphi_1^{s} + \cdots + \varphi_{k}^{s} \big)$ 
is contained in~$\Det_n$, 
as $n\ge sk$ by assumption. 
Therefore, $h$ does not vanish on~$\Det_n$.
\end{proof}

\subsection{Building blocks and splitting technique}\label{sec:BBS} 

We construct as ``building blocks'' certain partitions that occur in $\IC[\Det_n]$. 
We achieve this by providing explicit tableau constructions and showing that 
the corresponding highest weight vectors do not vanish on certain tensors 
describing padded power sums. 
Then we apply Theorem~\ref{eq:I-containlinforms}. 
In particular, we prove this way that  
certain plethyms coefficients are nonzero. 
 
We first provide the proof of Proposition~\ref{pro:I-explicitHWV}, which 
deals with the case of even partitions and which was already stated in 
Section~\ref{se:outline}. 



\begin{proof}[Proof of Proposition~\ref{pro:I-explicitHWV}]
Let $T$ denote the tableau of shape $k\times \ell$ with content $k\times \ell$
from Corollary~\ref{cor:ps-eval}. 
Suppose $n\ge k\ell$ and 
let $h := \kappa^k_{\ell,n}(v_T) \in \Sym^k\Sym^n V$ denote the lifting
of $v_T\in\Sym^k\Sym^\ell V$. 
Hence $h$ is a highest weight vector of weight $(k \times \ell)^{\sharp nk}$.  
Recall the corresponding linear automorphism  
$\Delta:=\Delta_{\ell,n}\colon\Sym^{\ell} V^* \to \Sym^{n} V^*$ 
from Lemma~\ref{le:newscaling}. 
Put
$p := X_1^\ell +\cdots+ X_k^\ell$ and note that 
$\Delta(p) = a X_1^\ell + b X_2^\ell + \cdots+ b X_k^\ell$ 
for some $a,b \ne 0$. 
Applying Theorem~\ref{th:lift-ip-preserv}, we obtain with 
$q :=X_1^{n-\ell}\, p$ that 
$\langle h , q^{k} \rangle= \langle v_T , M^*(q)^{k} \rangle  = \langle v_T , \Delta(p)^{k} \rangle$.
On the other hand, Corollary~\ref{cor:ps-eval} implies that  
$\langle v_T , \Delta(p)^{k} \rangle$ is nonzero. 
By Theorem~\ref{eq:I-containlinforms}, 
we have $X_1^{n}\in \Det_n$ since $n\ge k\ell$.
Therefore $h$, viewed as a degree~$d$ homogeneous polynomial function on $\Sym^n V^*$, 
does not vanish on $\Det_n$ and the assertion follows. 
\end{proof}

In order to handle partitions with odd parts, 
we use as further building blocks partitions obtained from 
rectangles by adding a single row and a single column. 

We postpone the proof of the following technical result 
to Section~\ref{sec:tableaux}.  
(It is based on an explicit construction of a highest weight vector.)

\begin{theorem}\label{cor:buildingblock}
Let $2 \leq b,c \leq m^2$ and let $n \geq 24  m^6$.
Then there exists an even $ i \leq 2m^4$, such  that 
\[
\la = b \times 1 + c \times i + 1\times j
\]
occurs in $\mathbb{C}[\Omega_n]_{3m^4}$
for $j = 3m^4n - b - ic$.
\end{theorem}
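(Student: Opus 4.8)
The plan is to produce an explicit highest weight vector $v_T$ in a suitable plethysm $\Sym^{k}\Sym^{s}V$ of small inner degree $s$ and small outer degree $k$, whose shape is the "core" of $\la$ obtained by deleting all singleton columns, and then to lift it up to inner degree $n$ via $\kappa$ and to outer degree $3m^4$ via multiplication by a power of $X_1^s$. Concretely, write $\la = b\times 1 + c\times i + 1\times j$. The non-singleton columns form the partition $\mu := b\times 1 + c\times i$ (a single row of length $i+1$ on top of $c-1$ further rows of length $i$, plus $b-c$ rows — wait, more carefully: $b$ rows of length $1$ together with $c$ rows of length $i$ stacked, so $\mu$ has $\max(b,c)$ rows and the columns of length $>1$ carry $ci$ boxes, with one extra box from the $b$-part if $b>c$). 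Choosing $i$ even is the key degree of freedom: I want the rectangle $c\times i$ to behave like the even-rectangle building block of Proposition~\ref{pro:explicitHWV}, whose associated highest weight vector is known to evaluate nonzero on a padded power sum. The first task is therefore to exhibit, for appropriate even $i\le 2m^4$, a tableau $T$ of shape $\mu$ with content $k\times s$ (for $k$ and $s$ of size roughly $m^4$) such that $v_T$ is a nonzero highest weight vector of the right weight and such that $\langle v_T, w^{\ot k}\rangle \neq 0$ for some power sum $w = \sum_j v_j^s$ with at most $k$ terms. This is the content I would extract from the explicit construction promised in Section~\ref{sec:tableaux}, combining the "row-constant" tableau of Corollary~\ref{cor:ps-eval} for the even rectangle with a small modification handling the extra $b\times 1$ singleton part and the length-$i$ columns; the combinatorics of choosing the content classes so that no letter repeats in a column is the place where the evenness of $i$ and the bound $i\le 2m^4$ get used.

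Once the building block $v_T\in\Sym^{k}\Sym^{s}V$ of shape $\nu$ (with $\bar\nu = \bar\la$, i.e. $\nu$ agreeing with $\la$ off the first row) is in hand, the lifting machinery does the rest. First apply the outer degree lifting \eqref{eq:deglift}, multiplying by $(X_1^{s})^{3m^4 - k}$; by Lemma~\ref{le:deglift} this produces $v_{T'}$ of shape $\nu + ((3m^4-k)s)$ with content $3m^4\times s$, and by Proposition~\ref{le:newlift}(1) this is a valid (nonzero) highest weight vector provided $\nu_2 + |\bar\nu| \le k$, which holds because $\nu_2+|\bar\nu|$ is just the number of non-singleton-column boxes, bounded by $ci + (\text{const}) \le m^2\cdot 2m^4 + \ldots = O(m^6)$, so one takes $k = \Theta(m^6)$ — but this must be $\le 3m^4$...

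Here I see the parameters need care: the outer degree is pinned at $3m^4$, so $k\le 3m^4$, forcing $\nu_2 + |\bar\nu| \le 3m^4$, i.e. $ci \le 3m^4$ roughly; with $c\le m^2$ this gives $i\le 3m^2$, which is stronger than $i\le 2m^4$. The correct reading is that $s$, not $k$, is the large inner parameter: I should build $v_T$ in $\Sym^{3m^4}\Sym^{s}V$ directly (outer degree already $3m^4$), with $s = \Theta(m^4)$ chosen so that $i \le s$ and so that $3m^4 \ge$ (number of distinct letters forced into non-singleton columns); then lift only the inner degree, $\kappa^{3m^4}_{s,n}$, to reach $\Sym^{3m^4}\Sym^{n}V$. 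So the second step is: apply Proposition~\ref{pro:lifting}(1), whose hypothesis $\nu_2\le s$ holds since $\nu_2 = i \le s$, to get that $h := \kappa^{3m^4}_{s,n}(v_T)$ is a nonzero highest weight vector of the claimed weight $\la = \nu + (3m^4(n-s))$; one checks $|\la| = 3m^4 n$ and that deleting the first row gives the right body, i.e. $\la = b\times 1 + c\times i + 1\times j$ with $j = 3m^4 n - b - ic$. Finally apply Theorem~\ref{th:lift-ip-preserv} to transfer the nonvanishing: $\langle h, q^{\ot 3m^4}\rangle = \langle v_T, p^{\ot 3m^4}\rangle \neq 0$ where $q = X_1^{n-s}S_{s,n}(p)$ and $p = S_{s,n}^{-1}(v_1^{s}+\cdots+v_r^{s})$ for the power sum from the building block; and $q$ is a padded power sum with $r\le 3m^4$ terms of degree $s$, so $q\in\Omega_n$ by Theorem~\ref{eq:containlinforms} as soon as $n\ge sr$, which holds since $s = O(m^4)$, $r\le 3m^4$, and $n\ge 24m^6$. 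Hence $\la$ occurs in $\IC[\Omega_n]_{3m^4}$.

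The main obstacle is the first step: constructing the explicit tableau $T$ and verifying $\langle v_T, (\text{power sum})^{\ot 3m^4}\rangle \neq 0$ by the combinatorial contraction rule of Theorem~\ref{eq:comb_contraction}. One must arrange the content classes (the $3m^4$ letters) so that (i) no letter repeats within any column of $\mu$ — otherwise $v_T = 0$ by Lemma~\ref{cla:wedgevanish}(1) — which constrains how the $c$ letters filling the long columns can be reused among the $b$ short ones, and this is where $i$ must be taken even and $\le 2m^4$; and (ii) the signed sum $\sum_\vartheta \val_\vartheta(s)$ over bijections respecting $T$ does not cancel for the specific point $s$ encoding the power sum. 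The evenness of $i$ is precisely what kills sign cancellations over permutations of the rows of the $c\times i$ block (exactly as $\sgn(\tau)^N = 1$ for even $N$ in Corollary~\ref{cor:ps-eval}); the role of the extra $b\times 1$ column and the single long row of length $i+1$ is to contribute an overall nonzero constant factor, which I expect to verify by a direct, if slightly intricate, expansion. I would present this verification as the separate Section~\ref{sec:tableaux} result, reducing the proof of Theorem~\ref{cor:buildingblock} here to the lifting argument above.
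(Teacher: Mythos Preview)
Your high-level architecture is correct and matches the paper's: build a nonzero highest weight vector of the target body in a plethysm $\Sym^{3m^4}\Sym^{s}V$ with small inner degree~$s$, lift it to $\Sym^{3m^4}\Sym^{n}V$ via $\kappa^{3m^4}_{s,n}$, and use Theorem~\ref{th:lift-ip-preserv} together with Theorem~\ref{eq:containlinforms} to conclude that it does not vanish on~$\Omega_n$. The gap is in the numerics, and it is not cosmetic.

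You take $s=\Theta(m^4)$ so that $i\le s$ (which you need for the row-constant tableau idea of Corollary~\ref{cor:ps-eval} to make sense: a single letter filling a row of length~$i$ already uses $i$ copies). But then the padded power sum $q=X_1^{n-s}(v_1^s+\cdots+v_r^s)$ with $r\le 3m^4$ terms lies in $\Omega_n$ only when $n\ge sr=\Theta(m^8)$, which is not implied by $n\ge 24m^6$. So the final step fails. The paper takes instead inner degree $N=8m^2$ and outer degree $d=3m^4$, so $Nd=24m^6\le n$ and Theorem~\ref{eq:containlinforms} applies.

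This forces a completely different tableau construction, because now the inner degree $N\sim m^2$ is far smaller than the row length $i\sim m^4$ of the body of~$\nu$. The row-constant tableau is impossible; instead (Propositions~\ref{prop:short_column} and~\ref{prop:long_column}) each row of the $c\times i$ block is filled with many distinct letters, arranged so that the number of times each letter appears outside the singleton columns is \emph{pairwise distinct}. Nonvanishing of $\langle v_T,t\rangle$ is then proved not by the parity argument $\sgn(\tau)^{\text{even}}=1$ of Corollary~\ref{cor:ps-eval}, but by the rigidity argument of Claim~\ref{le:s=id}: the distinct multiplicities force $\tau_\vartheta=\mathrm{id}$ on the relevant letters for every contributing~$\vartheta$, so every term in Theorem~\ref{eq:comb_contraction} is~$+1$. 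Relatedly, the evenness of~$i$ is not what kills sign cancellation; it is a free choice inside an interval of length at least two produced by Proposition~\ref{prop:long_column}, kept so that the later splitting argument in Proposition~\ref{thm:newoccurrence} works.

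In short: the lifting and evaluation framework you wrote is exactly right, but the explicit tableau step is the whole difficulty, and it cannot be done with the Corollary~\ref{cor:ps-eval} mechanism under the given degree constraints.
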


The splitting strategy in the following proof is a 
refinement of the one in~\cite{ik-panova:15}. 
The proof relies on Theorem~\ref{cor:buildingblock}
and on the semigroup property (Lemma~\ref{le:SGP}).

\begin{proposition}\label{thm:newoccurrence}
Given a partition $\la$ with $|\la|=nd$ 
such that there exists $m \geq 2$ with 
$\ell(\la)\leq m^2$, $m^{10} \leq |\bar\la|\leq md$, $n\geq 24m^6$, and $d>4m^6$. 
Then $\la$ occurs in $\IC[\Det_n]_d$.
\end{proposition}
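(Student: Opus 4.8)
The plan is to realize $\la$ as a sum of "building block" partitions, each of which is known to occur in $\IC[\Det_n]_d$, and then invoke the semigroup property (Lemma \ref{le:SGP}). Write $\la = (\la_1,\la_2,\dots)$. The first observation is that $\la$ has a very long first row: since $|\bar\la|\le md$ we get $\la_1 = nd - |\bar\la| \ge nd - md = (n-m)d$, so the bulk of the boxes sit in row $1$, and the first row can always absorb whatever length adjustment is needed to keep the total box count equal to $nd$ for each summand. The body $\bar\la$, which has size between $m^{10}$ and $md$ and at most $m^2-1$ nonzero parts, is the only interesting piece; I will decompose $\bar\la$ columnwise (equivalently, slice $\la$ minus its first row into vertical strips), grouping columns of equal height into rectangles. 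Concretely, for each height $2\le h\le m^2$ let $w_h$ be the number of columns of $\bar\la$ of height exactly $h$; then $\sum_h h\, w_h = |\bar\la|\le md$ and $|\bar\la|\ge m^{10}$ forces $d$ to be reasonably large (this is where the hypotheses $d>4m^6$ and $|\bar\la|\ge m^{10}$ interact). Each rectangle $h\times w_h$ (together with row $1$) will be prolonged to total size $nd$, i.e.\ replaced by $(h\times w_h)^{\sharp nd}$, and if $w_h$ is even, Proposition \ref{pro:explicitHWV} shows $(h\times w_h)^{\sharp nd}$ occurs in $\IC[\Det_n]_d$ (the hypothesis $n\ge 24m^6 \ge h\cdot w_h \cdot$ something must be checked, using $h,w_h\le m^2$ so $hw_h\le m^4$ and $n\ge 24m^6 \ge m^4 d$... this needs care since $d$ can be large — more likely one bounds $hw_h\le$ small and uses $n\ge (hw_h)d$, which may fail, so instead I expect to split each rectangle further into many thin rectangles of bounded total width, each handled by Proposition \ref{pro:explicitHWV} with $N$ small).

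The main technical nuisance is \emph{odd column multiplicities}: if some $w_h$ is odd, Proposition \ref{pro:explicitHWV} does not directly apply (it requires even row length $N$). Here I would peel off one column: write $h\times w_h = (h\times(w_h-1)) + (h\times 1)$, handle the even part by Proposition \ref{pro:explicitHWV}, and collect all the leftover single columns. The collected leftovers form a partition whose body has at most $m^2-1$ parts, each part (a column height) at most $m^2$; this is exactly a partition of the shape treated by Theorem \ref{cor:buildingblock} — or, after iterating, by Proposition \ref{thm:verylongfirstrow} (the "very long first row" case, whose hypotheses $\ell(\la)\le m^2$, $\la_2\le s$, $m^2s^2\le n$, $m^2s\le d$ are designed to be met here with $s$ of order $m^2$, using $n\ge 24m^6 \ge m^6 = m^2(m^2)^2$ and $d>4m^6 \ge m^2\cdot m^2$). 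Theorem \ref{cor:buildingblock} is the workhorse for the "one column of height $b$, several columns of height $c$" remainder shapes; the even index $i\le 2m^4$ it produces is flexible enough (I may need to absorb slack by further applications of the semigroup property, adding copies of a rectangle block to fatten the $c$-part up to the required multiplicity).

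So the key steps, in order: (1) record $\la_1\ge(n-m)d$ and that all nontrivial structure is confined to $\le m^2-1$ rows; (2) decompose $\bar\la$ into rectangles $h\times w_h$, splitting each into an even-width part plus at most one residual column; (3) apply Proposition \ref{pro:explicitHWV} (after subdividing widths so the constraint $n\ge Nd$ holds, i.e.\ $N\le n/d$, which is $\ge 24m^6/d$; since $d$ may exceed $24m^6$ one instead keeps $N$ as small as $2$ and uses many blocks) to each even-width piece, getting occurrence in $\IC[\Det_n]_d$; (4) bundle all residual columns into a single partition handled by Theorem \ref{cor:buildingblock} and/or Proposition \ref{thm:verylongfirstrow}, checking their numeric hypotheses against $n\ge24m^6$, $d>4m^6$, $m^{10}\le|\bar\la|$; (5) verify the sizes add up to $nd$ in each summand (prolonging first rows as needed — this is automatic since $\la_1$ is huge) and that the bodies add up to $\bar\la$; (6) apply Lemma \ref{le:SGP} repeatedly to conclude $\la$ itself occurs in $\IC[\Det_n]_d$. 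I expect step (3)/(4) bookkeeping — making every block's degree exactly $d$ and every block's inner degree $\le n$ compatible with the $n\ge(\text{width})\cdot d$ constraint — to be the real obstacle, handled by choosing many narrow blocks and leaning on the lower bound $d>4m^6$ together with $|\bar\la|\ge m^{10}$ to guarantee there is enough "room" in $d$ to distribute.
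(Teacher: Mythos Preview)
Your overall strategy matches the paper's: decompose $\bar\la$ by column heights, handle even-width rectangles via Proposition~\ref{pro:explicitHWV}, treat odd residues via Theorem~\ref{cor:buildingblock}, and assemble with the semigroup property. But two points in your execution would not go through.

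\emph{Degrees.} You insist that each building block occur in $\IC[\Det_n]_d$, which forces the constraint $n\ge Nd$ in Proposition~\ref{pro:explicitHWV}; you correctly note this can fail even for $N=2$. The resolution is that the semigroup property \emph{adds} degrees (despite the terse statement of Lemma~\ref{le:SGP}, its proof multiplies functions of possibly different degrees), so the blocks should sit in small degrees. The paper places $(k\times N)^{\sharp nk}$ in $\IC[\Det_n]_k$, where the constraint becomes $n\ge Nk$ with $k\le m^2$, allowing $N$ up to $\llfloor n/k\rrfloor$; it places each $\omega_k^{\sharp 3nm^4}$ in degree $3m^4$; it then verifies the total degree is at most $d$ (the ``Claim'' $|\mu|\le dn$) and pads with copies of $(n)\in\IC[\Det_n]_1$. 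Your plan of summing many degree-$d$ blocks via Lemma~\ref{le:SGP} cannot land back in degree~$d$.

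\emph{Odd residues.} Collecting all leftover single columns into one partition and invoking Proposition~\ref{thm:verylongfirstrow} does not work: that proposition only shows nonvanishing on $\Det_n$ \emph{given} a nonzero highest weight vector of the target weight, and you have no independent argument that the plethysm coefficient of an arbitrary staircase of leftover columns is positive. Theorem~\ref{cor:buildingblock} itself handles only the shape $b\times 1 + c\times i$ (one extra column attached to a rectangle), not a general staircase. The paper's device is to single out the height $K\in\{2,\dots,\ell(\la)\}$ with $c_K$ maximal---the hypothesis $|\bar\la|\ge m^{10}$ then forces $c_K\ge 2m^6$---and, for each $k$ with $c_k$ odd, apply Theorem~\ref{cor:buildingblock} with $b=k$ and $c=K$ to obtain $\omega_k=k\times 1+K\times i_k$ with even $i_k\le 2m^4$. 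This consumes the one surplus column of height~$k$ at the cost of borrowing $i_k$ columns of height~$K$; since $\sum_{k} i_k\le (m^2-1)\cdot 2m^4<2m^6\le c_K$, the borrowing is affordable and all remaining column multiplicities $d_k$ are even. This ``borrow from the dominant column height'' step is the idea missing from your~(4).
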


\begin{proof}
Let $L:=\ell(\lambda)$ and $c_k$ denote the number of columns of length $k$ in~$\la$
for $1\le k \le L$. Let $K$ be the index $k \geq 2$, for which $c_k$ is maximal, 
i.e., $c_K = \max(c_k; k=2,\ldots, L)$. 
By assumption, we have $2 \leq K \leq m^2$ and 
$$
 m^{10} \le |\bar{\la}| = \sum_{k=2}^L (k-1) c_k \le c_K \sum_{k=2}^L (k-1) 
  \le c_K \frac{L^2}{2} \le c_K \frac{m^4}{2} ,
$$
hence $c_K \ge 2m^6$. 

The columns of odd length of $\la$ need a special treatment: 
let $S$ denote the set of integers $k\in\{2,\ldots,L\}$ for which $c_k$ is odd.
For $k \in S$ we define the partition 
\[
\omega_k := k \times 1 + K \times i_k , 
\]
where the even integer $i_k \leq 2m^4$ is taken from Theorem~\ref{cor:buildingblock}, 
so that $\omega_k^{\sharp 3nm^4}$ occurs in $\IC[\Det_n]_{3m^4}$.
(Here we have used the assumption $n\geq24m^6$.) 

Assume first that $K\not\in S$, that is, $c_K$ is even. 
Then we can split $\la$  vertically in rectangles as follows:
\begin{equation*}
\begin{split}
 \la &= 1 \times c_1 + \sum_{k=2 \atop k\not\in S \cup \{K\} }^{L} k \times c_k 
      + \sum_{k=2 \atop k\in S}^{L} k \times c_k + K \times c_K \\
   & = 1 \times c_1 + \sum_{k=2 \atop k\not\in S \cup \{K\} }^{L} k \times c_k 
      + \sum_{k=2 \atop k\in S}^{L} k \times (c_k-1)   + \sum_{k\in S} \omega_k 
        + K \times \Big(c_K - \sum_{k \in S} i_k \Big) .
\end{split}
\end{equation*}
If, for $k\leq L$, we set
$d_k := c_k$ if $k\not\in S \cup \{K\}$ and $d_k := c_k -1 $ if $k\in S$, and 
define 
$d_K:=c_K - \sum_{k \in S} i_k$, 
then the above can be briefly written as 
\begin{equation}\label{eq:SPLIT}
 \la = 1 \times c_1 + \sum_{k=2}^{L} k \times d_k + \sum_{k\in S} \omega_k .
\end{equation}
By construction, all $d_k$ are even. 
It is crucial to note that, using $i_k \le 2m^4$, 
$$
d_{K} = c_{K}-\sum_{k \in S}i_k \geq c_{K}-(m^2-1)\cdot 2m^4  \geq c_{K} - 2 m^6  \ge 0.
$$
The last inequality is due to our observation at the beginning of the proof. 

In the case where $K\in S$, we achieve the same decomposition as in \eqref{eq:SPLIT} 
with the modified definition
$d_K:=c_K - 1- \sum_{k \in S} i_k$. Here, as well $d_K \geq 0$ and all $d_k$ are even. 

We need to round down rational numbers to the next even number,
so for $a\in\IQ$ we define 
$\llfloor a \rrfloor := 2 \lfloor a/2 \rfloor$. 
Note that $\llfloor a \rrfloor \ge a - 2$ for all $a\in\IQ$. Hence 
$\llfloor n/k\rrfloor \geq n/k -2 \ge 2$ for all $2 \leq k \leq m^2$,
since $n \geq 4m^2$.  

Using division with remainder, let us write 
$d_k = q_k \llfloor \frac{n}{k}\rrfloor + r_k$ with $0\le r_k < \llfloor \frac{n}{k}\rrfloor$. 
Then we split 
$k\times d_k = q_k (k \times \llfloor \frac{n}{k} \rrfloor ) + k\times r_k$.
Since $d_k$ is even and $\llfloor n/k\rrfloor$ is even, $r_k$ is even as well. 
From \eqref{eq:SPLIT} we obtain that the partition 
\begin{equation}\label{eq:musplit}
\mu := 
\sum_{k=2}^{L} q_k ((k \times \llfloor n/k\rrfloor)^{\sharp nk}) 
+ \sum_{k=2}^{L} (k \times r_k)^{\sharp nk} 
 + \sum_{k\in S} \omega_k^{\sharp 3nm^4} 
\end{equation}
coincides with $\la$ in all but possibly the first row.

Since $\llfloor n/k\rrfloor \leq n/k$, $r_k \leq n/k$,
and both $\llfloor n/k\rrfloor$ and $r_k$ are even, 
Proposition~\ref{pro:I-explicitHWV} implies that 
$(k \times \llfloor n/k\rrfloor)^{\sharp nk}$ 
and $(k \times r_k)^{\sharp nk}$
occur as highest weights in $\IC[\Det_n]_k$.
Moreover, 
Theorem~\ref{cor:buildingblock} tells us that 
$\omega_k^{\sharp 3nm^4}$ 
occurs as a highest weight in $\IC[\Det_n]_{3m^4}$.
The semigroup property implies that $\mu$ occurs in $\IC[\Det_n]$.
\medskip 

\noindent {\bf Claim.} $|\mu| \le dn$. 
\medskip 

Let us finish the proof assuming the claim. 
If $|\mu| \le dn$, we can obtain $\la$ from $\mu$ by adding boxes to the first row of $\mu$. 
Note that $|\la|-|\mu|$ is a multiple of $n$. 
Since $(n)\in \IC[\Det_n]$, 
the semigroup property implies that $\la$ occurs in $\IC[\Det_n]_d$.
\smallskip 

It remains to verify the claim. 
From \eqref{eq:musplit} we get 
$$
 |\mu| \ \le\  \sum_{k=2}^L (q_k nk + nk + 3nm^4 ) .
$$
We have, using $\llfloor a \rrfloor \ge a -2$, 
$$
 q_k \le \frac{d_k}{\llfloor n/k \rrfloor} \le \frac{kd_k}{n-2k} .
$$
This implies 
$$
 |\mu| \ \le\ n \sum_{k=2}^L \Big( \frac{k^2 d_k}{n-2k} + k + 3m^4\Big) .
$$
Using $d_k \le c_k$ and $L\le m^2$, we get
$$
 |\mu| \ \le\  n \sum_{k=2}^L \frac{m^2}{n-2m^2} k c_k + n \sum_{k=2}^{m^2} k + 3nm^4 (m^2 -1) .
$$
Noting that $\sum_{k=2}^L k c_k = |\bar{\la}| + \la_2 \le 2 |\bar{\la}|$, we continue with 
\begin{eqnarray*}
 |\mu| &\le& \frac{nm^2}{n-2m^2} \cdot 2 |\bar{\la}| + 
             n \Big( \frac{m^2(m^2+1)}{2} + 3m^4(m^2 -1)\Big) \\
          &\le& \frac{nm^2}{12m^6-m^2} \cdot  |\bar{\la}| + 
             n \Big( 3m^6 -\frac52 m^4 + \frac12 m^2 \Big),
\end{eqnarray*}
where we have used $n>24m^6$ for the second inequality. 
Plugging in the assumptions $|\bar{\la}| \le dm$ and $d>4m^6$,  
we obtain
$$
 |\mu| \ \le\  \frac{dnm^3}{11m^6} + 3nm^6 \le 
  \frac{dn}{11} + 3nm^6 \le \frac{dn}{11} + \frac{3dn}{4} < dn ,
$$
which shows the claim and completes the proof.
\end{proof}

We can now finally complete the proof of our main result.

\begin{proof}[Proof of Theorem~\ref{th:main}] 
We may assume that $m\ge 2$, as the case $m=1$ is trivial. 
Suppose that $\la\vdash nd$ occurs in $\IC[Z_{n,m}]$ and 
$n \ge m^{25}$. 
Theorem~\ref{pro:I-KaLa} implies that $|\bar\la|\leq md$ and $\ell(\la)\leq m^2$. 

In the case of ``small degree'', where $n\ge md^2$, Proposition~\ref{pro:I-degreebound} 
implies that $\la$ occurs in $\IC[\Det_n]$. 

So we may assume that $d > \sqrt{n/m}$. 
In this case we have 
$d \ge \sqrt{m^{25}/m} = m^{12}$.
We conclude by two further case distinctions.

If $|\bar\la| < m^{10}$, 
we can apply Proposition~\ref{thm:verylongfirstrow} 
with $s:=m^{10}$ since 
$\la_2 \le |\bar{\la}| \le  s$,
$m^2 s^2 = m^{22} \le n$, 
and 
$m^2 s = m^{12}\le d$.
Thus $\la$ occurs in $\IC[\Det_n]_d$. 

Finally, 
if $|\bar\la| \geq m^{10}$, 
then the above Proposition~\ref{thm:newoccurrence} tells us that 
$\la$~occurs in $\IC[\Det_n]_d$.
\end{proof}

\section{Explicit constructions of tableaux and positivity of plethysms}
\label{sec:tableaux}

The goal of this last section is to provide the proof of Theorem~\ref{cor:buildingblock}. 

In order to motivate the construction, we begin with a general reasoning.
Let $T$ be a tableau of shape~$\la$ with content $\out\times\inn$.  
For the sake of readability, we will use the natural numbers $1,\ldots,d$ as letters.
The set of boxes of $T$ is partitioned as 
$C_1 \cup\ldots \cup C_d$,
where $C_u$ denotes the set of boxes with the letter $u$. 
Note that $|C_u| = n$ for all~$u$. 
We denote by $C^1_u$ the subset of $C_u$ consisting of the boxes in singleton columns. 
On the other hand, the position set $[dn] = B_1 \cup\ldots \cup B_d$ is partitioned into the blocks
$B_u := \{ (u-1)n+ v \mid  1 \le v \le n \}$, where $|B_u| = n$ for all~$u$. 

We fix a map $s\colon [dn] \to [N]$, which defines the rank one tensor 
$\Phi=X_{s(1)}\otimes\ldots\otimes X_{s(dn)} \in \tensor^{dn} V^*$,
where $V:=\C^N$. 
Depending on~$s$, we denote by 
$B^1_u := B_u \cap s^{-1}(1)$ the set of positions in block $B_u$ 
that are mapped to~$1$ under the map~$s$. 
(Hence the positions in $B^1_u$ are the ones mapped to the basis vector~$X_1$.)

Recall from Theorem~\ref{eq:comb_contraction} that 
$\langle v_T,\Phi\rangle = (n!^d d!)^{-1} \sum_{\vartheta} \val_\vartheta (s)$, 
where the sum is over all bijections $\vartheta\colon\la\to [dn]$ respecting $T$.
The bijection $\vartheta$ respects the tableau~$T$ if 
there exists $\tau_\vartheta\in\aS_d$ such that 
$\vartheta(C_u) = B_{\tau_\vartheta(u)}$ for all $u$;
see\ Definition~\ref{def:respects}. 

If $\val_\vartheta(s) \ne 0$, then $s \circ \vartheta$ must map all boxes 
in singleton columns to $1$.  
This means
\begin{equation}\label{eq:key-ST}
 \forall u\quad \vartheta(C^1_u) \subseteq B^1_{\tau_\vartheta(u)} .
\end{equation}
For proving that $a_{\la}(\out[\inn]) >0$, we shall design 
$T$ and $s$ in such a way that 
$\val_\vartheta(s) \ge 0$ for all $\vartheta$ and 
there are only few $\vartheta$ with $\val_\vartheta(s) > 0$
(there must be at least one). 

Part of the strategy for realizing this can be described as follows. 

\begin{claim}\label{le:s=id}
Let $T$ be tableau of shape $\la$ and content $d\times n$,
where $C^1_u$ denotes the set of boxes with letter~$u$ in the singleton columns of~$T$.
Let $s\colon [dn] \to [N]$ and recall 
$B^1_u:= B_u \cap s^{-1}(1)$. 
Assume there is an integer~$D$ with $\ell(\la) \le D \le d$ such that 
$|C^1_1| = |B^1_1|,\ldots,|C^1_D| = |B^1_D| \le n-2$ 
are pairwise distinct numbers 
and $|C^1_u| > n-2$ for $D<u \le d$.  
Then for any $\vartheta\colon\la \to [dn]$ respecting~$T$ 
with $\val_\vartheta(s) \ne 0$, 
we have 
$\tau_\vartheta(u) = u$ for all $1\le u \le D$. 
\end{claim}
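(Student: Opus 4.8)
The plan is to reduce the claim to a short counting argument built on the consequence~\eqref{eq:key-ST} of $\val_\vartheta(s)\neq 0$. Fix a bijection $\vartheta\colon\la\to[dn]$ respecting $T$ with $\val_\vartheta(s)\neq 0$, and write $\tau:=\tau_\vartheta$ for the associated permutation, so that $\vartheta(C_u)=B_{\tau(u)}$ for all $u$. First I would recall that, by the remark following~\eqref{eq:prodcols}, $\val_\vartheta(s)\neq 0$ forces $s\circ\vartheta$ to send every box lying in a singleton column to $1$; combining this with $\vartheta(C_u)=B_{\tau(u)}$ gives the inclusion $\vartheta(C^1_u)\subseteq B_{\tau(u)}\cap s^{-1}(1)=B^1_{\tau(u)}$, which is exactly~\eqref{eq:key-ST}. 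Since $\vartheta$ is injective, this upgrades to the size inequality $|C^1_u|\le|B^1_{\tau(u)}|$ for every $u\in[d]$, and this one family of inequalities is all the argument will use.

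Next I would show that $\tau$ preserves $\{1,\dots,D\}$. Suppose $u>D$ but $\tau(u)=v\le D$. Then by hypothesis $|B^1_v|=|C^1_v|\le n-2$, so $|C^1_u|\le|B^1_{\tau(u)}|=|B^1_v|\le n-2$, contradicting the assumption $|C^1_u|>n-2$. Hence $\tau(\{D+1,\dots,d\})\subseteq\{D+1,\dots,d\}$, and since $\tau$ is a bijection of $[d]$ it must therefore restrict to a permutation of $\{1,\dots,D\}$.

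Finally I would pin $\tau$ down on $\{1,\dots,D\}$ using the distinctness hypothesis. For $u\le D$ we now have $\tau(u)\le D$, hence $|C^1_u|\le|B^1_{\tau(u)}|=|C^1_{\tau(u)}|$ (the equality $|B^1_v|=|C^1_v|$ being available since $\tau(u)\le D$). Summing these inequalities over $u=1,\dots,D$ and using that $\tau$ permutes $\{1,\dots,D\}$, both sides equal $\sum_{u=1}^D|C^1_u|$, so each inequality is an equality: $|C^1_u|=|C^1_{\tau(u)}|$ for all $u\le D$. Because the $D$ numbers $|C^1_1|,\dots,|C^1_D|$ are pairwise distinct, the map $v\mapsto|C^1_v|$ is injective on $\{1,\dots,D\}$, and therefore $\tau(u)=u$ for every $u\le D$, which is the assertion. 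I do not expect a genuine obstacle here: this is a two-stage pigeonhole. The only points requiring care are invoking the injectivity of $\vartheta$ when passing from~\eqref{eq:key-ST} to the cardinality bound, and using the equality $|B^1_v|=|C^1_v|$ only for the indices $v\le D$ for which it is hypothesized; the extra hypothesis $\ell(\la)\le D$ plays no role in this particular argument.
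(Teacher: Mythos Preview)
Your proposal is correct and follows essentially the same approach as the paper: derive $|C^1_u|\le|B^1_{\tau(u)}|$ from~\eqref{eq:key-ST}, use the threshold $n-2$ to show $\tau$ permutes $[D]$, and then conclude $\tau|_{[D]}=\id$ from the distinctness of the values $|C^1_u|$. The only difference is that the paper leaves the last implication (``$w(u)\le w(\tau(u))$ for all $u$, $w$ injective, $\tau$ a permutation $\Rightarrow$ $\tau=\id$'') as a one-line remark, whereas you spell it out via the summing trick; your observation that $\ell(\la)\le D$ is not used in this argument is also accurate.
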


\begin{proof}
Assume $\val_\vartheta(s) \ne 0$ 
and write $\tau:= \tau_\vartheta$. Then \eqref{eq:key-ST} implies 
$|C^1_u| \le |B^1_{\tau(u)}|$ for all~$u$. 
For $u>D$ we have $|C^1_u| > n-2$, hence  $\tau(u) >D$, 
since $|B^1_{u'}| \le n-2$ for $u'\le D$. 
We conclude that $\tau$ permutes the set $[D]$. 
For $u\le D$, by assumption, 
the cardinalities $w(u) := |C^1_u| = |B^1_{u}|$ are pairwise distinct 
and \eqref{eq:key-ST} gives $w(u) \le  w(\tau(u))$.
This implies that $\tau(u)= u$ for $1\le u\le D$.
\end{proof}

By a concrete choice of a tableau $T$ and map $s$ we prove now the following. 

\begin{proposition}\label{prop:short_column}
Let $t \geq r$, $i \geq 2t+3$ be positive integers and let $\inn \geq i$ and $\out \geq 2t+i+1$. 
Let $\nu = (t+1) \times i + (r+1) \times 1 + (j)$, where $j=\out\inn - (t+1)i -(r+1)$. Then $a_{\nu}(\out[\inn]) >0$. 
\end{proposition}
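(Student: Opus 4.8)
The plan is to exhibit an explicit tableau $T$ of shape $\nu$ with content $d\times n$ and an explicit map $s\colon[dn]\to[N]$ (for suitable $N$, e.g. $N=2t+i+2$ or so), and then to show via Theorem~\ref{eq:comb_contraction} that $\langle v_T, t\rangle \ne 0$ where $t = X_{s(1)}\otimes\cdots\otimes X_{s(dn)}$; this immediately gives $a_\nu(d[n]) = \dim\HWV_\nu(\Sym^d\Sym^n V) > 0$ by Proposition~\ref{pro:HWVSS-gen}. The shape $\nu$ has a column of height $r+1$, a block of $i$ columns each of height $t+1$, and a very long first row of singleton columns. The idea, following the general reasoning preceding the proposition, is to arrange the letters so that the contribution $\val_\vartheta(s)$ is nonnegative for every $\vartheta$ respecting $T$ and strictly positive for at least one $\vartheta$. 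The key tool for pinning down which permutations $\tau_\vartheta\in\aS_d$ can occur is Claim~\ref{le:s=id}: by choosing $s$ so that the blocks $B_1,\dots,B_D$ (for $D := $ something like $2t+2$ or $\ell(\nu)$) get pairwise distinct numbers $|B^1_u|\le n-2$ of non-$X_1$ entries, while all remaining blocks are ``almost pure $X_1$'', and arranging $T$ so that the letters $1,\dots,D$ have correspondingly distinct singleton-column counts, we force $\tau_\vartheta$ to fix $\{1,\dots,D\}$ pointwise.

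Concretely, I would assign the non-singleton part of $\nu$ as follows: put letter $1$ in (the top of) the height-$(r+1)$ column and distribute the remaining $r$ boxes of that column among $r$ distinct letters $2,\dots,r+1$; fill the $i$ columns of height $t+1$ so that each such column uses $t+1$ distinct letters drawn from a small pool — using the condition $i\ge 2t+3$ to give enough room to make column signs controllable — and so that every letter $u$ in the ``short'' range $1\le u\le D$ occurs a prescribed number of times off the singleton columns, hence a prescribed, distinct number $|C^1_u|$ of times in the singleton columns. Then $s$ is chosen to send, within block $B_u$ for $u\le D$, exactly $|C^1_u|$ positions to $X_1$ and the rest to distinct other basis vectors in a pattern that matches how the corresponding letter's non-singleton boxes sit in the columns of $T$; for $u>D$ essentially the whole block $B_u$ goes to $X_1$. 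Once $\tau_\vartheta$ is forced to fix $1,\dots,D$ by Claim~\ref{le:s=id}, each column of $\nu$ of height $\ge 2$ has its entries labelled by letters in the forced range, so the restriction of $s\circ\vartheta$ to that column is a permutation of a fixed set, and its sign is $\sgn(\tau_\vartheta|_{\text{that column's letters}})$ twisted by a fixed bijection; one checks the product over columns is a square (or otherwise manifestly $\ge 0$) — this is where the evenness/parity bookkeeping has to be done carefully, and is the reason $i$ is taken with the slack $i\ge 2t+3$ and why in Theorem~\ref{cor:buildingblock} the index $i$ is required to be even.

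The main obstacle is precisely this parity/sign computation: ensuring $\val_\vartheta(s)\ge 0$ uniformly. The cleanest way is to design $T$ and $s$ so that for every admissible $\vartheta$ the $i$ height-$(t+1)$ columns split into pairs that receive opposite-sign permutations, so their signs cancel into $+1$, while the lone height-$(r+1)$ column either contributes a square or is handled by an independent evenness choice; then the only surviving freedom is the choice of $\tau_\vartheta$ on the ``pure-$X_1$'' blocks $u>D$ and the internal permutations within blocks, all of which contribute $\val_\vartheta(s) = +1$, so $\langle v_T,t\rangle$ is a positive rational multiple of the number of such $\vartheta$, hence nonzero. The inequalities in the hypothesis ($n\ge i$, $d\ge 2t+i+1$, $i\ge 2t+3$, $t\ge r$) are exactly what is needed to fit all these letters and $X_1$-patterns into the available blocks and to have enough distinct non-$X_1$ basis vectors; I would verify each of them is used at the corresponding step of the construction.
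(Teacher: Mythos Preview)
Your high-level strategy matches the paper's: build an explicit tableau $T$ and map $s$, invoke Claim~\ref{le:s=id} to pin down $\tau_\vartheta$ on the ``active'' letters, then use Theorem~\ref{eq:comb_contraction}. But your concrete construction is column-based and leaves the sign computation as an acknowledged obstacle, to be handled by pairing columns and by parity of~$i$. That is the gap.

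The paper's construction avoids any sign difficulty by working \emph{row-wise}. After reducing to $n=i$ and $d=2t+i+1$ via the liftings (Lemma~\ref{le:dMmn} and \eqref{eq:deglift}), one fills each row $k+1$ (for $1\le k\le t$) with exactly two labels: the label~$k$ appears $k+1$ times and the label $2t+1-k$ fills the rest of that row. Thus every label $u\in\{1,\dots,2t\}$ occurs below row~1 in a \emph{single} row $k_u+1$, with multiplicity $\beta(u)$, and the $\beta(u)$ are pairwise distinct precisely because $i\ge 2t+3$. The first row carries the labels $2t+1,\dots,d$ in the non-singleton columns. The tensor $s$ is then chosen so that in block~$B_u$ (for $u\le 2t$) exactly $\beta(u)$ positions carry $X_{k_u+1}$ and the rest carry $X_1$, while blocks $B_u$ for $u>2t$ are pure $X_1$. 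Once Claim~\ref{le:s=id} forces $\tau_\vartheta(u)=u$ for $u\le 2t$, the equality $|C^1_u|=|B^1_u|$ forces every non-singleton box with label~$u$ to land on an $X_{k_u+1}$-position; hence $s(\vartheta(\square))$ equals the \emph{row number} of $\square$ for every box in a column of height $\ge 2$. Each such column therefore contributes $\sgn(1,2,\dots,\text{height})=+1$, and $\val_\vartheta(s)=1$ for every admissible $\vartheta$, with no pairing and no evenness of $i$ required. Your belief that the evenness of $i$ in Theorem~\ref{cor:buildingblock} originates here is mistaken; it enters only later, in the splitting argument of Proposition~\ref{thm:newoccurrence}.
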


\begin{proof}
We may assume that  $n= i$ and $d = 2t+i+1$.

Let $T$ be a tableau of shape $\nu$ labeled with the integers $1,2,3,\ldots,d$, each appearing $n$ times, as 
explained in Figure~\ref{fig1} for the case $t= 5$, $r=3$ and $i=13$. 
\begin{figure}[h]
\begin{center}
\begin{tikzpicture}
\node at (0,0) {
\ytableausetup{boxsize=3ex} 
$$\ytableaushort{{11}{12}{13}{14}{15}{16}{17}{18}{19}{20}{21}{22}{23}{24}111{...},11{10}{10}{10}{10}{10}{10}{10}{10}{10}{10}{10}{10},22299999999999,33338888888888,4444477777777,5555556666666}.$$
};
\node at (-1.25,1.8) {$\overbrace{\hspace{6.6cm}}^{i}$};
\draw [decorate,decoration={brace}] (2.65,1.025) -- (2.65,-0.5) node {};
\node at (3,0.2625) {\footnotesize $r$};
\draw [decorate,decoration={brace}] (-4.65,-1.517) -- (-4.65,1.025) node {};
\node at (-5,-0.246) {\footnotesize $t$};
\end{tikzpicture}
\end{center}
\caption{Prop.~\ref{prop:long_column}: $t= 5$, $r=3$, $i=13$, $d=24$, $n=13$, $D=10$, $dn=312$, $j=230$.}\label{fig1}
\end{figure}
Formally, 
if $1 \le k \le r$, the row $k+1$ of $T$ has $i+1$ boxes: $k+1$ boxes are labeled~$k$, 
and the remaining $i-k$ boxes are labeled $2t+1-k$. 
If $r < k \leq t$, then the row $k+1$ of $T$ has $i$ boxes: $k+1$ boxes are labeled $k$ and 
the remaining $i-k-1$ boxes labeled $2t+1-k$. 
The first row of $T$ starts with the first $i+1$ boxes labeled with 
$2t+1,\ldots,d=2t+i+1$, respectively, 
and all the remaining $j$ labels are put in the singleton columns of $T$ such that each integer 
in $1,\ldots,d$ appears exactly $n$ times. 
Note that each integer $1,\ldots,d$ appears in at least one singleton column, 
since $n \geq i \ge 2t+3$.

Put $D:=2t$. By construction, 
for any $1 \le u \leq D$ in $T$, 
$u$ appears in
row 1 and in a unique row~$k_u +1$ 
for some $1 \le k_u \leq t$. 
Let $\beta(u)$ denote the number of occurrences of $u$ in row $k_u + 1$. 
Note that $2 \le \beta(1) < \beta(2) < \ldots < \beta(D)$ by construction. 
Using the notation introduced before the proof, 
we have by construction
\begin{equation}\label{eq:IM}
 |C^1_u| = n - \beta(u) \le n-2 \mbox{ for $1\le u \le D$} 
  \quad \mbox{ and } \quad 
 |C^1_u| = n-1 \mbox{ for $D < u \le d$.}
\end{equation}

We consider now the tensor 
$$
 \Phi := \bigotimes_{u=1}^{D} \big(X_{k_u +1}^{\otimes \beta(u)} \otimes X_1^{\otimes (n-\beta(u))} \big)
     \otimes \bigotimes_{u=D+1}^{d} X_1^{n} ,
$$
which, more precisely, is defined by the map,  
$s\colon [dn] \to [N]$,  
$$
 s_{(u-1)n + v} = \left\{\begin{array}{cl} 
   k_u+1 & \mbox{ if $1 \le u \le D$ and $1 \le v \le \beta(u)$} \\
          1  & \mbox{ otherwise.}
   \end{array}\right.                            
$$
Using the notation introduced before the proof, we have 
$|C^1_u| = |B^1_u| = n - \beta(u)$ for $1\le u \le D$. 
Hence, by \eqref{eq:IM}, the assumptions of Claim~\ref{le:s=id} 
are satisfied. 
So if $\vartheta\colon\la\to [dn]$ respects~$T$ 
and gives a nonzero contribution to $\langle v_T,t\rangle$, then 
$\vartheta$~bijectively maps $C^1_u$ to $B^1_u$ 
for all $1 \le u \le D$ since $\tau_\vartheta(u)=u$, 
see~\eqref{eq:key-ST}.   
Hence a box $\square$ with the label $u$, which is not in the first row of~$T$, 
is mapped to a position in the $u$th block $B_u$. 
By the definition of $s$, this implies that 
$s(\vartheta(\square))$ equals 
the row number of~$\square$.  
This also holds true for boxes in singleton columns of $T$.
We conclude from Theorem~\ref{eq:comb_contraction}
that $\vartheta$~contributes the value~$1$ to $\langle v_T,\Phi\rangle$.

To complete the proof, it is sufficient to show the 
existence of some map $\vartheta\colon\la \to [dn]$ that 
respects~$T$, which is now obvious.  
In fact, there are 
$(n!)^{n-D} (n-D)! \prod_{u=1}^{D} \beta(u)! \cdot (\inn -\beta(u))!$
many maps $\vartheta$, that all contribute the value $1$ to $\langle v_T,\Phi\rangle$,
since $\vartheta$ can permute within every label the $X_{k_u+1}$ terms and the $X_1$ terms,
and $\vartheta$ can as well permute the labels $D+1,\ldots,n$.
\end{proof}

By generalizing this construction in the proof, 
we can show the following. 

\begin{proposition}\label{prop:long_column}
Let $t$, $r$ be positive integers, $i \in [ \frac{ (r+2t)^2 }{2t}, \frac{ (r+2t)^2 }{2t} + r+t+1 ]$, 
and let  $\inn > 6t+2r$ and $\out> r+2t +i $. Let 
$\nu = (t+1) \times i + (r+1) \times 1 + (j)$,
where $j=\out\inn - (r+1) - (t+1)i $. Then $a_{\nu}(\out[\inn]) >0$. 
\end{proposition}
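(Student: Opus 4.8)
The plan is to adapt and generalize the explicit tableau construction from the proof of Proposition~\ref{prop:short_column}, relaxing the rigid constraint $i\geq 2t+3$ (which made the columns "short" relative to the number of distinct row-lengths) to the more flexible window $i\in[\,(r+2t)^2/(2t),\,(r+2t)^2/(2t)+r+t+1\,]$. As in that proof, I may assume $n=i$ and $d=r+2t+i$ by Lemma~\ref{le:dMmn} and the outer-degree lifting~\eqref{eq:deglift}; this reduces the statement to producing a single tableau $T$ of shape $\nu$ with content $d\times n$ and a rank-one tensor $t=X_{s(1)}\otimes\cdots\otimes X_{s(dn)}$ such that $\langle v_T,t\rangle\neq 0$, using Theorem~\ref{eq:comb_contraction} together with the nonvanishing criterion via $\val_\vartheta(s)\geq 0$ for all respecting $\vartheta$ and at least one strictly positive contribution.

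First I would set up the tableau. The new feature compared to Proposition~\ref{prop:short_column} is that the column of length $t+1$ is now \emph{long} — there are $t$ non-first rows of length $i$ (rows $2,\dots,t+1$), but $i$ is comparatively large, so a single "partner letter" per row (the letter $2t+1-k$ in the short-column proof) no longer has enough room; each such row must be padded with several auxiliary letters. I would fill rows $2,\dots,r+1$ with $k+1$ copies of letter~$k$ and then distribute the remaining $i-k$ boxes among a pool of auxiliary letters (indexed by the $r+t$-ish letters between the "diagonal" block and the first-row block), choosing the multiplicities carefully so that the first-row non-singleton boxes and the singleton-column boxes can still be completed to content exactly $n$ per letter. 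The arithmetic window on $i$ is precisely what makes this bookkeeping feasible: $(r+2t)^2/(2t)$ is roughly the average load needed so that $t$ rows of length $i$ can be covered by $\sim r+2t$ auxiliary letters each appearing $\leq n$ times, and the width $r+t+1$ of the interval gives the slack to absorb rounding. I would record, as in~\eqref{eq:IM}, the singleton-column occupancy numbers $|C^1_u|$ for each letter $u$, arranging that the letters playing the role of "row letters" have $|C^1_u|=n-\beta(u)$ with the $\beta(u)$ pairwise distinct (so Claim~\ref{le:s=id} applies with an appropriate $D$), while all other letters have $|C^1_u|\geq n-1$.

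Next I would choose $s$ exactly as in the short-column proof: on the block $B_u$ corresponding to a "row letter" $u$ lying in row $k_u+1$, set $s$ to the value $k_u+1$ on $\beta(u)$ positions and to $1$ elsewhere; on all other blocks set $s\equiv 1$. Then $|B^1_u|=n-\beta(u)=|C^1_u|$ for the row letters, so Claim~\ref{le:s=id} forces $\tau_\vartheta(u)=u$ for $u\leq D$ for every respecting $\vartheta$ with nonzero contribution. The key combinatorial point is then: once the row letters are pinned to their own blocks, every box labelled by a row letter $u$ (whether in row $k_u+1$ or in a singleton column) must be sent into $B_u$, hence $s\circ\vartheta$ reads off the row index on that box, giving $\sgn$-value $1$ on each column; and columns containing only auxiliary/first-row letters contract to $1$ as well because $s$ is identically $1$ there, making those sign-lists non-injective only if a column has length $\geq 2$ with repeated values — which is exactly why the columns of length $t+1$ must have all \emph{distinct} letters among their lower $t$ rows, forcing the value to be $\pm1$ and a parity/counting argument to show it is $+1$. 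I would then exhibit at least one $\vartheta$ respecting $T$ and verify it contributes $+1$, exactly as at the end of the proof of Proposition~\ref{prop:short_column}.

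The main obstacle I anticipate is the explicit construction of the auxiliary-letter filling together with the verification that the content is exactly $d\times n$ \emph{and} that no column of length $\geq 2$ ever acquires a repeated letter under $s\circ\vartheta$ for a contributing $\vartheta$ — this is where the precise numeric window on $i$ must be used, and it is easy to be off by a row or a letter. A secondary subtlety is checking that $\val_\vartheta(s)\geq 0$ for \emph{all} respecting $\vartheta$, not just the distinguished one: here one must argue that any $\vartheta$ violating the "rows read off correctly" pattern produces a zero column (a repeated $1$ or a repeated row-value) rather than a negative sign, which should follow from Claim~\ref{le:s=id} plus the fact that outside the pinned blocks $s$ is constant. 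Once these are in place, Theorem~\ref{eq:comb_contraction} gives $\langle v_T,t\rangle>0$, hence $a_\nu(d[n])>0$, and lifting back up in $n$ and $d$ completes the proof.
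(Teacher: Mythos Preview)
Your overall strategy---construct an explicit tableau $T$, choose a matching rank-one tensor, and invoke Claim~\ref{le:s=id} together with Theorem~\ref{eq:comb_contraction}---is exactly the paper's approach. But the construction you sketch has a genuine gap, and your hint at how to fill it points in the wrong direction.

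First, a shape error: when $r\ge t$ (the interesting case; for $r<t$ the paper simply invokes Proposition~\ref{prop:short_column} directly), the partition $\nu=(t+1)\times i+(r+1)\times 1+(j)$ has rows $t+2,\ldots,r+1$ of length~$1$, not length~$i$. So ``fill rows $2,\ldots,r+1$ with $k+1$ copies of letter~$k$ and pad with auxiliary letters'' cannot work as stated---most of those rows hold a single box. The body of the tableau below row~$1$ consists of the first column (length~$r$) and a $t\times i$ rectangle $T'$ in rows $2,\ldots,t+1$, columns $2,\ldots,i+1$.

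Second, the paper's filling of $T'$ is not a perturbation of the short-column construction but a different combinatorial idea. Set $e:=2(\lfloor(r-1)/(2t)\rfloor+1)$, so $e$ is even and $r\le te\le r+2t-1$. Use labels $1,\ldots,te$ in $T'$ and place label~$s$ exactly $s$ times, all in row $\min(\ell,2t-\ell+1)$ of $T'$ where $s\equiv\ell\pmod{2t}$, $1\le\ell\le 2t$. Each row of $T'$ then receives $e$ labels whose multiplicities sum to $(te+1)e/2=:i'$, which lies in the prescribed interval for~$i$; the slack $i-i'\le r+t$ is absorbed by increasing the $t$ largest labels. The first column (rows $2,\ldots,r+1$) carries labels $te,te-1,\ldots,te-r+1$. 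The point is that the multiplicities below row~$1$ are now $2,3,\ldots,te+1$, automatically pairwise distinct, so Claim~\ref{le:s=id} applies with $D=te$. No delicate ``bookkeeping with auxiliary letters'' is needed---the distinct-multiplicity property is built in.

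Finally, the paper reduces to $n=te+2$ and $d=te+i+1$, not to $n=i$; with $n=te+2$ every label still has at least one singleton occurrence, which is what the argument requires. Your proposal correctly anticipates that the explicit filling is the crux, but the filling you gesture at (extending the two-letters-per-row pattern of Proposition~\ref{prop:short_column}) does not yield the required distinctness; the $e$-letters-per-row pattern above does.
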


\begin{proof}
If $r <t$ then we can directly apply Proposition~\ref{prop:short_column}, noticing that 
$$
 2t+2 <  \frac{ (1+2t)^2 }{2t} \le \frac{ (r+2t)^2 }{2t} \le  
 i \leq \frac{ (t+2t)^2 }{2t} +r+t+1 \leq \frac{11}{2} t +r +1 \leq 6t+r\leq n .
$$

Let now $r \geq t$.
The proof is similar to the proof of Proposition~\ref{prop:short_column}, 
so we describe a more general construction which applies in the case $r<t$ as well. 
Define $e := 2 ( \lfloor (r-1)/(2t) \rfloor +1 )$, so that $r \leq te \leq  r+2t-1 $ and $e$ is even. 
Put 
$$
 i' := (te+1) \frac{e}{2} \ \leq\  (r+2t) \frac{e}{2} \ \leq\  (r+2t) ( \lfloor \frac{r-1}{2t} \rfloor +1 )
 \ \leq\  (r+2t)\frac{(r+2t-1)}{2t} \ \leq\  i.
$$ 
We will prove the statement for $i=i'$. When $i>i'$, the tableau construction below can be modified 
by increasing the number of  appearances of the $t$ largest labels by $i-i' \leq r+t$ in the subtableau~$T'$ as defined below.
By assumption, 
$n>6t + 2r \ge te +2$ and 
$d > r +2t + i \ge te + i + 1$. 
Indeed, we will prove the statement for the more general case in which we do not require $n>6t + 2r$ and $d > r +2t + i$,
but only $n \ge te +2$ and $d \ge te + i + 1$.
It suffices to prove the statement with $\inn = te+2$ and $\out =te+ i+1$.

Let $T$ be a tableau of shape $\nu$ filled with the labels $1,2,3,\ldots,d=te +i+1$, 
each number appearing $n=te+2$ times,
as in Figure~\ref{fig2} for the case $t=2,r=8,e=4, i= 18, n=10, d= 27$. 

\begin{figure}[h]
\begin{center}
\begin{tikzpicture}
\node at (0,0) {
\ytableausetup{boxsize=3ex} 
$$
\ytableaushort{{9}{10}{11}{12}{13}{14}{15}{16}{17}{18}{19}{20}{21}{22}{23}{24}{25}{26}{27}111{...},
8144445555588888888,
7223336666667777777,
6,
5,
4,
3,
2,
1} 
$$
};
\node at (-0.6,0.5) {$\underbrace{\hspace{9.2cm}}_{i}$};
\draw [decorate,decoration={brace}] (4,1.8) -- (4,0.75) node {};
\node at (4.3,1.275) {\footnotesize $t$};
\draw [decorate,decoration={brace}] (-5.8,-2.3) -- (-5.8,1.7672) node {};
\node at (-6.1,-0.2664) {\footnotesize $r$};
\end{tikzpicture}
\end{center}
\caption{Prop.~\ref{prop:long_column}: $t=2,r=8,e=4, i= 18, n=10, d= 27$.}\label{fig2}
\end{figure}

In the first row and in the first $i+1$ colums we have the labels $te+1,\ldots,te +i+1$.
In the first column and in the rows 2 to $r+1$ we have the labels $te,te-1\ldots,te-r+1$.
The remaining rectangular $t \times i$ subshape of $T$, denoted $T'$, 
consisting of the columns $2$ to $i+1$ and the rows $2$ to $t+1$,  
is filled with the remaining labels $1,\ldots,te$, so that each label appears a different number of times. 
More precisely, for each $1\le s \le te$, let the label $s$ appear in $T'$ exactly $s$ times and  
only in row 
$\min(\ell,2t-\ell+1)$, where $s \equiv \ell \pmod{2t}$, $1 \leq \ell \leq 2t$. 
(Note that the first row in $T'$, which we are referring to, is actually the second row in $T$.)
So the row $k$ of $T'$ contains the $e$ different labels $k, 2t+1-k, 2t+k, 4t+1-k, \ldots, t(e-2)+k, te +1-k$, 
each appearing that many times, adding up to the row length of 
$$
  \sum_{\alpha=1}^{e/2} \Big( (2(\alpha-1) t + k ) + (2\alpha t + 1 - k) \Big) 
  = (te+1)\frac{e}{2} = i . 
$$
The remaining labels of each kind are then put in the singleton boxes of $T$. 

As in Proposition~\ref{prop:short_column}, we show that the corresponding highest-weight vector $v_T$ 
in $\HWV_{\nu}(\Sym^\out \Sym^\inn V)$ is nonzero 
by contracting it with a particular monomial tensor~$\Phi$. 
For each label $u$, $1 \leq u \leq d$, let the associated monomial be
$$
  m_u = \bigotimes_{ \square \in T, \ \text{label}(\square)=u} X_{{\rm row}(\square)},
$$
where the product goes over all boxes of $T$ labeled $u$ and for each such box we take the variable $X$ 
whose index is the row the box is in. 
Again, let $\Phi:= \otimes_{u =1}^{\out} m_u$ be the tensor. 
We compute the contraction $\langle v_T, \Phi\rangle$ with Theorem~\ref{eq:comb_contraction}. 
The crucial observation is again that the labels of~$T$ below row 1 all appear a different number of times, 
so each $X_k$ appears in distinct nonzero degrees in the monomials $m_u$.
The rest of the proof of Proposition~\ref{prop:short_column} applies almost verbatim, 
and we see that $\langle v_T, \Phi \rangle \neq 0$:
Again, as in the case of $r \leq t$, we see that all nonzero summands in 
Proposition~\ref{eq:comb_contraction}
have the value 1 and there exists a nonzero summand that is trivial to construct.
\end{proof}

Finally we can complete the proof of the promised technical result. 

\begin{proof}[Proof of Theorem~\ref{cor:buildingblock}]
We apply Proposition~\ref{prop:long_column} with $r=b-1 \leq m^2-1$ and $t=c-1\leq m^2-1$. 
We have 
$$\frac{ (r+2t)^2}{2t} = \frac{ (b+2c-3)^2}{2(c-1)} \leq 
 \max\left( \frac{(b+1)^2}{2}, \frac{ (b+2m^2-3)^2}{2(m^2-1)}\right ) 
 \leq m^4,
$$
where we use the fact that $(b+2c-3)^2/2(c-1)$ is a convex function of $c$ and so attains its maximum at 
the end points of the interval $[2,m^2]$. 
We can then find an even integer~$i$ in the interval 
$[ \frac{ (r+2t)^2}{2t}, \frac{(r+2t)^2}{2t} + r+t+1] 
  \subseteq [ 1, m^4+2m^2]$. 
By Proposition~\ref{prop:long_column}, there exists
a highest weight vector~$f$ of weight $\nu = b\times 1 + c \times i + 1\times j'$ 
in $\Sym^{d}\Sym^{N}V$ for 
$$
 d := 3m^4 > 3m^2 + 2m^2 + m^4 \ge r +2t + i,\quad  
N:= 8m^2 > 6t+2r .
$$ 

We proceed now as for the proofs given in Section~\ref{sec:SmDeg}.
Recall the linear automorphism 
$\Delta_{N,n}\colon \Sym^N V^* \to \Sym^N V^*$  
from Lemma~\ref{le:newscaling}. 
By Proposition~\ref{cor:nonvanishing},  we have
$\langle f, \Delta_{N,n}(p)^d \rangle \ne 0$
for the power sum 
$p :=\varphi_1^{N} + \cdots + \varphi_{d}^{N}$
and generic $\varphi_i\in V^*$.
Moreover, by Theorem~\ref{eq:I-containlinforms}, 
$q:=X_1^{n-N} p$ is contained in $\Det_n$ 
for all $n \geq d N $, in particular for $n \geq 24 m^6$. 
Consider the lifting $h\in\Sym^d\Sym^n V$ of $f$;
it has the weight $\la =\nu^{\sharp dn}$ 
with $dn=3m^4 n$.
By Theorem~\ref{th:lift-ip-preserv} we have 
$\langle h,q^d \rangle = \langle f, M^*(q)^d \rangle 
 = \langle f, \Delta_{N,n}(p)^d \rangle \ne 0$.
Therefore, $\la$~occurs in $\IC[\Det_n]_{3m^4}$.
\end{proof}

\bibliographystyle{plain}
\def\cprime{$'$}

\end{document}